\documentclass[sigconf,screen,anonymous=false]{acmart}
\usepackage{xspace,balance,tabularx,multirow}
\usepackage{flushend}
\usepackage{tikz}
\usepackage{pgfplots}
\pgfplotsset{compat=1.16}
\usetikzlibrary{patterns}
\usepackage{subfig}
\usepackage[ruled, vlined, linesnumbered]{algorithm2e}
\usepackage{xcolor}
\usepackage{colortbl}
\usepackage{bbold}
\SetKwComment{Comment}{$\triangleright$\ }{}
\usepackage{enumitem}
\usepackage{tablefootnote}
\usepackage{upgreek,textgreek}
\usepackage{pifont}%
\usepackage[noabbrev]{cleveref}
\usepackage{titlecaps}
\usepackage{lipsum}
\usepackage{makecell}
\usepackage{fancyhdr}

\pgfplotsset{every tick label/.append style={font=\tiny}}

\newlength{\starsize}
\newlength{\starspread}
\tikzset{starsize/.code={\setlength{\starsize}{#1}},
         starspread/.code={\setlength{\starspread}{#1}}}
\tikzset{starsize=1mm,
         starspread=3mm}
\pgfdeclarepatternformonly[\starspread,\starsize]%
  {my fivepointed stars}%
  {\pgfpointorigin}%
  {\pgfqpoint{\starspread}{\starspread}}%
  {\pgfqpoint{\starspread}{\starspread}}%
  {%
   \pgftransformshift{\pgfqpoint{\starsize}{\starsize}}
   \pgfpathmoveto{\pgfqpointpolar{18}{\starsize}}
   \pgfpathlineto{\pgfqpointpolar{162}{\starsize}}
   \pgfpathlineto{\pgfqpointpolar{306}{\starsize}}
   \pgfpathlineto{\pgfqpointpolar{90}{\starsize}}
   \pgfpathlineto{\pgfqpointpolar{234}{\starsize}}
   \pgfpathclose%
   \pgfusepath{fill}
  }

\makeatletter
\newcommand*\bigcdot{\mathpalette\bigcdot@{.5}}
\newcommand*\bigcdot@[2]{\mathbin{\vcenter{\hbox{\scalebox{#2}{$\m@th#1\bullet$}}}}}
\makeatother

\newcommand{\stitle}[1]{\vspace*{0.5em}\noindent{\bf \underline{#1.}\/}}

\newcommand{\eat}[1]{}

\SetKwBlock{MyBlock}{\empty}{\empty}

\newcommand{\U}{\mathcal{U}\xspace}

\newcommand{\I}{\mathcal{I}\xspace}
\newcommand{\G}{\mathcal{G}\xspace}

\newcommand{\EDG}{\mathcal{E}\xspace}

\newcommand{\B}{\mathcal{B}\xspace}

\newcommand{\algo}{\texttt{ASC}\xspace}
\newcommand{\kalgo}{$K$-\texttt{ASC}\xspace}
\newcommand{\kalgoc}{\texttt{C-KASC}\xspace}
\newcommand{\NMC}{\texttt{Na{\"{\i}}ve-MC}\xspace}

\newcommand{\update}[1]{{\color{black}{#1}}}
\newcommand{\updatefig}{\color{black}
\arrayrulecolor{black}}

\newenvironment{customlegend}[1][]{%
    \begingroup
    \csname pgfplots@init@cleared@structures\endcsname
    \pgfplotsset{#1}%
}{%
    \csname pgfplots@createlegend\endcsname
    \endgroup
}%

\def\addlegendimage{\csname pgfplots@addlegendimage\endcsname}

\makeatletter
\newcommand\footnoteref[1]{\protected@xdef\@thefnmark{\ref{#1}}\@footnotemark}
\makeatother

\let\oldnl\nl%
\newcommand{\nonl}{\renewcommand{\nl}{\let\nl\oldnl}}%

\SetKwComment{Comment}{/* }{ */}

\makeatletter 
\g@addto@macro{\@algocf@init}{\SetKwInOut{Parameter}{Parameters}} 
\makeatother

\definecolor{myred}{HTML}{fd7f6f}
\definecolor{myred_new}{HTML}{D8D8D8}
\definecolor{myred_new2}{HTML}{D7191C}
\definecolor{myblue}{HTML}{7eb0d5}
\definecolor{mygreen}{HTML}{b2e061}
\definecolor{mypurple}{HTML}{bd7ebe}
\definecolor{myorange}{HTML}{ffb55a}
\definecolor{myyellow}{HTML}{ffee65}
\definecolor{mypurple2}{HTML}{beb9db}
\definecolor{mypink}{HTML}{fdcce5}
\definecolor{mycyan2}{HTML}{a29bfe}
\definecolor{mycyan}{HTML}{B33771}

\definecolor{myblue2}{HTML}{115f9a}
\definecolor{myred2}{HTML}{c23728}

\definecolor{myblue}{HTML}{8e44ad}
\definecolor{my_blue}{HTML}{8e44ad}
\definecolor{my_cyan}{HTML}{d95319}
\definecolor{my_teal}{HTML}{77ac30}
\definecolor{my_purple}{HTML}{4dbeee} %
\definecolor{my_violet}{HTML}{4cd137}
\definecolor{my_green}{HTML}{008080}

\definecolor{my_red}{HTML}{CC79A7}
\definecolor{my_yellow}{HTML}{117733}
\definecolor{my_pp}{HTML}{785ef0}

\definecolor{NSCcol1}{HTML}{3867d6}
\definecolor{NSCcol2}{HTML}{aec7e8}
\definecolor{NSCcol3}{HTML}{ff7f0e}
\definecolor{NSCcol4}{HTML}{ffbb78}
\definecolor{NSCcol5}{HTML}{98df8a}

\definecolor{lighpurple}{HTML}{CBC3E3}

\definecolor{rubinred}{HTML}{D10056}

\definecolor{royalpurple}{HTML}{7851A9}

\AtBeginDocument{%
  }

\setcopyright{acmlicensed}
\copyrightyear{2018}
\acmYear{2027}
\acmDOI{XXXXXXX.XXXXXXX}
\acmConference[SIGMOD '27]
  {ACM SIGMOD/PODS International Conference on Management of Data}
  {June 13--19, 2027}
  {Huntington Beach, CA, USA}
\acmISBN{978-1-4503-XXXX-X/18/06}

\acmSubmissionID{642}

\title[Swing Algorithm]{Efficient Swing Computation for Retrieval in Large-Scale Recommender Systems}
\subtitle{Technical Report}

\begin{document}
\title{Efficient Swing Computation for Retrieval in Large-Scale Recommender Systems}

\author{Runhao Jiang}
\affiliation{%
  \institution{Hong Kong Baptist University}
  \country{Hong Kong SAR, China}
}
\email{csrhjiang@comp.hkbu.edu.hk}
\orcid{0009-0000-4841-9175}

\author{Renchi Yang}
\affiliation{%
  \institution{Hong Kong Baptist University}
  \country{Hong Kong SAR, China}
}
\email{renchi@hkbu.edu.hk}
\orcid{0000-0002-7284-3096}

\begin{abstract}
Given a user-item graph $\G$, a query item $v_q$ and \update{a target item $v_t$}, the Swing score $\textsf{sw}(v_q,v_t)$ of the item pair $(v_q,v_t)$ leverages the user-item-user interaction structure to evaluate their similarity. This measure is found to be highly effective in item-to-item (i2i) retrieval task and finds extensive applications in industrial-scale recommender systems.
However, existing solutions towards computing Swing scores are either prohibitively expensive due to their quadratic time complexity w.r.t. the item degree, or rely on truncation heuristics that yield unsatisfactory quality, rendering them impractical particularly on graphs with billions of interactions.

In this paper, we present \algo{} and \kalgo{}, two novel and efficient algorithms for approximate and top-$K$ Swing queries, to address the aforementioned limitations. Specifically, these algorithms provide rigorous theoretical guarantees in probabilistic relative and additive errors of Swing values. The basic idea of \algo{} is to combine two randomized algorithms, \texttt{GNS} and \texttt{USS}, in a simple yet non-trivial way to adaptively process high- and low-degree query items with minimal runtime cost. In particular, \kalgo{} offers practical efficiency and effectiveness for top-$K$ queries through a filter-refinement paradigm with carefully-designed heuristics. 
Extensive experiments over eight real datasets demonstrate that \algo{} and \kalgo{} can achieve orders of magnitude speed-up over competitors in terms of computational time while offering the same approximate and top-$K$ query result quality, and in particular, \kalgo{} is highly efficient on massive graphs including the billion-edge {\em Yambda} and {\em MAG} datasets.
\end{abstract}

\begin{CCSXML}
<ccs2012>
   <concept>
       <concept_id>10002950.10003624.10003633.10010918</concept_id>
       <concept_desc>Mathematics of computing~Approximation algorithms</concept_desc>
       <concept_significance>500</concept_significance>
       </concept>
   <concept>
       <concept_id>10002950.10003624.10003633.10010917</concept_id>
       <concept_desc>Mathematics of computing~Graph algorithms</concept_desc>
       <concept_significance>500</concept_significance>
       </concept>
   <concept>
       <concept_id>10002951.10003317.10003338.10003346</concept_id>
       <concept_desc>Information systems~Top-k retrieval in databases</concept_desc>
       <concept_significance>300</concept_significance>
       </concept>
   <concept>
       <concept_id>10002951.10003317.10003347.10003350</concept_id>
       <concept_desc>Information systems~Recommender systems</concept_desc>
       <concept_significance>300</concept_significance>
       </concept>
 </ccs2012>
\end{CCSXML}

\ccsdesc[500]{Mathematics of computing~Approximation algorithms}
\ccsdesc[500]{Mathematics of computing~Graph algorithms}
\ccsdesc[300]{Information systems~Top-k retrieval in databases}
\ccsdesc[300]{Information systems~Recommender systems}

\keywords{Swing, approximation, top-K queries}

\maketitle

\section{Introduction}

In practice, industrial-scale recommender systems typically adopt a multi-stage pipeline, fundamentally separated into a retrieval (or candidate generation) phase and a subsequent ranking phase~\cite{huang2025comprehensive}.
In the retrieval stage, {\em item-to-item} (i2i)~\cite{sarwar2001item} is arguably one of the most popular and widely deployed retrieval strategies, which seeks to produce a ranked list of items from a massive catalog given a query item.
The fundamental premise of i2i retrieval is that users who have interacted (e.g., clicked, liked, or purchased) with a specific item are likely to engage with other items sharing similar collaborative or semantic characteristics~\cite{linden2003amazon}.
\update{When user-item interactions are represented as a bipartite graph, this task naturally amounts to retrieving items according to a graph-derived proximity measure.}
In the past two decades, a variety of i2i retrieval approaches have been developed, including collaborative filtering~\cite{sarwar2001item,linden2003amazon,yang2020large}, embedding-based paradigms~\cite{barkan2016item2vec,wang2018billion,zhang2023contrastive}, and others~\cite{huang2025comprehensive,feng2025llm}.

Amid them, {\em Swing}~\cite{yang2020large} has gained considerable traction and seen extensive practical application across industrial platforms~\cite{zzh2022industrial,wan2023interval,yang2020large,luo2025trawl,ismailov2022two}.
\update{For instance, peer-reviewed studies have documented the use of Swing in Alibaba's cold-start~\cite{cao2022gift,bao2025grain,yang2022task} and multi-scenario~\cite{huan2023samd} recommendations, while Kuaishou integrates Swing into its live streaming gifting prediction~\cite{deng2024mmbee} and query suggestion~\cite{guo2026onesug}. In addition, Shopee constructs high-quality item graphs based on Swing for advertisement recommendation~\cite{nguyen2023lightsage}. Beyond these peer-reviewed studies, Swing is also used for query rewriting and personalized search~\cite{li2022query,wu2024hi}, short-video search~\cite{bao2024beyond}, advertising~\cite{xue2026generative}, and other services~\cite{xia2026qarm,feng2025llm,wang2026pi2i,xu2023multi}, as well as in building item graphs for GNN-based recommender systems~\cite{xue2025e2e}.}
Unlike traditional {\em item-based collaborative filtering} (itemCF)~\cite{linden2003amazon} that calculates local similarities between items based on merely user co-actions, 
{Swing} leverages the user-item-user interaction structure, referred to as ``swing'' structure, of the user-item bipartite graph $\G$, and evaluates all user pairs interacted with the given query item $v_q$ and target item $v_t\in \G$.
Formally, the Swing similarity between items $v_q$ and $v_t$ is defined as:
\begin{small}
\begin{equation*}
\textsf{sw}(v_q,v_t)=\sum_{u_i, u_j\in \U(v_q)\cap \U(v_t)}{\frac{1}{\alpha+|\I(u_i)\cap \I(u_j)|}},
\end{equation*}
\end{small}
where $\U(v_q)$ and $\I(u_i)$ denote the sets of neighbors of item $v_q$ and user $u_i$, respectively, and $\alpha$ acts as a smoothing coefficient.
The similarity contribution of the user pair $(u_i, u_j)$ to the item pair $(v_q, v_t)$ is inversely proportional to the total number of items that users $u_i$ and $u_j$ have co-interacted with.
Through the consideration of all swing structures and the foregoing weighting scheme, the retrieval algorithm based on Swing effectively filters out noisy casual user-item interactions and neutralizes active user bias in traditional itemCF, and thus, is able to identify deeply correlated item pairs as better results~\cite{yang2020large,huang2025comprehensive}.

\update{Despite its superior effectiveness, efficiently processing Swing queries over large graphs is challenging due to the nested formulation.}
As pinpointed in \cite{yang2020large}, given a query item $v_q$, the exact computation of the Swing scores $\textsf{sw}(v_q,v_t)$ across all target items $v_t\in \G$ needs an average time complexity of $O(d(v_q)^2\cdot \overline{d})$, where $d(v_q)$ is the degree of $v_q$ in $\G$ and $\overline{d}$ represents the average user degree.
In practical industrial e-commerce recommender systems, item catalogs often comprise billions of items, many of which can easily accumulate thousands of user interactions, while popular items have even hundreds of thousands of interacting users (i.e., a large $d(v_q)$) due to the power-law distribution of user behavior~\cite{yang2022hicf,kersbergen2021learnings}.
Evaluating a single query item thus requires iterating over millions or billions of user pairs due to the $d(v_q)^2$ factor in its time complexity, which can take up to hours and is prohibitively expensive.
As a workaround, tech giants like Alibaba rely heavily on extensive MapReduce parallelization frameworks and powerful distributed computing centers to process {Swing} computations~\cite{yang2020large}. 
An alternative solution is to artificially cap the maximum number of users considered for a query item (typically at 600)~\cite{aliyunswing}. However, this truncation heuristic might overlook critical user pairs, inevitably degrading the result quality, as evidenced by our empirical studies in \S\ref{sec:performance-eval}.

\stitle{Present Work}
To bridge this technical gap, this paper formulates the Swing computation task as the $(\epsilon,\lambda)$-approximate Swing query and top-$K$ Swing query, and presents two new algorithms \algo{} (\underline{A}daptive \underline{S}wing \underline{C}omputation) and \kalgo{} for answering these queries.
Particularly, the $(\epsilon,\lambda)$-approximate Swing query requires the approximate Swing $\widetilde{\textsf{\textnormal{sw}}}(v_q,v_t)$ of any target item $v_t$ to satisfy: (i) the relative error guarantee of $|\widetilde{\textsf{\textnormal{sw}}}(v_q,v_t)-{\textsf{\textnormal{sw}}}(v_q,v_t)|\le \epsilon\cdot {\textsf{\textnormal{sw}}}(v_q,v_t)$ when ${\textsf{\textnormal{sw}}}(v_q,v_t)\ge \lambda$, and (ii) the additive guarantee of $|\widetilde{\textsf{\textnormal{sw}}}(v_q,v_t)-{\textsf{\textnormal{sw}}}(v_q,v_t)|\le \epsilon\cdot \lambda$ if ${\textsf{\textnormal{sw}}}(v_q,v_t)$ is below the threshold $\lambda$. 
Our preliminary empirical studies of the na{\"{\i}}ve brute-force method and Monte-Carlo approach for Swing computation reveal that both methods suffer from expensive set intersection operations, and the demand to enumerate or sample substantial user pairs, rendering them rather inefficient, particularly for high-degree items.

Inspired by these observations, we first develop a series of non-trivial optimizations ameliorating \texttt{QFilter}~\cite{han2018speeding} for faster computations of the intersection set and its cardinality over user-item graphs.
Based thereon, we propose two randomized algorithms {\em Grouped Na{\"{\i}}ve Sampling} (\texttt{GNS}) and {\em User Subset Sampling} (\texttt{USS}), which judiciously reduce set intersections in na{\"{\i}}ve Monte-Carlo and sidestep costly set intersections of user pairs with large degrees in the brute-force method, while offering rigorous theoretical guarantees in terms of $(\epsilon,\lambda)$-approximation.
On top of that, we integrate \texttt{GNS} and \texttt{USS} into a unified framework \algo{} for answering $(\epsilon,\lambda)$-approximate Swing queries, enabling the adaptive processing of query items of varied neighborhoods with minimal runtime cost.
Furthermore, to overcome the inefficiency of \algo{} in dealing with top-$K$ Swing queries for high-degree query items over large graphs, we design a filter-refinement framework \kalgo{}. Under the hood, it first derives rough Swing estimations of possible target items using a moiety of the sample budget, whilst efficiently maintaining their lower and upper bounds via our carefully-crafted incremental algorithm, which facilitates the subsequent identification of top-$K$ candidates and borderline items.
Afterwards, we adaptively refine the approximate Swing scores of the pinpointed borderline items using the remaining sample budget for an accurate top-$K$ ranking. 
Our extensive experiments over 8 real datasets exhibit that \algo{} and \kalgo{} are orders of magnitude faster than baseline exact and Monte-Carlo approaches when attaining the same $(\epsilon,\lambda)$-approximation guarantees or comparable top-$K$ ranking precisions. In particular, on the largest graph {\em MAG}, \kalgo{} is able to achieve an average precision of more than 99.9\% for top-$100$ queries within 1.5 milliseconds, whereas the exact method requires 8.5 seconds.

To summarize, our contributions are as follows:
\begin{itemize}[nosep,leftmargin=*]
\item \update{First, we introduce the $(\epsilon,\lambda)$-approximate Swing formulation combining probabilistic relative and additive error guarantees, and pinpoint the limitations of baseline methods through in-depth empirical studies and analyses.}
\item \update{Second, building on standard randomized-estimation principles, we develop two complementary estimators, with \texttt{GNS} removing redundant interactions through grouping and \texttt{USS} avoiding materialization through cardinality-only estimation. We integrate them into \algo{} using an analytical cost model for query-adaptive estimator selection with approximation and runtime guarantees.}
\item \update{Third, we further upgrade \algo{} to accelerate top-$K$ Swing queries through \kalgo{}, building on filter-refinement concentrate approximation effort on borderline candidates, with empirical bounds and bitmap pruning for efficient refinement.}
\item Lastly, we conduct extensive experiments to demonstrate the superiority of \algo{} and \kalgo{} over the competing methods in terms of both query effectiveness and efficiency.
\end{itemize}

\section{Preliminaries}

\subsection{Notations and Problem Statement}\label{sec:prob-stat}
Let $\G=(\U\cup \I,\EDG)$ be a user-item interaction graph, where $\U$ (resp. $\I$) is a set of $|\U|$ (resp. $|\I|$) users (resp. items), and $\EDG$ is a set of $|\EDG|$ edges between $\U$ and $\I$. 
For each edge $e_{q,i}\in \EDG$, we say $v_q$ and $u_i$ are neighbors to each other, and we use $\U(v_q)$ (resp. $\I(u_i)$) to denote the set of neighbors of item $v_q$ (resp. user $u_i$), where the degree is $d(v_q)=|\U(v_q)|$ (resp. $d(u_i)=|\I(u_i)|$). 
For each item $v_q$, we denote by $D(v_q)$ the sum of its neighbors' degrees, i.e., $\sum_{u\in \U(v_q)}{d(u)}$.
Table~\ref{tbl:symbol} lists the notations frequently used throughout this paper.
\begin{figure}[!ht]
    \centering
    \includegraphics[width=0.8\columnwidth]{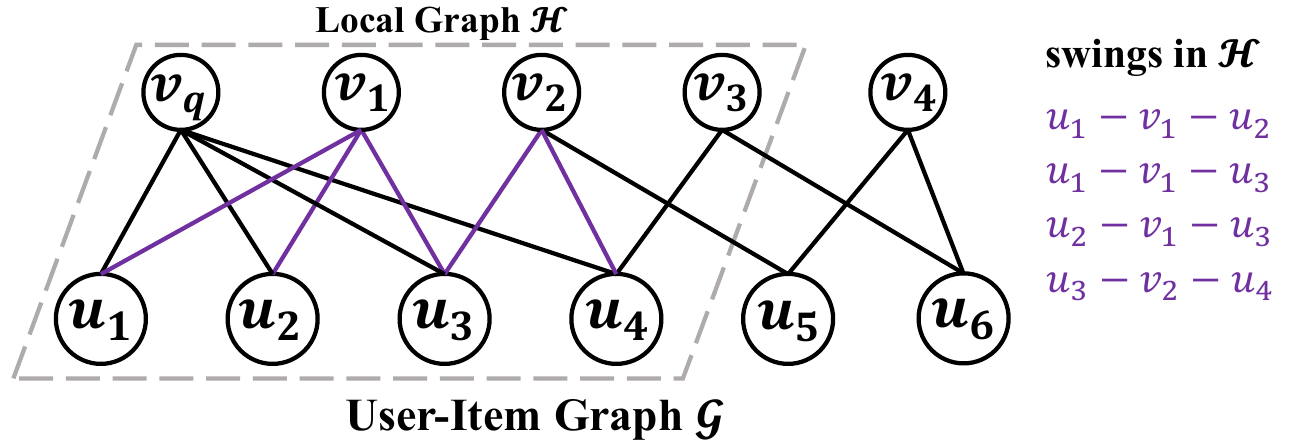}
    \vspace{-2ex}
    \caption{An Example for Swing}
    \label{fig:swing}
    \vspace{-3ex}
\end{figure}

\stitle{Definition of Swing~\cite{yang2020large}} 
As exemplified in Figure~\ref{fig:swing}, we consider the local graph $\mathcal{H}$ containing all users who interacted with query item $v_q$ and all items interacted with by these users, i.e., $u_1$-$u_4$ and $v_1$-$v_3$.
Each ``user-item-user'' structure in $\mathcal{H}$ is referred to as a {\em swing}, i.e., $\langle u_1,v_1,u_2 \rangle, \langle u_1,v_1,u_3 \rangle, \langle u_2,v_1,u_3 \rangle, \langle u_3,v_2,u_4 \rangle$.
Mathematically, considering all swing structures involving query item $v_q$ and target item $v_t$, the standard {\em Swing} score of item pair $(v_q,v_t)$ is defined as
\begin{small}
\begin{equation}\label{eq:swing}
\textsf{sw}(v_q,v_t)=\sum_{u_i, u_j\in \U(v_q)\cap \U(v_t)}{\frac{1}{\alpha+|\I(u_i)\cap \I(u_j)|}},
\end{equation}
\end{small}
where $\alpha$ stands for the smoothing coefficient, typically $\alpha=1$. 
Particularly, a higher Swing $\textsf{sw}(v_q,v_t)$ indicates a stronger correlation between items $v_q$ and $v_t$. 
Lemma~\ref{lem:swing-range}\footnote{All missing proofs appear in 
Appendix~\ref{sec:proof}
.} gives the range of $\textsf{sw}(v_q,v_t)$.
\begin{lemma}\label{lem:swing-range}
$\forall{v_t\in \I\setminus\{v_q\}},\ 0\le \textsf{\textnormal{sw}}(v_q,v_t)\le \frac{d(v_q)^2-d(v_q )}{\alpha+2}$.
\end{lemma}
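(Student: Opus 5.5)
The plan is to bound each summand and then count the summands. First, I will fix the convention implicit in Eq.~\eqref{eq:swing} and in the statement's upper bound: the sum ranges over \emph{ordered} pairs $(u_i,u_j)$ of \emph{distinct} users in $\U(v_q)\cap\U(v_t)$. This is consistent with the factor $d(v_q)^2-d(v_q)=d(v_q)(d(v_q)-1)$, which counts ordered distinct pairs drawn from $\U(v_q)$.

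The lower bound is immediate. Since $\alpha\ge 0$ (typically $\alpha=1$) and $|\I(u_i)\cap\I(u_j)|\ge 0$, every term $\frac{1}{\alpha+|\I(u_i)\cap\I(u_j)|}$ is positive. The sum is therefore nonnegative, and it equals $0$ exactly when the index set is empty or a singleton.

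For the upper bound, the key observation concerns any pair $u_i,u_j\in\U(v_q)\cap\U(v_t)$. Both users are adjacent to $v_q$ and to $v_t$, so $\{v_q,v_t\}\subseteq\I(u_i)\cap\I(u_j)$. Since $v_t\in\I\setminus\{v_q\}$, these two items are distinct. This gives $|\I(u_i)\cap\I(u_j)|\ge 2$, so each summand is at most $\frac{1}{\alpha+2}$.

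Next I count the pairs. Because $\U(v_q)\cap\U(v_t)\subseteq\U(v_q)$, the number of ordered pairs of distinct users is at most $d(v_q)(d(v_q)-1)=d(v_q)^2-d(v_q)$. Multiplying the per-term bound by this count yields $\textsf{sw}(v_q,v_t)\le\frac{d(v_q)^2-d(v_q)}{\alpha+2}$.

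The only delicate point is the pair-counting convention. If the authors intended unordered pairs, the bound would still hold, since it would only be loose by a factor of two. If diagonal pairs $u_i=u_j$ were allowed, however, the count would change, and I would need to argue that such pairs are excluded by the definition of a swing (a swing needs two distinct users). I expect to state this exclusion explicitly and otherwise treat the argument as routine.
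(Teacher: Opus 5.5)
Your proof is correct and takes the same approach as the paper. Both arguments use $\{v_q,v_t\}\subseteq\I(u_i)\cap\I(u_j)$ to bound every summand by $\frac{1}{\alpha+2}$, and then count at most $d(v_q)(d(v_q)-1)$ ordered distinct pairs drawn from $\U(v_q)$. Your ordering is slightly cleaner than the paper's: you bound the terms on $\U(v_q)\cap\U(v_t)$ before enlarging the pair count, whereas the paper's intermediate sum over all pairs of $\U(v_q)$ includes pairs sharing only $v_q$, and the $\frac{1}{\alpha+2}$ bound does not directly apply to those terms.
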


\begin{table}[!t]
\centering
\renewcommand{\arraystretch}{1.1}
\begin{footnotesize}
\caption{Frequently used notations.}\vspace{-3mm} \label{tbl:symbol}
\resizebox{\columnwidth}{!}{%
\begin{tabular}{|p{0.51in}|p{2.5in}|}
\hline
{\bf Notation}  &  {\bf Description}\\
\hline
$\U, \I, \EDG$   & The user set, item set and edge set.\\ \hline
$\U(v_q), \I(u_i)$   & The sets of neighbors of item $v_q$ and user $u_i$, respectively.\\ \hline
$d(v_q), D(v_q)$   & The degree of $v_q$ and sum of degrees of its neighbors.\\ \hline
$\textsf{sw}(v_q,v_t), \alpha$ & The exact Swing score of $v_t$ w.r.t. $v_q$ and smoothing coefficient. \\ \hline
$\widetilde{\textsf{\textnormal{sw}}}(v_q,v_t)$ & The approximate Swing score of $v_t$ w.r.t. $v_q$. \\ \hline
$\epsilon, \lambda, \delta$ & The relative error threshold, Swing threshold, failure probability.\\ \hline
$n, n_f, n_r$ & The numbers of random samples. \\ \hline
$\rho$ & The empirical runtime ratio used in Eq.~\eqref{eq:select}. \\ \hline
\end{tabular}%
}
\end{footnotesize}
\vspace{-2ex}
\end{table}

\stitle{Problem Statement} 
\update{In this paper, we formalize i2i retrieval as a top-$K$ graph-proximity query, termed the {\em top-$K$ Swing query}.}
To be precise, given a user-item graph $\G$ and a query item $v_q$, the top-$K$ Swing query asks for the $K$ items $\mathcal{Y}$ (typically $|\mathcal{Y}|=K=20$~\cite{librecommed}) with the highest Swing scores w.r.t. $v_q$ from the item set $\I\setminus \{v_q\}$.

\begin{definition}[$(\epsilon,\lambda)$-Approximate Swing]\label{def:approx-swing}
Let $\widetilde{\textsf{\textnormal{sw}}}(v_q,v_t)$ be an approximation of ${\textsf{\textnormal{sw}}}(v_q,v_t)\ \forall{v_t\in \I}$. We say $\{\widetilde{\textsf{\textnormal{sw}}}(v_q,v_t)\}_{v_t\in \I}$ are $(\epsilon,\lambda)$-approximate Swing values if they satisfy the following conditions:
for any item $v_t\in \I$ with ${\textsf{\textnormal{sw}}}(v_q,v_t)\ge \lambda$,
\begin{equation}\label{eq:approx-bound}
|\widetilde{\textsf{\textnormal{sw}}}(v_q,v_t)-{\textsf{\textnormal{sw}}}(v_q,v_t)|\le \epsilon\cdot {\textsf{\textnormal{sw}}}(v_q,v_t)
\end{equation}
holds, and for any item $v_t\in \I$ s.t. ${\textsf{\textnormal{sw}}}(v_q,v_t)<\lambda$, we have
\begin{equation}\label{eq:approx-bound-2}
|\widetilde{\textsf{\textnormal{sw}}}(v_q,v_t)-{\textsf{\textnormal{sw}}}(v_q,v_t)|\le \epsilon\cdot \lambda.
\end{equation}
\end{definition}

A fundamental problem underlying the top-$K$ Swing query is thus the accurate computation or estimation of Swing scores for the given query item $v_q$. Instead of computing the exact Swing values, which could be highly costly, this paper studies the computation of {\em $(\epsilon,\lambda)$-approximate Swing} scores formulated in Definition~\ref{def:approx-swing}. 

The parameters $\epsilon$ and $\lambda$ stand for the relative error threshold and Swing threshold, respectively, whose typical values are $0.05$ and $\frac{d(v_q)(d(v_q)-1)}{100(\alpha+2)}$.
Particularly, this approximate definition offers a stronger relative error
guarantee when ${\textsf{\textnormal{sw}}}(v_q,v_t)\ge \lambda$, and a weaker additive error guarantee otherwise.
The reason is that 
ensuring relative error assurance for items with ${\textsf{\textnormal{sw}}}(v_q,v_t)<\lambda$ is expensive and unnecessary as they are less important.

\subsection{Set Intersection}\label{sec:set-intersection}
As a fundamental operation in Swing computation, $\I(u_i)\cap \I(u_j)$ aims to find the intersection of two sets $\I(u_i)$ and $\I(u_j)$. One simple and classic way for intersecting ordered sets with comparable sizes (i.e., $|\I(u_i)|\approx |\I(u_j)|$) is to employ a linear merge by scanning both sets in parallel, which requires $O(|\I(u_i)| + |\I(u_j)|)$ operations~\cite{ding2011fast} and is wasteful when set sizes differ significantly. For different set sizes ($|\I(u_i)|<|\I(u_j)|$), the complexity can be reduced to $O(|\I(u_i)|+\log_2{\binom{|\I(u_i)|+|\I(u_j)|}{|\I(u_i)|}})$ by leveraging the asymmetry~\cite{hwang1972simple}.
Amid significant subsequent works for set intersections~\cite{baeza2005experimental,sanders2007intersection,ding2011fast,zhang2024bigset,tsirogiannis2009improving,bille2007fast}, recent works~\cite{inoue2014faster,lemire2016simd,han2018speeding} capitalize on the two-pointer strategy and {\em Single Instruction Multiple Data} (SIMD) instructions to perform multiple comparisons in parallel and have achieved remarkable performance improvements.

\begin{figure}[!t]
    \centering
    \includegraphics[width=0.7\columnwidth]{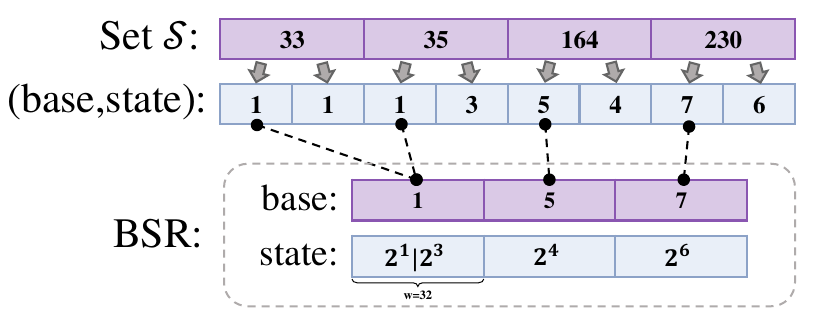}
    \vspace{-3ex}
    \caption{An Example for BSR ($w=32$)}
    \label{fig:BSR}
\end{figure}

In particular, \texttt{QFilter}~\cite{han2018speeding} 
encodes sets using the {\em Base and State Representation} (BSR), a compact layout, and filters out unnecessary comparisons in a single byte-level check with high probability.
As shown in Figure~\ref{fig:BSR}, consider an original set $\mathcal{S}$ consisting of index values. Each index is first decomposed into a base value and a state value, where the base is obtained by integer division by $w$, and the state is the remainder. We let $w=32$, e.g., $33 \rightarrow(\lfloor 33/32 \rfloor, 33 \bmod 32)=(1,1)$, $35\rightarrow(\lfloor 35/32 \rfloor,35\bmod32)=(1,3)$. Entries sharing the same base are then merged by aggregating state bits via bitwise OR, e.g., $(1,1)$ and $(1,3)$ are combined into $(1,2^1|2^3)$, yielding a $w\times$ compression ratio. The base and state arrays are stored separately, enabling faster SIMD loading of consecutive values.

We develop a series of optimizations that tailor \texttt{QFilter} to accelerate neighbor set intersections over user-item graphs, referred to as \texttt{QFilter++} \footnote{The algorithmic details of \texttt{QFilter++} are deferred to 
Appendix~\ref{sec:FSI}
.}.
Notably, using \texttt{QFilter++}, the intersection cardinality computation can be done more efficiently (up to 3$\times$ faster) through a bitwise AND and a hardware-supported popcount without explicitly materializing the intersection set.

\subsection{Baseline Methods for Swing Computation}\label{sec:baseline}

Below, we discuss baseline algorithms for computing Swing scores.

\stitle{Exact Method} In \cite{yang2020large}, the authors describe a brute-force method (Algorithm~\ref{alg:basic}, henceforth referred to as \texttt{Exact}) for computing the exact Swing scores of all items in $\I$ with a query item $v_q$. Specifically, \texttt{Exact} detects swing structures by enumerating all user pairs in $\U(v_q)$. That is, for each user pair $(u_i,u_j)\in \U(v_q)\times \U(v_q)$ with $u_i\neq u_j$, \texttt{Exact} computes the intersection of their neighbor sets as $\I_{i,j}\gets \I(u_i)\cap \I(u_j)$ (Line 3). For each target item $v_t$ in the set intersection $\I_{i,j}$, its Swing is increased by $\frac{1}{\alpha+|\I_{i,j}|}$ at Line 5, i.e., the weight for swing structure $\langle u_i,v_t,u_j\rangle$. 

\begin{lemma}\label{lem:exact-time}
Given query item $v_q$, the worst-case time complexity of Algorithm~\ref{alg:basic} is $O\left(\frac{d(v_q)-1}{2}\cdot D(v_q)\right)$.
\end{lemma}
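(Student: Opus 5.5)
The plan is to count the work of Algorithm~\ref{alg:basic} pair by pair and then regroup the total by user. By symmetry of the Swing summand in $u_i,u_j$, it suffices to process each unordered pair $\{u_i,u_j\}\subseteq \U(v_q)$ with $u_i\neq u_j$ once. The algorithm performs two operations per pair: the set intersection $\I_{i,j}\gets \I(u_i)\cap \I(u_j)$ at Line 3, and the score updates at Line 5, one per element of $\I_{i,j}$.

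\textbf{Cost of one pair.} With sorted adjacency lists and a linear merge, the intersection costs $O(d(u_i)+d(u_j))$ in the worst case. The updates cost $|\I_{i,j}|\le \min(d(u_i),d(u_j))$, so they add nothing beyond a constant factor. The cost for one pair is therefore $O(d(u_i)+d(u_j))$. The asymmetric or SIMD intersections of \S\ref{sec:set-intersection} only improve this, so using the merge bound is safe for a worst-case statement.

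\textbf{Summing over pairs.} The total cost is
\[
\sum_{\{u_i,u_j\}\subseteq \U(v_q),\, u_i\neq u_j} \bigl(d(u_i)+d(u_j)\bigr).
\]
Regrouping by user: each $u\in \U(v_q)$ lies in exactly $d(v_q)-1$ unordered pairs, so the sum equals
\[
(d(v_q)-1)\sum_{u\in \U(v_q)} d(u) = (d(v_q)-1)\, D(v_q).
\]

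\textbf{Main obstacle: the factor $\tfrac12$.} The step needing care is recovering the $\tfrac{1}{2}$ in the stated bound, which depends on the per-pair cost model. If an intersection is charged $\max(d(u_i),d(u_j))$, or $\tfrac12(d(u_i)+d(u_j))$ on average, rather than the full sum, the total becomes $\tfrac{d(v_q)-1}{2}D(v_q)$. Under the merge model there is a constant-factor gap, which big-$O$ absorbs: $(d(v_q)-1)D(v_q)=O\!\left(\tfrac{d(v_q)-1}{2}D(v_q)\right)$. The write-up should state the per-pair cost model explicitly and note that constants are absorbed. The $O(d(v_q)^2)$ loop overhead is dominated, since every user in $\U(v_q)$ has degree at least $1$ and hence $D(v_q)\ge d(v_q)$.
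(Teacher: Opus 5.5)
Your proof is correct and is essentially the paper's argument: bound the intersection cost of each pair, then regroup the sum by user using that each $u\in\mathcal{U}(v_q)$ lies in $d(v_q)-1$ pairs. The only difference is the per-pair charge: the paper charges $\min(d(u_i),d(u_j))$, which presumes constant-time membership probes, and then applies $\min(d(u_i),d(u_j))\le\frac{1}{2}(d(u_i)+d(u_j))$ to get the $\frac{1}{2}$ exactly, whereas your merge charge $d(u_i)+d(u_j)$ loses a factor of $2$ that the $O(\cdot)$ absorbs. One slip in your last paragraph: charging $\max(d(u_i),d(u_j))$ would not recover the $\frac{1}{2}$, because $\max(d(u_i),d(u_j))\ge\frac{1}{2}(d(u_i)+d(u_j))$ makes $\frac{d(v_q)-1}{2}D(v_q)$ a lower rather than an upper bound on that total; it is the $\min$ charge that works.
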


According to Lemma~\ref{lem:exact-time}, the worst-case time complexity of \texttt{Exact} is bounded by $O\left(\frac{d(v_q)-1}{2}\cdot D(v_q)\right)$, which can be rewritten as $O\left(\frac{d(v_q)(d(v_q)-1)}{2}\cdot \frac{D(v_q)}{d(v_q)}\right)$, where $\frac{d(v_q)(d(v_q)-1)}{2}$ is the number of all distinct user-pairs, and $\frac{D(v_q)}{d(v_q)}$ is the average degree of $v_q$'s neighbors, which can be perceived as the average cost of the set intersection.
Due to the $\frac{d(v_q)(d(v_q)-1)}{2}$ factor, the time complexity of \texttt{Exact} is quadratic to $d(v_q)$, which is rather costly for query items with numerous neighbors. Practical solutions often resort to a {\em truncation} strategy~\cite{aliyunswing} that simply limits the number of $v_q$'s neighbors (e.g., at most $600$) for Swing computation. However, this workaround leads to an inaccurate query result.

\begin{algorithm}[!t]
\small
\caption{Exact Method~\cite{yang2020large}}\label{alg:basic}
\KwIn{$\G=(\U\cup\I,\EDG)$ and query item $v_q$}
\KwOut{$\textsf{sw}(v_q,v_t)\ \forall{v_t\in \I}$}
\For{$u_i\in \U(v_q)$}{
\For{$u_j\in \U(v_q)\setminus \{u_i\}$}{
$\I_{i,j}\gets \I(u_i)\cap \I(u_j)$\;
\For{$v_t\in \I_{i,j}$}{
Increase $\textsf{sw}(v_q,v_t)$ by $\frac{1}{\alpha+|\I_{i,j}|}$\;
}
}
}
\end{algorithm}

\begin{algorithm}[!t]
\small
\caption{\NMC Method}\label{alg:sample}
\KwIn{$\G=(\U\cup\I,\EDG)$, query item $v_q$, and sample size $n$}
\KwOut{$\widetilde{\textsf{sw}}(v_q,v_t)\ \forall{v_t\in \I}$}
\For{$r\gets 1$ to $n$}{
Sample two users $u_i, u_j$ ($u_i\neq u_j$) from $\U(v_q)$\;
$\I_{i,j}\gets \I(u_i) \cap \I(u_j)$\;
\For{$v_t\in \I_{i,j}$}{
Increase $\widetilde{\textsf{sw}}(v_q,v_t)$ by $\frac{d(v_q)\cdot (d(v_q)-1)}{n(\alpha+|\I_{i,j}|)}$\;
}
}
\end{algorithm}

\stitle{Na{\"{\i}}ve Monte-Carlo Method}
Since the Swing score is the summation of weights $\frac{1}{\alpha+|\I_{i,j}|}$ of all possible swing structures $\langle u_i,v_t,u_j\rangle$, a simple and straightforward idea is to utilize the Monte-Carlo method for computing approximate Swing scores.
Instead of enumerating all user pairs so as to detect all valid swing structures, the na{\"{\i}}ve Monte-Carlo approach (referred to as \NMC) is to sample a fixed number $n$ of user pairs from $\U(v_q)\times \U(v_q)$ for Swing estimation. As illustrated at Lines 2-5 in Algorithm~\ref{alg:sample}, for $r$-th trial, \NMC picks a user $u_i$ from $\U(v_q)$ uniformly at random, followed by another user $u_j$ from $\U(v_q)\setminus \{u_i\}$. Afterwards, akin to \texttt{Exact}, \NMC obtains the set intersection $\I_{i,j}$ and updates the approximate Swing score $\widetilde{\textsf{sw}}(v_q,v_t)$ for each target item in $\I_{i,j}$ by adding an increment $\omega_r=\frac{d(v_q)\cdot (d(v_q)-1)}{\alpha+|\I_{i,j}|}$ divided by $n$.

The probability of drawing a specific user pair is $\frac{1}{d(v_q)\cdot (d(v_q)-1)}$. 
Let $Z_r$ be the random variable representing the observed score for the $r$-th trial. By the definition, consider any target item $v_t$,
\begin{small}
\begin{align*}
\mathbb{E}[Z_r] &= \sum_{u_i,u_j\in \U(v_q), u_i\neq u_j}{\frac{1}{d(v_q)\cdot (d(v_q)-1)}\cdot \omega_r\cdot \mathbb{1}[v_t\in \I_{i,j}]}\\
& = \sum_{u_i,u_j\in \U(v_q), u_i\neq u_j}{\frac{1}{\alpha+|\I_{i,j}|}\cdot \mathbb{1}[v_t\in \I_{i,j}]} = \textsf{sw}(v_q,v_t),
\end{align*}
\end{small}
which indicates that $\frac{1}{n}\cdot\sum_{r=1}^{n}{Z_r}$ is an unbiased estimator of $\textsf{sw}(v_q,v_t)$.
Since the set intersection at Line 3 takes $O(\min(d(u_i), d(u_j)))$ time, the worst time complexity of \NMC is bounded by $O(n\cdot \max_{u\in \U(v_q)}{d(u)})$. On average, its time complexity is $O\left(n\cdot \frac{D(v_q)}{d(v_q)}\right)$.
Moreover, by setting the number $n$ of samples properly, the estimated Swing scores returned by Algorithm~\ref{alg:sample} can be guaranteed to be $(\epsilon,\lambda)$-approximate Swing with a high probability. We defer the detailed analysis to \S\ref{sec:sampling}.

\subsection{Empirical Study and Analysis}\label{sec:empirical-study}
\begin{figure}[!t]
\centering
\begin{small}
\begin{tikzpicture}
   \hspace{3mm}\begin{customlegend}[
        legend entries={\texttt{Exact},\NMC, Set Intersection, Others},
        legend columns=4,
        area legend,
        legend style={at={(0.45,1.25)},anchor=north,draw=none,font=\footnotesize,column sep=0.1cm}]
        \addlegendimage{pattern color=black, pattern={crosshatch dots}} 
        \addlegendimage{pattern color=black, pattern={north west lines}}
        \addlegendimage{preaction={fill, my_blue!60}} 
        \addlegendimage{preaction={fill, mycyan2}}
    \end{customlegend}
\end{tikzpicture}
\\[-8pt]
\vspace{-1ex}
\subfloat[{\em Twitch}]{
\begin{tikzpicture}[scale=1]
\begin{axis}[
    height=2.7cm,
    width=2.9cm,
    xtick=\empty,
    ybar stacked,
    bar shift=7pt,
    bar width=0.3cm,
    enlarge x limits=0.4,
    ylabel={\em time (ms)},
    ymin=100,
    ymax=20000,
    ytick={100,1000,10000},
    yticklabels={$10^{2}$,$10^{3}$,$10^{4}$},
    ymode=log,
    log origin y=infty,
    log basis y={10},
    xticklabel style = {font=\scriptsize},
    yticklabel style = {font=\tiny},
    symbolic x coords={top 20\% items}, 
    xtick = data,
    tick align=inside,
    every axis y label/.style={at={(current axis.north west)},right=5mm,above=0mm},
    legend style={draw=none, at={(1.02,1.02)},anchor=north west,cells={anchor=west},font=\scriptsize},
    legend image code/.code={ \draw [#1] (0cm,-0.1cm) rectangle (0.3cm,0.15cm);},
    ]

\addplot [pattern color=black, preaction={fill, my_blue!60}, pattern={north west lines}] coordinates
    {(top 20\% items,506.800)};

\addplot [pattern color=black, preaction={fill, mycyan2}, pattern={north west lines}] coordinates
    {(top 20\% items,507.501)};

\end{axis}
\begin{axis}[height=2.7cm,
    width=2.9cm,
    xtick=\empty,
    ybar stacked,
    bar shift=-5pt,
    ymin=100,
    ymax=20000,
    ymode=log,
    log origin y=infty,
    log basis y={10},
    symbolic x coords={top 20\% items},
    axis line style={draw=none},
    tick style={draw=none},
    yticklabel=\empty,
]
\addplot [pattern color=black, preaction={fill, my_blue!60}, pattern={crosshatch dots}] coordinates
    {(top 20\% items,3513.010) };

\addplot [pattern color=black, preaction={fill, mycyan2}, pattern={crosshatch dots}] coordinates
    {(top 20\% items,3836.848) };
\end{axis}
\end{tikzpicture}%
\begin{tikzpicture}[scale=1]
\begin{axis}[
    height=2.7cm,
    width=2.9cm,
    xtick=\empty,
    ybar stacked,
    bar shift=7pt,
    bar width=0.3cm,
    enlarge x limits=0.4,
    ylabel={\em time (ms)},
    ymin=0.01,
    ymax=1000,
    ytick={0.01,1,100},
    yticklabels={$0.01$,$10^{0}$,$10^{2}$},
    ymode=log,
    log origin y=infty,
    log basis y={10},
    xticklabel style = {font=\scriptsize},
    yticklabel style = {font=\tiny},
    symbolic x coords={bottom 80\% items},
    xtick = data,
    tick align=inside,
    every axis y label/.style={at={(current axis.north west)},right=5mm,above=0mm},
    legend style={draw=none, at={(1.02,1.02)},anchor=north west,cells={anchor=west},font=\scriptsize},
    legend image code/.code={ \draw [#1] (0cm,-0.1cm) rectangle (0.3cm,0.15cm);},
    ]

\addplot [pattern color=black, preaction={fill, my_blue!60}, pattern={north west lines}] coordinates
    {(bottom 80\% items,111.053)};

\addplot [pattern color=black, preaction={fill, mycyan2}, pattern={north west lines}] coordinates
    {(bottom 80\% items,140.63225)};
\end{axis}
\begin{axis}[height=2.7cm,
    width=2.9cm,
    xtick=\empty,
    ybar stacked,
    bar shift=-5pt,
    ymin=0.01,
    ymax=1000,
    ymode=log,
    log origin y=infty,
    log basis y={10},
    symbolic x coords={bottom 80\% items},
    axis line style={draw=none},
    tick style={draw=none},
    yticklabel=\empty,
]
\addplot [pattern color=black, preaction={fill, my_blue!60}, pattern={crosshatch dots}] coordinates
    {(bottom 80\% items,0.0685) };

\addplot [pattern color=black, preaction={fill, mycyan2}, pattern={crosshatch dots}] coordinates
    {(bottom 80\% items,0.07725) };
\end{axis}
\end{tikzpicture}\hspace{4mm}%
}
\subfloat[{\em MAG}]{
\begin{tikzpicture}[scale=1]
\begin{axis}[
    height=2.7cm,
    width=2.9cm,
    xtick=\empty,
    ybar stacked,
    bar shift=7pt,
    bar width=0.3cm,
    enlarge x limits=0.4,
    ylabel={\em time (ms)},
    ymin=10,
    ymax=80000,
    ytick={10,100,1000,10000},
    yticklabels={$10$,$10^{2}$,$10^{3}$,$10^{4}$,$10^{5}$},
    ymode=log,
    log origin y=infty,
    log basis y={10},
    xticklabel style = {font=\scriptsize},
    yticklabel style = {font=\tiny},
    symbolic x coords={top 20\% items},
    xtick = data,
    tick align=inside,
    every axis y label/.style={at={(current axis.north west)},right=5mm,above=0mm},
    legend style={draw=none, at={(1.02,1.02)},anchor=north west,cells={anchor=west},font=\scriptsize},
    legend image code/.code={ \draw [#1] (0cm,-0.1cm) rectangle (0.3cm,0.15cm);},
    ]

\addplot [pattern color=black, preaction={fill, my_blue!60}, pattern={north west lines}] coordinates
    {(top 20\% items,67.492)};

\addplot [pattern color=black, preaction={fill, mycyan2}, pattern={north west lines}] coordinates
    {(top 20\% items,143.330)};
\end{axis}
\begin{axis}[height=2.7cm,
    width=2.9cm,
    xtick=\empty,
    ybar stacked,
    bar shift=-5pt,
    ymin=10,
    ymax=80000,
    ymode=log,
    log origin y=infty,
    log basis y={10},
    symbolic x coords={top 20\% items},
    axis line style={draw=none},
    tick style={draw=none},
    yticklabel=\empty,
]
\addplot [pattern color=black, preaction={fill, my_blue!60}, pattern={crosshatch dots}] coordinates
    {(top 20\% items,14229.659) };

\addplot [pattern color=black, preaction={fill, mycyan2}, pattern={crosshatch dots}] coordinates
    {(top 20\% items,33244.560) };
\end{axis}
\end{tikzpicture}%
\begin{tikzpicture}[scale=1]
\begin{axis}[
    height=2.7cm,
    width=2.9cm,
    xtick=\empty,
    ybar stacked,
    bar shift=7pt,
    bar width=0.3cm,
    enlarge x limits=0.4,
    ylabel={\em time (ms)},
    ymin=0.01,
    ymax=1000,
    ytick={0.01,1,100},
    yticklabels={$0.01$,$10^{0}$,$10^{2}$},
    ymode=log,
    log origin y=infty,
    log basis y={10},
    xticklabel style = {font=\scriptsize},
    yticklabel style = {font=\tiny},
    symbolic x coords={bottom 80\% items},
    xtick = data,
    tick align=inside,
    every axis y label/.style={at={(current axis.north west)},right=5mm,above=0mm},
    legend style={draw=none, at={(1.02,1.02)},anchor=north west,cells={anchor=west},font=\scriptsize},
    legend image code/.code={ \draw [#1] (0cm,-0.1cm) rectangle (0.3cm,0.15cm);},
    ]

\addplot [pattern color=black, preaction={fill, my_blue!60}, pattern={north west lines}] coordinates
    {(bottom 80\% items,74.1)};

\addplot [pattern color=black, preaction={fill, mycyan2}, pattern={north west lines}] coordinates
    {(bottom 80\% items,130.21)};
\end{axis}
\begin{axis}[height=2.7cm,
    width=2.9cm,
    xtick=\empty,
    ybar stacked,
    bar shift=-5pt,
    ymin=0.01,
    ymax=1000,
    ymode=log,
    log origin y=infty,
    log basis y={10},
    symbolic x coords={bottom 80\% items},
    axis line style={draw=none},
    tick style={draw=none},
    yticklabel=\empty,
]
\addplot [pattern color=black, preaction={fill, my_blue!60}, pattern={crosshatch dots}] coordinates
    {(bottom 80\% items,0.0595) };

\addplot [pattern color=black, preaction={fill, mycyan2}, pattern={crosshatch dots}] coordinates
    {(bottom 80\% items,0.103) };
\end{axis}
\end{tikzpicture}\hspace{0mm}%
}

\vspace{-1em}
\end{small}
\caption{Empirical efficiency of \texttt{Exact} and \NMC for different query items on {\em Twitch} and {\em MAG}.} \label{fig:Motive}
\end{figure}

To reveal the strengths and weakness of the deterministic method \texttt{Exact} and randomized algorithm \NMC in Swing computation, we conduct empirical studies to evaluate the efficiency of these two baselines on two large real datasets, {\em Twitch} and {\em MAG}.
Considering that \texttt{Exact} method outputs exact Swing values whilst \NMC is an approximate algorithm, for a fair comparison, we tune the number $n$ of random samples used in \NMC so that it achieves at least 99.9\% precision in terms of top-$K$ ($K=20$) ranking accuracy.
\update{Note that this empirical study uses the original set intersection implementation rather than \texttt{QFilter++}.
}
Figure~\ref{fig:Motive} reports the average query time (in milliseconds), along with its breakdowns into set intersections and other operations, for \texttt{Exact} and \NMC over two distinct query sets, i.e., the top 20\% high-degree items and the remaining 80\% of items.

From these results, we can make the following observations. 
First, for popular items (top 20\% high-degree items), both \texttt{Exact} and \NMC incur high query costs, due to the need for enumerating and sampling substantial user-pairs. For instance, on {\em Twitch} dataset, \texttt{Exact} and \NMC take more than 3.8 and 0.5 seconds, respectively, on average to answer a single query, which are far from satisfactory for practical applications. 
Second, for both methods, the primary computational bottleneck is set intersections, which account for over 80\% of the total query time on {\em Twitch} and around 50\% on {\em MAG}.
Last but not least, \texttt{Exact} is more efficient for low-degree items but incurs prohibitive cost on popular items, whereas \NMC exhibits more stable query performance across all query items. Although it is remarkably faster than \texttt{Exact} on popular ones, its query cost still remains large. In addition, for the remaining $80\%$ of items, the actual cost of \texttt{Exact} can be very cheap, i.e., requires less than $1$ ms on average, since these items admit a small or highly constrained search space.
In contrast, \NMC incurs non-negligible overhead and large sample size to stabilize high variances in estimations and achieve acceptable accuracy~\cite{motwani1996randomized}, whose value can easily outstrip the actual number of distinct user-pairs, i.e., $n> \frac{d(v_q)(d(v_q)-1)}{2}$, especially for low-degree items.

In summary, our above observations suggest that both \texttt{Exact} and \NMC suffer from severe deficiencies in Swing computation/approximation. Therefore, it is necessary to optimize both approaches and devise adaptive frameworks combining their merits  for efficient $(\epsilon,\lambda)$-approximate Swing computation as well as top-$K$ Swing processing.

\section{The \algo{} Algorithm}\label{sec:algo-ASC}
This section presents \algo{}, our solution for $(\epsilon,\lambda)$-approximate Swing computation. \algo{} is motivated by the aforementioned empirical observations of baseline algorithms, including expensive set intersections and their limitations in processing high- and low-degree query items.
To curtail the significant number of set intersections required in \NMC, \S\ref{sec:sampling} introduces an optimized version of \NMC, i.e., \texttt{GNS}, for computing $(\epsilon,\lambda)$-approximate Swing.
Subsequently, we develop \texttt{USS} in \S\ref{sec:subset_sample} to circumvent costly intersections of large neighbor sets in \texttt{Exact}, while ensuring approximation guarantees.
Lastly, we propose to integrate \texttt{GNS} and \texttt{USS} within a hybrid framework \algo{} that dynamically selects the algorithm for fast Swing approximation along with rigorous theoretical analyses in \S\ref{sec:combine}.
Figure~\ref{fig:ASC} provides a figurative illustration of \algo{}.

\begin{algorithm}[!t]
\small
\caption{Grouped Na{\"{\i}}ve Sampling (\texttt{GNS})}\label{alg:group_sample}
\KwIn{$\G=(\U\cup\I,\EDG)$, query item $v_q$, error threshold $\epsilon$.}
\KwOut{$\widetilde{\textsf{sw}}(v_q,v_t)\ \forall{v_t\in \I}$}
Calculate $n$ according to Eq.~\eqref{eq:comp-n}\;
Initialize an empty map $S$\;
\For{$r\gets 1$ to $n$}{
Sample $u_i, u_j\in \U(v_q)$ s.t. $u_i\neq u_j$ and $d(u_i),d(u_j)\ge 2$\;
Increase $S(u_i,u_j)$ by 1\;
}
\For{$(u_i,u_j)\in S$}{
$\I_{i,j}\gets \I(u_i) \cap \I(u_j)$\;
\For{$v_t\in \I_{i,j}$}{
Increase $\widetilde{\textsf{sw}}(v_q,v_t)$ by $\frac{S(u_i,u_j)\cdot d(v_q)\cdot (d(v_q)-1)}{n(\alpha+|\I_{i,j}|)}$\;
}
}
\end{algorithm}

\subsection{Grouped Na{\"{\i}}ve Sampling}\label{sec:sampling}
Since \NMC requires drawing numerous user-pair samples from $\U(v_q)$ randomly, many of them are invalid (e.g., $\I_{i,j}=\emptyset$) or duplicated, leading to significant redundant or repeated computations for set intersection.
\update{Following the sampling principle of \NMC, our {\em grouped na{\"{\i}}ve sampling} (\texttt{GNS}) method tackles this issue by grouping repeated user-pair samples so that set intersections are computed only once for each distinct pair, while preserving unbiased estimation.}
\update{Intuitively, \texttt{GNS} replaces repeated set intersections with a cheap counter update whenever a sampled user pair reappears.}

Algorithm~\ref{alg:group_sample} displays the pseudo-code of \texttt{GNS}. 
At Line 1, given relative error threshold $\epsilon$, Swing threshold $\lambda$, and failure probability $\delta$ (typically $\delta=10^{-4}$), the sample size $n$ is calculated by
\begin{small}
\begin{equation}\label{eq:comp-n}
n= \left\lceil\frac{2(d(v_q)^2-d(v_q))\cdot(\epsilon/3+1)}{(\alpha+2)\cdot\lambda\epsilon^2}\cdot\log\left(\frac{1}{\delta}\right)\right\rceil.
\end{equation}
\end{small}
Instead of computing the set intersection for each sampled user pair separately, \texttt{GNS} maintains an additional map $S$ (Line 2) to record the occurrences for distinct user pairs. More specifically, at Lines 3-5, Algorithm~\ref{alg:group_sample} first picks $n$ user-pairs from $\U(v_q)$ such that the users in each pair are different and have at least $2$ neighbors (i.e., interacting items).
This requirement is to avoid sampling invalid user pairs since each swing structure derived from a user pair $(u_i,u_j)$ should have at least two common neighboring items with one of them being $v_q$ according to Eq.~\eqref{eq:swing}.
Accordingly, the degrees of sampled users must be greater than or equal to $2$.
Notice that we can ensure the sampling of such users via a filtering of users in $\U(v_q)$ in the preprocessing.
Then, for each sampled pair $(u_i,u_j)$, we increase $S(u_i,u_j)$ by $1$.
Afterwards, for each distinct user pair $(u_i,u_j)$ in the map $S$, \texttt{GNS} finds the intersection $\I_{i,j}$ of $\I(u_i)$ and $\I(u_j)$, efficiently through our \texttt{QFilter++} algorithm (Line 7).
Next, for each common item $v_t\neq v_q$ in $\I_{i,j}$, we increase the approximate Swing $\widetilde{\textsf{sw}}(v_q,v_t)$ by $\frac{S(u_i,u_j)\cdot d(v_q)\cdot (d(v_q)-1)}{n(\alpha+|\I_{i,j}|)}$ as at Lines 8-9.

\begin{figure}[!t]
    \centering
    \includegraphics[width=\columnwidth]{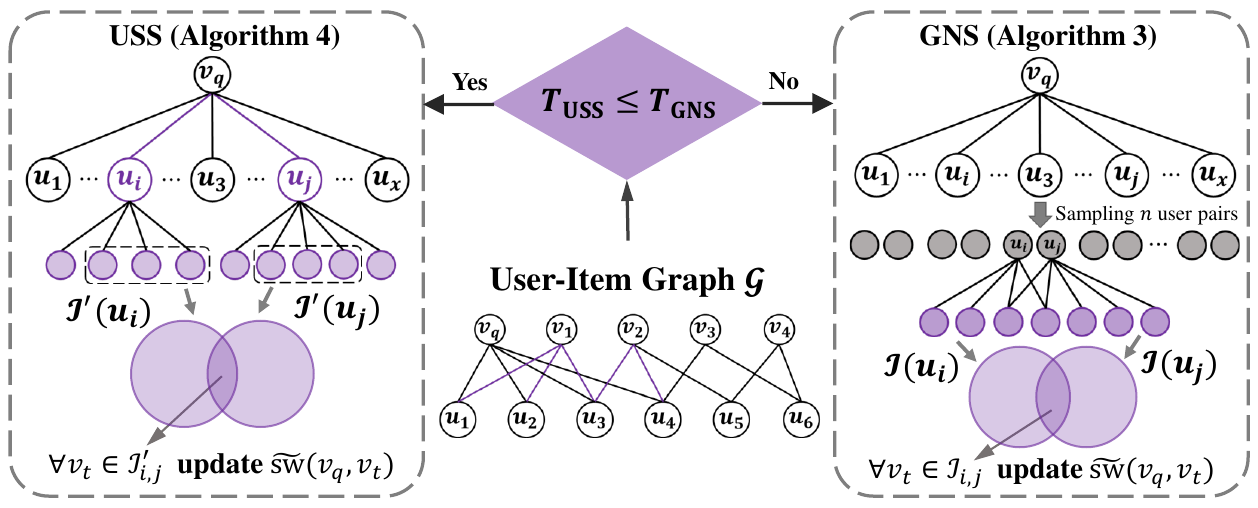}
    \vspace{-6ex}
    \caption{\update{Illustration of \algo{}}}\label{fig:ASC}
\end{figure}

\stitle{Complexity Analysis}
By counting repeated user-pairs with the map $S$, \texttt{GNS} ensures that the set intersections are only computed for distinct user pairs, whose count is at most $d(v_q)\cdot (d(v_q)-1)$ as in \texttt{Exact}. Accordingly, by Lemma~\ref{lem:exact-time}, the computational cost entailed by set intersections in \texttt{GNS} will be bounded by $O\left(\frac{d(v_q)-1}{2}\cdot D(v_q)\right)$.
Since the number of distinct user pairs in $S$ is at most $n$, the set intersection cost can also be estimated by $O\left(n\cdot \frac{D(v_q)}{d(v_q)}\right)$.
Plus the $O(n)$ cost for updating the map $S$ at Lines 2-4, the overall time complexity of \texttt{GNS} is $O\left(n+\min\left\{n,\frac{d(v_q)(d(v_q)-1)}{2}\right\}\cdot \frac{D(v_q)}{d(v_q)}\right)$.

\stitle{Correctness Analysis}
Since each user-pair is still sampled independently at Lines 3-5 in Algorithm~\ref{alg:group_sample}, the resulting $\widetilde{\textsf{\textnormal{sw}}}(v_q, v_t)$ is also an unbiased estimator of ${\textsf{\textnormal{sw}}}(v_q, v_t)$ as analyzed in \NMC.
By leveraging the Chernoff bound~\cite{chung2006concentration}, we can establish the following accuracy guarantee in Theorem~\ref{lem:naive-mc} for both \NMC and \texttt{GNS} when their sample sizes $n$ are greater than the value in Eq.~\eqref{eq:comp-n}.
\begin{theorem}\label{lem:naive-mc}
Let $\widetilde{\textsf{\textnormal{sw}}}(v_q,v_t)\ \forall{v_t\in \I}$ be the approximate Swing scores output by Algorithms~\ref{alg:sample} or~\ref{alg:group_sample}. For any item $v_t\in \I$, $\widetilde{\textsf{\textnormal{sw}}}(v_q,v_t)$ is the $(\epsilon,\lambda)$-approximate Swing with a probability of at least $1-\delta$.
\end{theorem}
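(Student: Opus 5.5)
We treat $\widetilde{\textsf{sw}}(v_q,v_t)$, for a fixed target item $v_t$, as the empirical mean $\frac1n\sum_{r=1}^n Z_r$ of i.i.d. random variables. For \NMC this is immediate from Algorithm~\ref{alg:sample}. For \texttt{GNS}, grouping only regroups the sum: $\sum_{(u_i,u_j)\in S} S(u_i,u_j)\,\omega_{i,j}\mathbb{1}[v_t\in\I_{i,j}]/n$ equals $\frac1n\sum_r Z_r$, since each of the $n$ draws contributes its own term exactly once. Restricting the draws to users of degree at least $2$ changes nothing for $v_t\ne v_q$, because a user of degree $1$ can never have $v_t\in\I_{i,j}$. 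Unbiasedness, $\mathbb{E}[Z_r]=\textsf{sw}(v_q,v_t)$, is already established in the text. (If the filtered user set is smaller, the normalization must use the filtered degree; we take the paper's convention as stated and note that the same argument goes through with $d(v_q)$ replaced by the filtered count.)

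The key step is a range bound on each $Z_r$. A contributing pair $(u_i,u_j)$ has $\{v_q,v_t\}\subseteq\I_{i,j}$, so $|\I_{i,j}|\ge 2$. Hence $0\le Z_r\le M:=\frac{d(v_q)(d(v_q)-1)}{\alpha+2}$, which matches the bound of Lemma~\ref{lem:swing-range}. We then normalize by setting $X_r=Z_r/M\in[0,1]$, so that $\mathbb{E}[\sum_r X_r]=n\,\textsf{sw}/M=:\mu$. The Chernoff bound for bounded variables~\cite{chung2006concentration} gives
\[
\Pr\Big[\Big|\sum_r X_r-\mu\Big|\ge t\Big]\le 2\exp\Big(-\frac{t^2}{2(\mu+t/3)}\Big).
\]
We could instead use Bernstein's inequality with $\mathrm{Var}[X_r]\le\mathbb{E}[X_r]$, which yields the same form. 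We will not fuss over the factor $2$ from two-sidedness: either it is absorbed into the failure probability, or the one-sided bound is applied to each tail and the paper's $\log(1/\delta)$ is read loosely.

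The argument then splits into two cases.

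\textbf{Case $\textsf{sw}\ge\lambda$.} Take $t=\epsilon\mu$. The exponent becomes $\frac{\epsilon^2\mu}{2(1+\epsilon/3)}=\frac{\epsilon^2 n\,\textsf{sw}}{2M(1+\epsilon/3)}\ge\frac{\epsilon^2 n\lambda}{2M(1+\epsilon/3)}$. With $n$ as in Eq.~\eqref{eq:comp-n} this is at least $\log(1/\delta)$, which gives the relative-error bound Eq.~\eqref{eq:approx-bound}.

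\textbf{Case $\textsf{sw}<\lambda$.} Take $t=\epsilon n\lambda/M$. The exponent is $\frac{t^2}{2(\mu+t/3)}$, and since $\mu<n\lambda/M$ the denominator is at most $2\frac{n\lambda}{M}(1+\epsilon/3)$. The exponent is therefore at least $\frac{\epsilon^2 n\lambda}{2M(1+\epsilon/3)}\ge\log(1/\delta)$, which gives the additive bound Eq.~\eqref{eq:approx-bound-2}.

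The main obstacle is the second case: the deviation $t$ must be fixed in terms of $\lambda$ rather than $\mu$, and monotonicity in $\mu$ must be used correctly so that a single sample size $n$ covers both regimes. A secondary obstacle is stating the range bound cleanly. It relies on $|\I_{i,j}|\ge2$ holding whenever $Z_r\ne0$, and this is precisely where the factor $\alpha+2$ in Eq.~\eqref{eq:comp-n} comes from.
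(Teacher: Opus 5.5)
Your proposal is correct and follows the paper's proof essentially step for step. It uses the same i.i.d. variables $Z_r$ with range bound $\zeta=\frac{d(v_q)(d(v_q)-1)}{\alpha+2}$ coming from $|\I_{i,j}|\ge 2$, the same Chernoff bound $\exp\bigl(-\frac{\varepsilon^2}{2\zeta(\varepsilon/3+\mathbb{E}[Z])}\bigr)$, and the same two cases with deviations $n\epsilon\cdot\textsf{sw}(v_q,v_t)$ and $n\epsilon\lambda$; these are your $t=\epsilon\mu$ and $t=\epsilon n\lambda/M$ after normalizing by $M=\zeta$.

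The two caveats you flag are genuine looseness that the paper's own proof also passes over. First, its stated Chernoff lemma omits the factor $2$ from the two-sided tail. Second, it does not address normalizing by the degree-filtered user count in \texttt{GNS}. So neither caveat separates your argument from the paper's.
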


\subsection{User Subset Sampling}\label{sec:subset_sample}
In what follows, we shift focus to expensive intersections for user pairs with high degrees in \texttt{Exact}, as it requires retrieving all interacted items and materializing the full intersection. 
However, computing only the intersection cardinality is conspicuously faster than set intersection (3$\times$ speedup when using our \texttt{QFilter++}) as pinpointed in \S\ref{sec:set-intersection}, especially when the overlap is large. 
\update{Motivated by this observation, our {\em user subset sampling} (\texttt{USS}) builds on randomized subset sampling to estimate Swing from small sampled subsets and intersection cardinality, thereby avoiding explicit intersection materialization.}
\update{Intuitively, \texttt{USS} targets a different bottleneck from \texttt{GNS} by reducing the processing cost of each pair, rather than the number of pairs processed, when pair enumeration is affordable.}

Algorithm~\ref{alg:subset_sample} presents the pseudo-code of \texttt{USS}. At Line 1, the sample ratio $\gamma$ is computed based on relative error threshold $\epsilon$, Swing threshold $\lambda$, and failure probability $\delta$ as follows
\begin{small}
\begin{equation}\label{eq:subset_ratio}
\gamma\le \min\left\{{\frac{2(\epsilon/3+1)}{(\alpha+2)\cdot\lambda\epsilon^2 }\cdot\log\left(\frac{1}{\delta}\right)}, 1\right\}.
\end{equation}
\end{small}
When enumerating all valid user-pairs where each has at least two interacting items, instead of computing the exact intersection of sets $\I(u_i)$ and $\I(u_j)$, \texttt{USS} constructs a sampled subset $\I'(u_i) $ from $\I(u_i)$ to enable efficient small-scale set intersection. 
More specifically, at Line 3, Algorithm~\ref{alg:subset_sample} first uniformly samples $\lceil \gamma\cdot d(u_i) \rceil$ items from $\I(u_i)$ where $d(u_i)\le d(u_j)$, ensuring that the sampling ratio is no less than $\gamma$ and preserving the correctness of the error bound. It then computes the intersection set, denoted as $\I'_{i,j}$, between $\I'(u_i)$ and $\I(u_j)$. Notice that $|\I'(u_i)|$ is typically small, we only need to check items in $\I'(u_i)$ against a preprocessed hash map of $\I(u_j)$, resulting in constant-time lookup per item and avoiding expensive set matching. 
\update{Afterwards, at Line 5, we filter out user pairs with no overlap and compute the intersection cardinality using our size-only variant of \texttt{QFilter++} algorithm in Appendix~\ref{sec:FSI}
to yield a mean speedup of $1.5\times$ and up to $3.6\times$ speedup compared to the set materialization version.} Next, we increase the approximate Swing $\widetilde{\textsf{sw}}(v_q,v_t)$ for each common item $v_t\neq v_q$ in $\I'_{i,j}$ by $\frac{1}{\frac{|\I^\prime(u_i)|}{d(u_i)}\cdot (\alpha+|\I_{i,j}|)}$ as at Lines 6-7.

\begin{algorithm}[!t]
\small
\caption{User Subset Sampling (\texttt{USS})}\label{alg:subset_sample}
\KwIn{$\G=(\U\cup\I,\EDG)$, query item $v_q$, error threshold $\epsilon$.}
\KwOut{$\widetilde{\textsf{sw}}(v_q,v_t)\ \forall{v_t\in \I}$}
Calculate $\gamma$ according to Eq.~\eqref{eq:subset_ratio}\;
\For{$u_i\neq u_j\in \U(v_q)$ such that $d(u_i),d(u_j)\ge 2$}{
$\I^\prime(u_i)\gets$ Sample $\lceil \gamma\cdot d(u_i)\rceil$ items from $\I(u_i)$\;
$\I^\prime_{i,j}\gets \I^\prime(u_i) \cap \I(u_j)$\;
\lIf{$|\I^\prime_{i,j}|>0$}{Compute $|\I_{i,j}|$}
\For{$v_t\in \I^\prime_{i,j}$}{
Increase $\widetilde{\textsf{sw}}(v_q,v_t)$ by $\frac{1}{\frac{|\I^\prime(u_i)|}{d(u_i)}\cdot (\alpha+|\I_{i,j}|)}$\;
}
}
\end{algorithm}

\stitle{Complexity Analysis}
Since Lines 3-4 process only a $\gamma$ fraction of items, their worst-case complexity is 
$O\left(\frac{(d(v_q)-1) D(v_q)}{2}\cdot \gamma \right)$ as per Lemma~\ref{lem:exact-time}. 
For Lines 5-7, the number of user pairs is the same as \texttt{Exact}. 
Let $\eta$ be the cost of computing an intersection cardinality. This stage takes 
$O\left(\frac{(d(v_q)-1) D(v_q)}{2}\cdot \eta \right)$ time. 
Therefore, the complexity of \texttt{USS} is 
$O\left(\frac{(d(v_q)-1) D(v_q)}{2}\cdot (\eta+\gamma) \right)$, 
which is asymptotically the same as \texttt{Exact} but with a smaller constant factor.

\stitle{Correctness Analysis}
\update{Consider the $r$-th update associated with a user pair $(u_i,u_j)$.
Since $\I^\prime(u_i)$ is drawn uniformly from $\I(u_i)$ at Line~4,
each target item $v_t\in\I_{i,j}$ is included in $\I^\prime_{i,j}$ with probability
$\frac{|\I^\prime(u_i)|}{d(u_i)}$.
Line~7 compensates for this sampling probability by scaling its contribution
by the reciprocal factor, such that the expected update remains
$\frac{1}{\alpha+|\I_{i,j}|}$.
By the linearity of expectation over all pairs sharing $v_t$, we have
$\mathbb{E}[\widetilde{\textsf{\textnormal{sw}}}(v_q,v_t)]
={\textsf{\textnormal{sw}}}(v_q,v_t)$, indicating that
$\widetilde{\textsf{\textnormal{sw}}}(v_q,v_t)$ is strictly unbiased.
The detailed derivation is deferred to Appendix~\ref{sec:proof}.
}
Next, we can harness the Chernoff bound~\cite{chung2006concentration} to obtain the following approximation guarantee for \texttt{USS}: 
\begin{theorem}\label{lem:USS}
$\widetilde{\textsf{\textnormal{sw}}}(v_q,v_t)\ \forall{v_t\in \I}$ returned by Algorithm~\ref{alg:subset_sample} are $(\epsilon,\lambda)$-approximate Swing scores with a probability of at least $1-\delta$.
\end{theorem}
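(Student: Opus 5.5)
Fix a target item $v_t\neq v_q$ and write $\widetilde{\textsf{sw}}(v_q,v_t)=\sum_{(u_i,u_j)} X_{i,j}$. The sum runs over the valid ordered pairs of $\U(v_q)$ with $u_i\neq u_j$ and $d(u_i),d(u_j)\ge 2$. Each summand is
\[
X_{i,j}=\frac{d(u_i)}{|\I'(u_i)|\,(\alpha+|\I_{i,j}|)}\,\mathbb{1}[v_t\in\I'(u_i)]\,\mathbb{1}[v_t\in\I(u_j)].
\]
The subsets $\I'(u_i)$ are drawn afresh for every pair, so the $X_{i,j}$ are mutually independent. Only pairs with $v_t\in\I_{i,j}$ contribute, and for these $\Pr[v_t\in\I'(u_i)]=|\I'(u_i)|/d(u_i)$. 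Hence $\mathbb{E}[X_{i,j}]=\frac{1}{\alpha+|\I_{i,j}|}$, and summing gives $\mathbb{E}[\widetilde{\textsf{sw}}]=\textsf{sw}(v_q,v_t)$, as sketched above. The plan is to apply the Chernoff (Bernstein-type) bound of \cite{chung2006concentration} for sums of independent bounded variables. That bound states $\Pr[|X-\mathbb{E}X|\ge t]\le 2\exp\!\left(-\frac{t^2}{2(\mathrm{Var}[X]+Mt/3)}\right)$, where each summand lies in $[0,M]$.

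The key quantities are $M$ and the variance. Since $v_t\in\I_{i,j}$ together with $v_q\in\I_{i,j}$ forces $|\I_{i,j}|\ge 2$, and $|\I'(u_i)|\ge\gamma d(u_i)$, each summand satisfies $X_{i,j}\le \frac{1}{\gamma(\alpha+2)}=:M$. For the variance, $X_{i,j}$ is a scaled Bernoulli variable, so $\mathrm{Var}[X_{i,j}]\le \mathbb{E}[X_{i,j}^2]\le M\,\mathbb{E}[X_{i,j}]$. Summing gives $\mathrm{Var}\le M\cdot\textsf{sw}(v_q,v_t)$. This mirrors the argument I expect for Theorem~\ref{lem:naive-mc}. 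There the per-sample bound is $\frac{d(v_q)(d(v_q)-1)}{n(\alpha+2)}$, and $\gamma$ plays the role of $n/(d(v_q)^2-d(v_q))$. This matches the correspondence between Eq.~\eqref{eq:comp-n} and Eq.~\eqref{eq:subset_ratio}.

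Next comes the case split of Definition~\ref{def:approx-swing}. If $\textsf{sw}\ge\lambda$, set $t=\epsilon\,\textsf{sw}$. The exponent becomes $\frac{\epsilon^2\textsf{sw}}{2M(1+\epsilon/3)}\ge\frac{\epsilon^2\lambda\gamma(\alpha+2)}{2(1+\epsilon/3)}$. If $\textsf{sw}<\lambda$, set $t=\epsilon\lambda$ and bound $\mathrm{Var}\le M\lambda$, which gives the same exponent. Choosing $\gamma$ so that this exponent is at least $\log(1/\delta)$ yields failure probability at most $\delta$ (up to the factor $2$ for two tails, or using one-sided bounds per tail as the paper does). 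This requires $\gamma\ge\frac{2(\epsilon/3+1)}{(\alpha+2)\lambda\epsilon^2}\log(1/\delta)$, capped at $1$. When the cap binds, $\gamma=1$ and the estimator is exact, so that case is trivial.

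The main obstacle is the direction of the inequality in Eq.~\eqref{eq:subset_ratio}. The analysis naturally needs $\gamma$ \emph{at least} that quantity, with equality being the intended choice. I will treat $\gamma$ as set to $\min\{\cdot,1\}$, and use $\lceil\gamma d(u_i)\rceil$ to guarantee that the effective sampling ratio is at least $\gamma$, which keeps $M$ bounded as claimed. A secondary subtlety is independence. Here I will stress that each pair draws its own $\I'(u_i)$. If subsets were shared across pairs, I would fall back to a Bernstein bound after grouping by $u_i$.
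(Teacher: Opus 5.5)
Your proposal is correct and follows the same route as the paper's proof: unbiasedness from the inclusion probability $|\mathcal{I}'(u_i)|/d(u_i)$, a per-update bound $\zeta=\frac{1}{\gamma(\alpha+2)}$, and the Chung--Lu Chernoff bound with variance proxy $\zeta\cdot\textsf{sw}(v_q,v_t)$. It also uses the same split into $\textsf{sw}(v_q,v_t)\ge\lambda$ (deviation $\epsilon\cdot\textsf{sw}$) and $\textsf{sw}(v_q,v_t)<\lambda$ (deviation $\epsilon\lambda$). You are right to treat $\gamma$ as a \emph{lower} bound: the paper writes Eq.~\eqref{eq:subset_ratio} with $\le$ and puts $\log(1/\delta)$ in the numerator of $\zeta$, whereas only $\zeta=\frac{\lambda\epsilon^2}{2(\epsilon/3+1)\log(1/\delta)}$ makes the exponent reach $\log(1/\delta)$.
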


\subsection{Complete Algorithm and Analysis}\label{sec:combine}
As revealed in \S\ref{sec:empirical-study}, \texttt{Exact} and \NMC excel at processing different query items. 
With the foregoing efficiency optimizations, \texttt{GNS} is able to obtain conspicuous speed-ups over \NMC without compromising result quality on high-degree items, but is still less than efficient on low-degree items, as validated in \S\ref{sec:approx-query}.
On the other hand, although \texttt{USS} improves upon \texttt{Exact} by averting the time-consuming intersections of large neighbor sets of users through sampling, it remains computationally expensive for high-degree items.
\update{This complementarity motivates \algo{} to perform query-adaptive estimator selection through an analytical cost model that selects the lower-cost estimator between \texttt{GNS} and \texttt{USS} for each query item.}
Algorithm~\ref{alg:ASC} illustrates the pseudo-code of \algo{}. Precisely, \algo{} invokes \texttt{USS} (Algorithm~\ref{alg:subset_sample}) when the following condition holds (Lines 2-3):
\begin{small}
\begin{equation}\label{eq:select}
\frac{d(v_q)\cdot (d(v_q)-1)}{2}\cdot \rho \le n\left(1+\frac{d(v_q)}{D(v_q)}\right),
\end{equation}
\end{small}
and runs \texttt{GNS} (Algorithm~\ref{alg:group_sample}) otherwise (Lines 4-5). The parameter $\rho$ stands for the empirical runtime ratio of primitive operations in \texttt{USS} and \texttt{GNS}.
As per our empirical analyses on real datasets, $\rho$ falls into a small interval of $[0.22,0.67]$ and setting $\rho=0.5$ always leads to an effective selection of \texttt{GNS} and \texttt{USS} in \algo{}.

\begin{algorithm}[!t]
\small
\caption{\texttt{ASC}}\label{alg:ASC}
\KwIn{$\G=(\U\cup\I,\EDG)$, query item $v_q$, error threshold $\epsilon$.}
\KwOut{$\widetilde{\textsf{sw}}(v_q,v_t)\ \forall{v_t\in \I}$}
Calculate $n$ according to Eq.~\eqref{eq:comp-n}\;
\If{Eq.~\eqref{eq:select} holds}{
Invoke Algorithm~\ref{alg:subset_sample} with $v_q$ and $\epsilon$\;
}\Else{
Invoke Algorithm~\ref{alg:group_sample} with $v_q$ and $\epsilon$\;
}
\end{algorithm}

To enable effective selection in \algo{}, the key is to accurately estimate the empirical runtime of both methods prior to execution, given that the asymptotic complexity is often inaccurate and abstracts away constant factors, memory access patterns, and implementation details that dominate actual performance in practice.
Specifically, given $v_q$, we quantify the runtime for \texttt{USS} by 
\begin{small}
\begin{equation}
T_{\texttt{USS}}=\rho_{\texttt{USS}} \cdot \left(\frac{d(v_q)(d(v_q)-1)}{2}\cdot\frac{D(v_q)}{d(v_q)}\right),
\end{equation}
\end{small}
where $\frac{(d(v_q)-1)D(v_q)}{2}$ is the count of primitive operations (i.e., Lines 3-7 in Algorithm~\ref{alg:subset_sample}) involved in \texttt{USS} and $\rho_{\texttt{USS}}$ denotes the constant factor calibrated such that $T_{\texttt{USS}}$ matches the observed execution time, i.e., averaged empirical time per primitive operation.
Similarly, we can formulate the empirical runtime for \texttt{GNS} as
\begin{small}
\begin{equation}
T_{\texttt{GNS}}=\rho_{\texttt{GNS}} \cdot \left(n+\min\left\{n,\frac{d(v_q)(d(v_q)-1)}{2}\right\}\cdot \frac{D(v_q)}{d(v_q)}\right),
\end{equation}
\end{small}
where $\rho_{\texttt{GNS}}$ stands for the averaged empirical time for drawing each user-pair and a single primitive operation as in \texttt{USS} (Lines 3-7 in Algorithm~\ref{alg:subset_sample} or Lines 7-9 in Algorithm~\ref{alg:group_sample}). Intuitively, the empirical runtime ratio $\rho = \frac{\rho_{\texttt{USS}}}{\rho_{\texttt{GNS}}}<1$.

\stitle{Complexity and Correctness Analysis}
Our rigorous analysis in Theorem~\ref{lem:cost-ASC} indicates that with the selection condition in Eq.~\eqref{eq:select}, \algo{} can achieve the minimal computational cost of \texttt{GNS} and \texttt{USS}.
\begin{theorem}\label{lem:cost-ASC}
Algorithm~\ref{alg:ASC} runs in $\min\{T_{\texttt{USS}}, T_{\texttt{GNS}}\}$ time.
\end{theorem}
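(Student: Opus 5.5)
Since \algo{} runs either \texttt{USS} or \texttt{GNS} depending on the test in Eq.~\eqref{eq:select}, the plan is to show that the test picks \texttt{USS} exactly when $T_{\texttt{USS}}\le T_{\texttt{GNS}}$. Then the realized cost is $\min\{T_{\texttt{USS}},T_{\texttt{GNS}}\}$, up to the $O(1)$ cost of evaluating $n$ by Eq.~\eqref{eq:comp-n} and checking Eq.~\eqref{eq:select}. These quantities depend only on $d(v_q)$, $D(v_q)$ and the input parameters, and $D(v_q)$ can be precomputed or read off in $O(d(v_q))$ time, which is already dominated by either branch.

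I will compare the two cost models directly. Write $P=\frac{d(v_q)(d(v_q)-1)}{2}$ and $\bar d=\frac{D(v_q)}{d(v_q)}$, so $T_{\texttt{USS}}=\rho_{\texttt{USS}}P\bar d$ and $T_{\texttt{GNS}}=\rho_{\texttt{GNS}}(n+\min\{n,P\}\bar d)$. I split into two cases according to $\min\{n,P\}$.

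\emph{Case $n\le P$.} Here $T_{\texttt{GNS}}=\rho_{\texttt{GNS}}\,n(1+\bar d)$. Dividing $T_{\texttt{USS}}\le T_{\texttt{GNS}}$ by $\rho_{\texttt{GNS}}\bar d>0$ gives $P\rho\le n(1+1/\bar d)=n\bigl(1+\frac{d(v_q)}{D(v_q)}\bigr)$, which is exactly Eq.~\eqref{eq:select}. So in this case the selection rule is equivalent to choosing the cheaper estimator.

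\emph{Case $n>P$.} Here $T_{\texttt{GNS}}=\rho_{\texttt{GNS}}(n+P\bar d)\ge\rho_{\texttt{GNS}}P\bar d> \rho_{\texttt{USS}}P\bar d=T_{\texttt{USS}}$, using $\rho<1$. So \texttt{USS} is the cheaper estimator. Eq.~\eqref{eq:select} also holds, since $P\rho<P<n\le n(1+d(v_q)/D(v_q))$. Hence \algo{} picks \texttt{USS}, which is correct again.

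The main obstacle is not algebraic. It is making explicit that the claim holds within the calibrated cost model and relies on the stated assumption $\rho<1$, which is needed in the second case. The correctness guarantee follows directly from Theorems~\ref{lem:naive-mc} and~\ref{lem:USS}, since whichever branch runs yields $(\epsilon,\lambda)$-approximate scores with probability at least $1-\delta$.
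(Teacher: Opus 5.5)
Your proposal is correct and follows essentially the same argument as the paper. You split on whether $n$ or $\frac{d(v_q)(d(v_q)-1)}{2}$ attains the minimum in $T_{\texttt{GNS}}$, show that Eq.~\eqref{eq:select} is algebraically equivalent to $T_{\texttt{USS}}\le T_{\texttt{GNS}}$ when $n\le\frac{d(v_q)(d(v_q)-1)}{2}$, and otherwise use $\rho<1$ to show both that \texttt{USS} is cheaper and that Eq.~\eqref{eq:select} holds; this matches the paper, though your treatment of that last case is cleaner than the paper's, which states it against $T_{\texttt{Exact}}$ with an inequality that is really the identity $T_{\texttt{GNS}}=\rho_{\texttt{GNS}}\,n+\frac{1}{\rho}T_{\texttt{USS}}$.
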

Since \algo{} is a combination of \texttt{GNS} and \texttt{USS}, it offers the same accuracy assurances. 
According to Theorems~\ref{lem:naive-mc} and~\ref{lem:USS}, $\textsf{sw}(v_q,v_t)$ $\forall{v_t\in \I}$ returned by Algorithm~\ref{alg:ASC} are $(\epsilon,\lambda)$-approximate Swing values with a high probability.

\section{The \kalgo{} Algorithm}\label{sec:algo-KASC}
Although the efficiency techniques developed in \algo{} considerably enhance its practical efficiency for accurate Swing estimation, and thus, enable faster top-$K$ Swing query processing,
it is still rather costly for handling popular items over large-scale graphs as shown in \S\ref{sec:performance-eval}, due to the sheer amount of random samples needed in \texttt{GNS}.
\update{To overcome this limitation, this section presents \kalgo{} adapting the filter-refinement paradigm to top-$K$ Swing queries, concentrating the sampling budget on borderline candidates rather than uniformly refining all items.}
We first provide a synoptic overview of \kalgo{} in \S\ref{sec:sol-overview}, followed by elucidating the secondary algorithms for {\em greedy candidate generation} and {\em adaptive pairwise refinement} in \S\ref{sec:cand-gen} and \S\ref{sec:PR}, respectively. Subsequently, \S\ref{sec:ASC-theory} theoretically analyzes the correctness and computational complexities of the proposed algorithms.
\update{
Lastly, we further extend \kalgo{} to provide a theoretical guarantee on the correctness of top-$K$ results, albeit at a higher computational cost. The algorithmic details are deferred to Appendix~\ref{sec:KASC-cor} in the interest of space.
}

\begin{figure}[!t]
\centering
\begin{small}
\begin{tikzpicture}
    \begin{customlegend}
    [legend columns=2,
        legend entries={Recall@$K$, Recall@$(K+5)$},
        legend style={at={(0.45,1.35)},anchor=north,draw=none,font=\tiny,column sep=0.2cm}]
    \addlegendimage{only marks,line width=0.25mm,mark=o,color=my_blue}
    \addlegendimage{only marks,line width=0.25mm,mark=diamond,color=NSCcol1}
    \end{customlegend}
\end{tikzpicture}
\\[-\lineskip]
\vspace{-4mm}
\subfloat[{\em SteamGame}]{
\begin{tikzpicture}[scale=1,every mark/.append style={mark size=2pt}]
    \begin{axis}[
        height=\columnwidth/3.1,
        width=\columnwidth/2.85,
        ylabel={$K=20$},
        enlarge x limits=0.1,
        xmin=2, xmax=5,
        ymin=0.94,
        xtick={2.0,3.0,4.0,5.0},
        ytick={0.94,0.96,0.98,1},
        xticklabel style = {font=\tiny},
        yticklabel style = {font=\tiny},
        xticklabels={0.1,0.05,0.02,0.01},
        yticklabels={94\%,96\%,98\%,100\%},
        every axis y label/.style={font=\tiny,at={(current axis.north west)},right=6mm,above=0mm},
        legend style={fill=none,font=\small,at={(0.02,0.99)},anchor=north west,draw=none},
    ]
    \addplot[line width=0.25mm,mark=o,color=my_blue]  
        plot coordinates {
            (2,0.965)
            (3,0.9805)
            (4,0.99025)
            (5,0.99525)
        };
    \addplot[line width=0.25mm,mark=diamond,color=NSCcol1]  
        plot coordinates {
            (2,0.9985)
            (3,1)
            (4,1)
            (5,1)
        };

    \end{axis}
\end{tikzpicture}
\begin{tikzpicture}[scale=1,every mark/.append style={mark size=2pt}]
    \begin{axis}[
        height=\columnwidth/3.1,
        width=\columnwidth/2.85,
        ylabel={$K=100$},
        enlarge x limits=0.1,
        xmin=2, xmax=5,
        ymin=0.94,
        xtick={2.0,3.0,4.0,5.0},
        ytick={0.94,0.96,0.98,1},
        xticklabel style = {font=\tiny},
        yticklabel style = {font=\tiny},
        xticklabels={0.1,0.05,0.02,0.01},
        yticklabels={94\%,96\%,98\%,100\%},
        every axis y label/.style={font=\tiny,at={(current axis.north west)},right=6mm,above=0mm},
        legend style={fill=none,font=\small,at={(0.02,0.99)},anchor=north west,draw=none},
    ]
    \addplot[line width=0.25mm,mark=o,color=my_blue]  
        plot coordinates {
            (2,0.9549)
            (3,0.9777)
            (4,0.98965)
            (5,0.9948)
        };
    \addplot[line width=0.25mm,mark=diamond,color=NSCcol1]  
        plot coordinates {
            (2,0.97445)
            (3,0.9931)
            (4,0.9994)
            (5,1.0000)
        };

    \end{axis}
\end{tikzpicture}\hspace{0mm}
}
\subfloat[{\em Twitch}]{
\begin{tikzpicture}[scale=1,every mark/.append style={mark size=2pt}]
    \begin{axis}[
        height=\columnwidth/3.1,
        width=\columnwidth/2.85,
        ylabel={$K=20$},
        xmin=2, xmax=5,
        enlarge x limits=0.1,
        ymin=0.97,
        xtick={2.0,3.0,4.0,5.0},
        ytick={0.97,0.98,0.99,1},
        xticklabel style = {font=\tiny},
        yticklabel style = {font=\tiny},
        xticklabels={0.1,0.05,0.02,0.01},
        yticklabels={97\%,98\%,99\%,100\%},
        every axis y label/.style={font=\tiny,at={(current axis.north west)},right=6mm,above=0mm},
        legend style={fill=none,font=\small,at={(0.02,0.99)},anchor=north west,draw=none},
    ]
    \addplot[line width=0.25mm,mark=o,color=my_blue]  
        plot coordinates {
            (2,0.98525)
            (3,0.99175)
            (4,0.9955)
            (5,0.99825)
        };
    \addplot[line width=0.25mm,mark=diamond,color=NSCcol1]  
        plot coordinates {
            (2,1)
            (3,1)
            (4,1)
            (5,1)
        };
    
    \end{axis}
\end{tikzpicture}
\begin{tikzpicture}[scale=1,every mark/.append style={mark size=2pt}]
    \begin{axis}[
        height=\columnwidth/3.1,
        width=\columnwidth/2.85,
        ylabel={$K=100$},
        xmin=2, xmax=5,
        enlarge x limits=0.1,
        ymin=0.97,
        xtick={2.0,3.0,4.0,5.0},
        ytick={0.97,0.98,0.99,1},
        xticklabel style = {font=\tiny},
        yticklabel style = {font=\tiny},
        xticklabels={0.1,0.05,0.02,0.01},
        yticklabels={97\%,98\%,99\%,100\%},
        every axis y label/.style={font=\tiny,at={(current axis.north west)},right=6mm,above=0mm},
        legend style={fill=none,font=\small,at={(0.02,0.99)},anchor=north west,draw=none},
    ]
    \addplot[line width=0.25mm,mark=o,color=my_blue]  
        plot coordinates {
            (2,0.97675)
            (3,0.9888)
            (4,0.995)
            (5,0.9971)
        };
    \addplot[line width=0.25mm,mark=diamond,color=NSCcol1]  
        plot coordinates {
            (2,0.9915)
            (3,0.99905)
            (4,1.00)
            (5,1.00)
        };
    
    \end{axis}
\end{tikzpicture}\hspace{0mm}
}
\vspace{-3ex}
\end{small}
\caption{Top-$K$ recall by \texttt{GNS} when varying $\epsilon_r$} \label{fig:Fiter-Refine}
\vspace{-1ex}
\end{figure}

\vspace{-2ex}
\begin{figure}[!h]
    \centering
    \includegraphics[width=0.95\columnwidth]{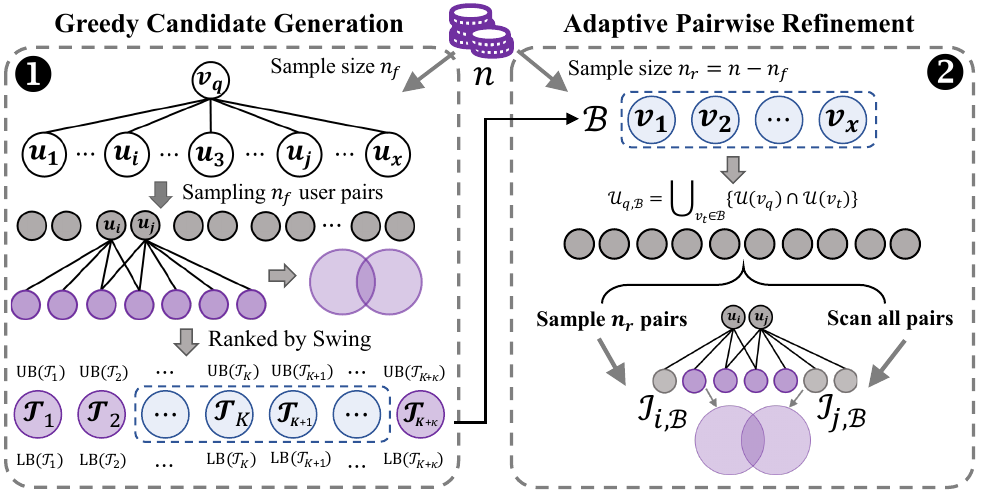}
    \vspace{-3ex}
    \caption{Illustration of \kalgo{}.}\label{fig:K-ASC}
    \vspace{-3ex}
\end{figure}

\subsection{Solution Overview}\label{sec:sol-overview}
Akin to \algo{}, \kalgo{} also adaptively selects \texttt{USS} or \texttt{GNS} for query processing. 
The distinction is that, instead of employing \texttt{GNS} for accurate estimation of Swing scores for all target items, \kalgo{} focuses on identifying a subset of items that are most likely to be the top-$K$ items directly.

The high-level idea is inspired by our empirical findings in Figure~\ref{fig:Fiter-Refine}, from which we can observe that, on both {\em SteamGame} and {\em Twitch} datasets, running \texttt{GNS} with a larger relative error threshold $\epsilon$ (and thus fewer random samples) yields a top-$(K+5)$ list with recall close to $100\%$. In contrast, achieving the same recall for the top-$K$ items returned directly by \texttt{GNS} requires a substantially smaller $\epsilon$, and therefore many more samples.
For instance, when $K=20$, attaining $99.8\%$ recall requires $\epsilon=0.1$ for the top-$25$ results, but $\epsilon=0.01$ for the top-$20$ results, indicating a $100$-fold increase in the number of samples in \texttt{GNS} as the sample size $n$ scales as $\frac{1}{\epsilon^2}$ (Theorem~\ref{lem:naive-mc}).
This suggests that most items in the exact top-$K$ set can be identified accurately at a low cost, whereas only a few ``borderline items'' dominate the sampling overhead.
Based on this observation, we propose to identify these borderline items and allocate additional samples specifically to them for Swing estimation, rather than expending unnecessary samples on easy items that have already been estimated with sufficient confidence.

As summarized in Algorithm~\ref{alg:KASC}, we invoke \texttt{USS} for Swing computation if Eq.~\eqref{eq:select} is satisfied as in \algo{} (Lines 1-3), and directly utilize all items with non-zero approximate Swing values as top-$K$ candidates $\mathcal{T}$ (Line 4).
However, when \texttt{GNS} is selected, \kalgo{} first greedily generates a set of candidates $\mathcal{T}$ ($|\mathcal{T}|=K+\kappa$ and $\kappa\ll K$) with a subset $\mathcal{B}\subset \mathcal{T}$ containing borderline candidates for being the top-$K$ using Algorithm~\ref{alg:candidate} by drawing $n_f=\lceil \beta\cdot n \rceil$ (typically $\beta=0.1$) random samples (Line 7). Subsequently, the remaining sampling budget $n_r=n-n_f$ is allocated to adaptively refine the Swing scores of borderline items in $\mathcal{B}$ by Algorithm~\ref{alg:pair} (Line 8).
Particularly, by enforcing $n_f+n_r=n$, the total number of sampling rounds in \kalgo{} remains bounded by $n$, ensuring its sampling cost under control.
Finally, the candidate items in $\mathcal{T}$ are sorted based on the latest Swing estimations for obtaining the final ranking and top-$K$ result $\mathcal{Y}$ (Line 9).
The basic idea is illustrated in Figure~\ref{fig:K-ASC}.

\SetKwIF{SameIf}{SameElseIf}{SameElse}
  {\update{\textnormal{Lines 1--3 are the same as Algorithm~\ref{alg:ASC}}}}
  {}{else if}{else}{}

\begin{algorithm}[!t]
\small
\caption{\kalgo{}}\label{alg:KASC}
\KwIn{$\G=(\U\cup\I,\EDG)$, query item $v_q$, error threshold $\epsilon$, and integers $K, \kappa$}
\KwOut{Top-$K$ items $\mathcal{Y}$}
{\nonl
\SameIf{}{%
    \setcounter{AlgoLine}{3}
    \update{$\mathcal{T}\gets
    \{v_t\in \I\setminus\{v_q\}\mid
    \widetilde{\textsf{sw}}(v_q,v_t)>0\}$\;}
}
\SameElse{%
    $n_f\gets \lceil \beta\cdot n \rceil;\ 
      n_r\gets n-n_f$\;

    $\mathcal{T},\mathcal{B}\gets$
    Algorithm~\ref{alg:candidate} with $v_q,n_f$\;

    Invoke Algorithm~\ref{alg:pair} with $\mathcal{B},n_r$\;
}}

$\mathcal{Y} \gets {\operatorname*{argtop}K}_{v_t\in \mathcal{T}}{\ \widetilde{\textsf{sw}}(v_q,v_t)}$\;
\end{algorithm}

\subsection{Greedy Candidate Generation}\label{sec:cand-gen}
The first task is to find the $K+\kappa$ candidates $\mathcal{T}$ for the final top-$K$ items and {\em borderline} items $\mathcal{B}$ therein, where $\kappa\ll K$ specifies the small number of the extra items that are potentially in the exact top-$K$. 
As aforementioned, we can easily generate $\mathcal{T}$ with roughly approximate Swing scores from a small number of samples.
To identify $\mathcal{B}$ from $\mathcal{T}$, our idea is to leverage the {\em empirical Bernstein inequality}~\cite{audibert2007tuning} in Theorem~\ref{lem:bernstein} to 
\update{construct tight item-specific confidence bounds for the estimated Swing scores using the empirical variances.}
With these bounds at hand, we are able to determine the borderline items that might fall into or outside the top-$K$ list.

\begin{algorithm}[!t]
\small
\caption{Greedy Candidate Generation}\label{alg:candidate}
\KwIn{$\G=(\U\cup\I,\EDG)$, query item $v_q$, and sample size $n_f$}
\KwOut{$\mathcal{T}$, $\mathcal{B}$}
{\nonl Lines 1-4 are the same as Lines 2-5 in Algorithm~\ref{alg:group_sample}\;}
\setcounter{AlgoLine}{4}
Initialize a counter $r(v_t)\ \forall{v_t\in \I}$\;
\For{$(u_i,u_j)\in S$}{
$\I_{i,j}\gets \I(u_i) \cap \I(u_j)$\;
\For{$v_t\in \I_{i,j}$}{
$r(v_t) \gets r(v_t) + S(u_i,u_j)$\;
Update $\widetilde{\textsf{sw}}(v_q,v_t)$ according to Eq.~\eqref{eq:update-mu}\;
Update $\sigma(v_t)$ according to Eq.~\eqref{eq:update-sigma}\;
}}
\For{$v_t \in \I$ s.t. $r(v_t)\neq 0$}{
$\widetilde{\textsf{sw}}(v_q,v_t) \gets \frac{r(v_t)}{n_f}\cdot \widetilde{\textsf{sw}}(v_q,v_t)$\;
$\sigma(v_t)\gets \sigma(v_t)+\frac{(n_f-r(v_t))(\frac{r(v_t)}{n_f}\cdot\widetilde{\textsf{sw}}(v_q,v_t)^2-\sigma(v_t))}{n_f-1}$\;
}
$\mathcal{T} \gets {\operatorname*{argtop-}(K+\kappa)}_{v_t\in \I\setminus\{v_q\}}{\ \widetilde{\textsf{sw}}(v_q,v_t)}$\;
Compute $\textsf{UB}(v)$ and $\textsf{LB}(v)\ \forall{v\in \mathcal{T}}$\;
\For{$i\gets K$ \KwTo $1$}{
\lIf{$\textnormal{\textsf{UB}}(\mathcal{T}_{K+1})\ge \textnormal{\textsf{LB}}(\mathcal{T}_{i})$}{
Add $\mathcal{T}_{i}$ to $\mathcal{B}$
}
}
\For{$i\gets 1$ \KwTo $\kappa$}{
\lIf{$\textnormal{\textsf{UB}}(\mathcal{T}_{K+i})\ge \textnormal{\textsf{LB}}(\mathcal{T}_{K})$}{
Add $\mathcal{T}_{K+i}$ to $\mathcal{B}$
}
}
\end{algorithm}

\begin{theorem}[\cite{audibert2007tuning}]\label{lem:bernstein}
Let $Z_1,\dotsc,Z_{n}$ be real-valued i.i.d. random variables, such that $0\le Z_i \le \zeta$. We denote by $Z=\frac{1}{n}\sum_{i=1}^{n}{Z_i}$ the empirical mean of these variables and $\overline{\sigma}=\frac{1}{n}\sum_{i=1}^{n}{(Z_i-Z)^2}$ their empirical variance. Then, we have
\begin{small}
\begin{equation}\label{eq:fx}
\begin{gathered}
\mathbb{P}\left[|Z-\mathbb{E}[Z]|\ge \Phi(n,\overline{\sigma},\zeta,\delta)\right]\le \delta,\ \textnormal{where}\\
\Phi(n,\overline{\sigma},\zeta,\delta)=\sqrt{\frac{2{\overline{\sigma}}\log{(3/\delta)}}{n}}+\frac{3\zeta\log{(3/\delta)}}{n}.
\end{gathered}
\end{equation}
\end{small}
\end{theorem}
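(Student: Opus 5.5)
The plan is the standard two-step argument of \cite{audibert2007tuning}. First apply Bernstein's inequality with the \emph{true} variance. Then replace the true variance by the empirical one $\overline{\sigma}$ through a second, one-sided concentration bound. A union bound over three events, each of probability $\delta/3$, finishes the proof. By rescaling $Z_i\mapsto Z_i/\zeta$ (which scales $Z$ by $1/\zeta$ and $\overline{\sigma}$ by $1/\zeta^2$, and $\Phi$ scales accordingly) we may assume $\zeta=1$. Write $\mu=\mathbb{E}[Z_i]$, $\sigma^2=\mathrm{Var}(Z_i)$ and $x=\log(3/\delta)$.

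\textbf{Step 1 (Bernstein with true variance).} Since $|Z_i-\mu|\le 1$, the two one-sided classical Bernstein inequalities, each at level $e^{-x}=\delta/3$, give with probability at least $1-2\delta/3$
\begin{equation*}
|Z-\mu|\le \sqrt{\frac{2\sigma^2 x}{n}}+\frac{x}{3n}.
\end{equation*}

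\textbf{Step 2 (upper bound on $\sigma$ by $\overline{\sigma}$).} Apply a one-sided Bernstein inequality to the variables $Y_i=(Z_i-\mu)^2\in[0,1]$. They have mean $\sigma^2$ and variance at most $\mathbb{E}[Y_i^2]\le \mathbb{E}[Y_i]=\sigma^2$. Hence with probability at least $1-\delta/3$,
\begin{equation*}
\sigma^2\le \frac{1}{n}\sum_{i=1}^n (Z_i-\mu)^2+\sqrt{\frac{2\sigma^2 x}{n}}+\frac{x}{3n}.
\end{equation*}
We then use the identity $\frac1n\sum_i (Z_i-\mu)^2=\overline{\sigma}+(Z-\mu)^2$. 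This makes the inequality quadratic in $\sigma$, and solving it gives $\sigma\le \sqrt{\overline{\sigma}+(Z-\mu)^2}+c_1\sqrt{x/n}$ for an explicit small constant $c_1$. We bound $(Z-\mu)^2$ on the event of Step 1; it is of order $x/n$ because $\sigma\le 1/2$. We then use $\sqrt{a+b}\le\sqrt a+\sqrt b$. The result is $\sigma\le\sqrt{\overline{\sigma}}+c_2\sqrt{x/n}$.

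\textbf{Step 3 (combine).} Substitute the bound of Step 2 into Step 1:
\begin{equation*}
\sqrt{\frac{2\sigma^2x}{n}}\le\sqrt{\frac{2\overline{\sigma}x}{n}}+\sqrt 2\,c_2\,\frac{x}{n}.
\end{equation*}
This yields $|Z-\mu|\le\sqrt{2\overline{\sigma}x/n}+(\sqrt2 c_2+1/3)\,x/n$ on the intersection of the events, which by the union bound has probability at least $1-\delta$. Since $\mathbb{E}[Z]=\mu$, undoing the scaling gives exactly the form of $\Phi(n,\overline{\sigma},\zeta,\delta)$.

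The main obstacle is the constant bookkeeping in Step 2. We need $\sqrt2 c_2+1/3\le 3$, and that requires handling the $(Z-\mu)^2$ term and the quadratic in $\sigma$ carefully rather than crudely. If a direct calculation overshoots, the first fallback is to use the cruder bound $(Z-\mu)^2\le |Z-\mu|$ together with $\sigma^2\le 1/4$, and to exploit the slack in the case $x/n\ge 1/3$. In that case the claim is trivial because $3x/n\ge 1\ge|Z-\mu|$, so we may assume $x/n<1/3$ throughout, which keeps the constants under control.
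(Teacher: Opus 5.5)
The paper gives no proof of this statement; it is quoted from \cite{audibert2007tuning}, so your sketch has to stand on its own. The skeleton is right: two one-sided Bernstein bounds with the true variance, a lower-tail bound for $\frac1n\sum_i(Z_i-\mu)^2$, and a union bound over three events of probability $\delta/3$ (which is where $\log(3/\delta)$ comes from). The rescaling and Step~1 are also fine.

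\textbf{The gap.} Beyond the standard empirical-Bernstein template, the whole content of the theorem is the constant $3$ in front of $\zeta\log(3/\delta)/n$. That is exactly what you leave open, and Step~2 as sketched does not deliver it. Take $\zeta=1$ and $u=x/n$.
\begin{itemize}
\item Solving your quadratic gives $c_1=\sqrt{1/2}+\sqrt{5/6}$.
\item Splitting $\sqrt{\overline\sigma+(Z-\mu)^2}\le\sqrt{\overline\sigma}+|Z-\mu|$ and using $|Z-\mu|\le\sqrt{u/2}+u/3$ adds another $\sqrt{1/2}$.
\item Step~3 then gives the coefficient $2+\sqrt{5/3}+\frac13\approx3.62$, plus a $u^{3/2}$ term, not $3$.
\end{itemize}
The split alone costs a full $1\cdot u$. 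The reason is that $|Z-\mu|$ is of the same order $\sqrt u$ as the correction you are controlling, whereas under the square root $(Z-\mu)^2$ only perturbs $\sigma^2$ by a relative $O(u)$. Your fallback $(Z-\mu)^2\le|Z-\mu|$ goes the wrong way. It puts an $O(\sqrt u)$ term under the root and yields an $O(u^{3/4})$ error, which is not of the form $Cu$ at all. The reduction to $u<1/3$ is valid but nowhere near enough.

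\textbf{Step~2 itself is too weak.} No bookkeeping rescues your Step~2 inequality as stated, because its linear term $x/(3n)$ is too large. Take $a:=\sqrt{2u}=0.36$, $\sigma=0.481$, $\overline\sigma=0$ and $|Z-\mu|=a\sigma+a^2/6\approx0.1948$. Then all three of your inequalities hold and $\sigma\le\frac12$, yet $|Z-\mu|>\sqrt{2u\overline\sigma}+3u=0.1944$.

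\textbf{What is missing} is a sharper variance bound together with a case split, instead of linearization.
\begin{itemize}
\item Since $0\le Y_i=(Z_i-\mu)^2\le1$, the inequality $e^{-y}\le1-y+y^2/2$ for $y\ge0$ gives $\mathbb{E}e^{\lambda(\sigma^2-Y_i)}\le e^{\lambda^2\sigma^2/2}$ for $\lambda\ge0$. Hence with probability at least $1-\delta/3$ you get $\sigma^2\le\overline\sigma+(Z-\mu)^2+a\sigma$, with no linear term. Bernstein with the one-sided range $\sigma^2-Y_i\le\frac14$ also works, with slightly different arithmetic.
\item If $\sigma\le\frac43a$, Step~1 alone gives $|Z-\mu|\le a\sigma+\frac{a^2}{6}\le\frac32a^2=3u$.
\item Otherwise $a<\frac38$, because $\sigma\le\frac12$. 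The function $\sigma\mapsto\frac53\sigma-\frac{16}{9}a-a(\sigma+\frac a6)^2$ is concave and nonnegative at both endpoints of $[\frac43a,\frac12]$. This is equivalent to $\sigma^2-a\sigma-(a\sigma+\frac{a^2}{6})^2\ge(\sigma-\frac43a)^2$.
\item Hence $\sqrt{\overline\sigma}\ge\sigma-\frac43a$, and therefore $|Z-\mu|\le a\sqrt{\overline\sigma}+\frac32a^2=\sqrt{2\overline\sigma u}+3u$.
\end{itemize}
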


\stitle{Algorithm} Algorithm~\ref{alg:candidate} displays the pseudo-code for finding the top-$K$ candidates $\mathcal{T}$ based upon \texttt{GNS}. Algorithm~\ref{alg:candidate} begins by sampling $n_f$ user-pairs and forming the map $S$ as in \texttt{GNS} (Lines 1-4). For each distinct user-pair $(u_i,u_j)$ in $S$, distinct from \texttt{GNS}, we iteratively and dynamically update the empirical mean $\widetilde{\textsf{sw}}(v_q,v_t)$ and variance $\sigma(v_t)$ for each target item $v_t$ in the way analogous to Welford's algorithm~\cite{welford1962note} that facilitates the efficient update (Lines 6-11). To be specific, Algorithm~\ref{alg:candidate} records and incrementally maintains the number of occurrences $r(v_t)$ of target item $v_t$ observed from user-pair samples at Line 9. Based thereon, we update $\widetilde{\textsf{sw}}(v_q,v_t)$ as follows:
\begin{small}
\begin{equation}\label{eq:update-mu}
\begin{gathered}
\widetilde{\textsf{sw}}(v_q,v_t) \gets \widetilde{\textsf{sw}}(v_q,v_t) + \frac{S(u_i,u_j)\cdot \Delta}{r(v_t)},\\
\Delta \gets \frac{ d(v_q)\cdot (d(v_q)-1)}{\alpha+|\I_{i,j}|} - \widetilde{\textsf{sw}}(v_q,v_t).
\end{gathered}
\end{equation}
\end{small}
Accordingly, 
the empirical variance $\sigma(v_t)$ is updated to
\begin{small}
\begin{equation}\label{eq:update-sigma}
\sigma(v_t) \gets \sigma(v_t) + \frac{S(u_i,u_j)\cdot \left(\frac{r(v_t)-S(u_i,u_j)}{r(v_t)}\Delta^2-\sigma(v_t)\right)}{r(v_t)-1}.
\end{equation}
\end{small}
After iterating over $S$, Algorithm~\ref{alg:candidate} applies a final correction to the empirical mean $\widetilde{\textsf{sw}}(v_q,v_t)$ and variance $\sigma(v_t)$ for each sampled target item $v_t$, i.e., item $v_t\in \I$ such that $r(v_t)\neq 0$ at Lines 12-14. 

Next, Algorithm~\ref{alg:candidate} proceeds to identify the confident and borderline candidates based on the above estimations and bounds. Firstly, we find $K+\kappa$ items $\mathcal{T}$ with the highest approximate Swing scores at Line 15, followed by calculating the upper and lower bounds of each item $v\in \mathcal{T}$ in Line 16 as follows: 
\begin{equation}\label{eq:up-low-bound}
\begin{gathered}
\textsf{UB}(v) \gets \widetilde{\textsf{sw}}(v_q,v_t) + \Phi(n_f,\sigma(v_t),\zeta,\delta),\\
\textsf{LB}(v) \gets \widetilde{\textsf{sw}}(v_q,v_t) - \Phi(n_f,\sigma(v_t),\zeta,\delta).
\end{gathered} 
\end{equation}
Then, we examine the sorted items in $\mathcal{T}$ using these bounds. More concretely, 
among the current top-$K$ items in $\mathcal{T}$, we mark an item $\mathcal{T}_i$ as borderline if $\textnormal{\textsf{LB}}(\mathcal{T}_{i}) \le \textnormal{\textsf{UB}}(\mathcal{T}_{K+1})$ (Lines 17-18).
This condition implies that these items might fall outside the exact top-$K$ list, and hence, should be added to $\mathcal{B}$ as borderline candidates.
On the other hand, at Lines 19-20, Algorithm~\ref{alg:candidate} further examines the remaining $\kappa$ items in $\mathcal{T}$, among which those with upper bound satisfying $\textnormal{\textsf{UB}}(\mathcal{T}_{K+i})\ge \textnormal{\textsf{LB}}(\mathcal{T}_{K})$ are also added to $\mathcal{B}$ as they are likely to be in the exact top-$K$ items.

\stitle{Complexity Analysis} 
Lines 1-11 extend \texttt{GNS} by additionally maintaining the occurrence count $r(v_t)$ and estimator variance $\sigma(v_t)$ for each target item $v_t$. 
Since each update in Lines 9--11 takes $O(1)$ time, this stage has complexity 
$O\left(n_f+\min\left\{n_f,\frac{d(v_q)(d(v_q)-1)}{2}\right\}\cdot \frac{D(v_q)}{d(v_q)}\right)$, 
following \S~\ref{sec:sampling} with $n$ replaced by $n_f$. 
Because the set of valid target items in Line 12 is collected from Line 8 and is bounded by the same term, both the correction step (Lines 12-14) and the top-$(K+\kappa)$ selection procedure (Line 15), which runs in linear time in the number of valid target items, have the same complexity. Since the candidate set $\mathcal{T}$ is restricted to the top-$(K+\kappa)$ items, the borderline candidate detection step (Lines 16-20) requires $O(K)$ time. 
Therefore, the overall complexity is 
$O\left(K+n_f+\min\left\{n_f,\frac{d(v_q)(d(v_q)-1)}{2}\right\}\cdot \frac{D(v_q)}{d(v_q)}\right)$.

\begin{figure}[!t]
    \centering
    \includegraphics[width=0.95\columnwidth]{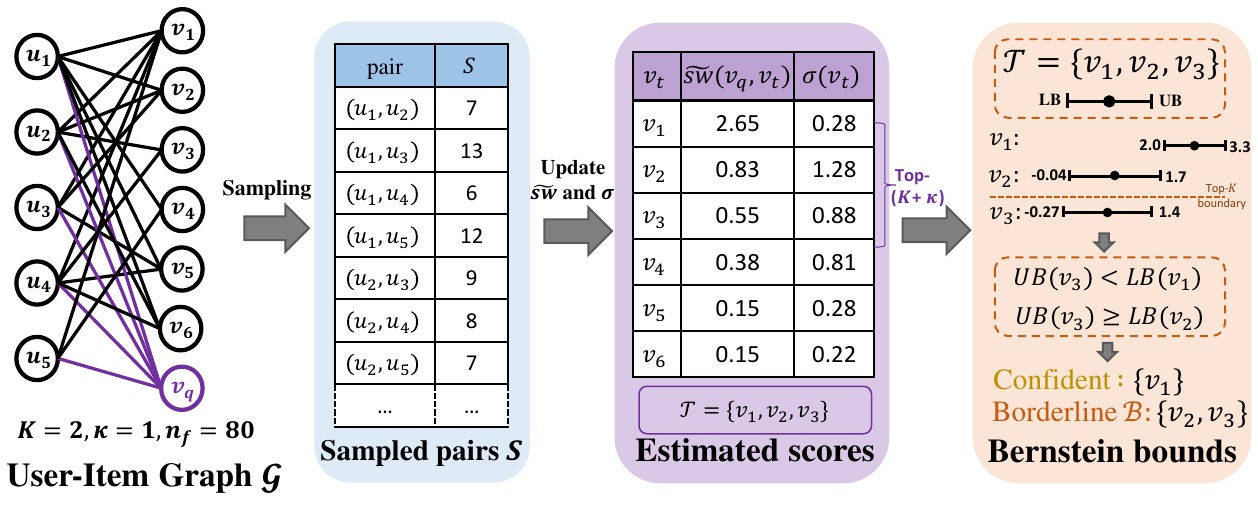}
    \vspace{-3ex}
    \caption{\update{A running example for Algorithm~\ref{alg:candidate}.}}\label{fig:candidate-example}
\end{figure}

\begin{example}
\update{
Figure~\ref{fig:candidate-example} presents a running example of
Greedy Candidate Generation with $K=2$, $\kappa=1$, and $n_f=80$.
Suppose that sampling the user pairs yields the frequency map $S$ shown in the figure.
By updating the approximate Swing scores and empirical variances
according to Eqs.~\eqref{eq:update-mu} and~\eqref{eq:update-sigma},
we obtain estimated Swing scores
$[2.65,0.83,0.55,0.38,0.15,0.15]$ for
$v_1$--$v_6$, respectively, leading to the top-$(K+\kappa)$ candidate
set $\mathcal{T}=\{v_1,v_2,v_3\}$.
Based on their empirical variances, the resulting Bernstein bounds of
$v_1$, $v_2$, and $v_3$ are
$[2.0,3.3]$, $[-0.04,1.7]$, and $[-0.27,1.4]$, respectively.
Since $\textsf{UB}(v_3)<\textsf{LB}(v_1)$ but
$\textsf{UB}(v_3)\ge\textsf{LB}(v_2)$,
$v_1$ is confidently retained, while $v_2$ and $v_3$ remain
uncertain around the top-$K$ boundary.
Thus, the borderline candidates
$\mathcal{B}=\{v_2,v_3\}$ require further refinement.
}
\end{example}

\subsection{Adaptive Pairwise Refinement}\label{sec:PR}

\begin{algorithm}[!t]
\small
\caption{Adaptive Pairwise Refinement}\label{alg:pair}
\KwIn{$\G=(\U\cup\I,\EDG)$, borderline items $\mathcal{B}$, sample size $n_r$}
\KwOut{$\widetilde{\textsf{sw}}(v_q,v_t)\ \forall{v_t\in \B}$.}
\If{$n_r \ge |\U(v_q)|/2$}{
    $\I_{i,\mathcal{B}} \gets \I(u_i) \cap \mathcal{B}\ \forall{u_i \in \U(v_q)}$\;
    $\U_{q,\mathcal{B}} \gets \{u_i\in \U(v_q) \mid |\I_{i,\mathcal{B}}|>0\}$\;
\If{$\frac{|\U_{q,\mathcal{B}}|\cdot(|\U_{q,\mathcal{B}}|-1)}{2} \le n_r$}{
    \For{$u_i\neq u_j\in \U_{q,\mathcal{B}}$}{
        \For{$v_t \in \I_{i,\mathcal{B}} \cap \I_{j,\mathcal{B}}$}{
            Increase $\widetilde{\textsf{sw}}(v_q,v_t)$ by $\frac{1}{\alpha+|\I_{i,j}|}$\;
        }
    }
}
\Else{
    \For{$r \gets 1$ to $n_r$}{
        Sample $u_i\neq u_j\in \U_{q,\mathcal{B}}$\;
        \For{$v_t \in \I_{i,\mathcal{B}} \cap \I_{j,\mathcal{B}}$}{
            Increase $\widetilde{\textsf{sw}}(v_q,v_t)$ by $\frac{|\U_{q,\mathcal{B}}|\cdot(|\U_{q,\mathcal{B}}|-1)}{n_r(\alpha+|\I_{i,j}|)}$\;
        }
    }
}
}
\Else{
    \For{$r \gets 1$ to $n_r$}{
        Sample $u_i\neq u_j\in \U(v_q)$ s.t. $d(u_i),d(u_j)\ge 2$\;
        \For{$v_t\in \mathcal{B}$ and $v_t\in  \I(u_i)$ and $v_t \in \I(u_j)$}{
                Increase $\widetilde{\textsf{sw}}(v_q,v_t)$ by $\frac{|\U(v_q)|\cdot(|\U(v_q)|-1)}{n_r(\alpha+|\I_{i,j}|)}$\;
        }
    }
}
\end{algorithm}

After obtaining the borderline items $\mathcal{B}$ via Algorithm~\ref{alg:candidate}, our job is to refine the approximate Swing of each item pair $(v_q,v_t)\ \forall{v_t\in \mathcal{B}}$. 
\update{We refine only these candidates by restricting computation to their relevant user pairs, substantially reducing the refinement search space, and
Algorithm~\ref{alg:pair} describes the algorithmic steps.}
By taking as input $\mathcal{B}$ and the remaining sample budget $n_r$, we adaptively select the Swing estimation approach by first checking the condition of $n_r \ge |\U(v_q)|/2$ at Line 1.
When it fails, i.e., $d(v_q)$ exceeds the current sample budget $n_r$, extracting the valid user set from $\U(v_q)$ for Swing computation is costly. Instead, we resort to sampling user pairs like \texttt{GNS}, followed by efficient membership checks within $\mathcal{B}$, $\I(u_i)$, and $\I(u_j)$ to avert expensive set intersections (Lines 14-17).

Conversely, Algorithm~\ref{alg:pair} will execute Lines 2-12. 
Different from Algorithms \ref{alg:basic}-\ref{alg:ASC} where the target items are unknown and probing $O(|\U(v_q)|\cdot(|\U(v_q)|-1))$ user pairs is required, 
computing the Swing for $v_q$ and items in $\mathcal{B}$ merely involves user pairs in $\U_{q,\mathcal{B}}=\bigcup_{v_t\in\mathcal{B}}\{\U(v_q)\cap \U(v_t)\}$, 
which can be constructed by Lines 2-3. The search space is significantly reduced to $O({|\U_{q,\mathcal{B}}|\cdot (|\U_{q,\mathcal{B}}|-1)})$ as $|\U_{q,\mathcal{B}}|$ is often dwarfed by $|\U(v_q)|$.
Notice that $\mathcal{B}$ is typically small (e.g., fewer than 32), the neighbor list can be compressed into a bitmap represented by a 32-bit integer, expediting intersection computation. 
After obtaining $\U_{q,\mathcal{B}}$, Algorithm~\ref{alg:pair} either computes the Swing exactly using the brute-force method (Lines 5-7) or estimates the score in a randomized fashion (Lines 9-12).

More specifically, if the estimated runtime of the brute-force method is less than that of the randomized approach (Line 4), Algorithm~\ref{alg:pair} enumerates all distinct user pairs in the common neighbor set $\U_{q,\mathcal{B}}$ of $v_q$ and items in $\mathcal{B}$, and increases the Swing $\widetilde{\textsf{sw}}(v_q,v_t)$ by $\frac{1}{\alpha+|\I_{i,j}|}$. Note that the set intersection cardinality $|\I_{i,j}|$ can be directly and efficiently computed via \texttt{QFilter++}.
Otherwise, the randomized method is a better choice, Algorithm~\ref{alg:pair} estimates $\textsf{sw}(v_q,v_t)$ by randomly picking $n_r$ user pairs from $\U_{q,\mathcal{B}}$, each of which is sampled uniformly at random with a probability of $\frac{1}{|\U_{q,\mathcal{B}}|\cdot (|\U_{q,\mathcal{B}}|-1)}$. 
For each sampled user pair $(u_i,u_j)$, 
we increase $\widetilde{\textsf{sw}}(v_q,v_t)$ by $ \frac{|\U_{q,t}|\cdot (|\U_{q,t}|-1)}{n_r(\alpha+|\I_{i,j}|)}$.

\stitle{Complexity Analysis} 
When $n_r \ge |\mathcal{U}(v_q)|/2$, the algorithm runs in 
$O(d(v_q)\cdot|\mathcal{B}|)$ time for filtering (Lines 2-3), and the adaptive part (Lines 4-12) takes $O\!\left(\min\!\left\{n_r,\frac{d(v_q)(d(v_q)-1)}{2}\right\}\cdot\frac{D(v_q)}{d(v_q)}\right)$ time. 
When $n_r < |\mathcal{U}(v_q)|/2$, each iteration takes 
$O\!\left(|\mathcal{B}| + \frac{D(v_q)}{d(v_q)}\right)$ time, leading to a total of 
$O\!\left(n_r|\mathcal{B}| + n_r\frac{D(v_q)}{d(v_q)}\right)$ for Lines 14-17. 
By combining both cases and using  the fact $|\mathcal{B}| \le K$, the overall complexity is
$O\!\left(K\cdot \min\{n_r,d(v_q)\} + \min\!\left\{n_r,\frac{d(v_q)(d(v_q)-1)}{2}\right\}\cdot\frac{D(v_q)}{d(v_q)}\right)$.

\begin{figure}[!t]
    \centering
    \includegraphics[width=0.95\columnwidth]{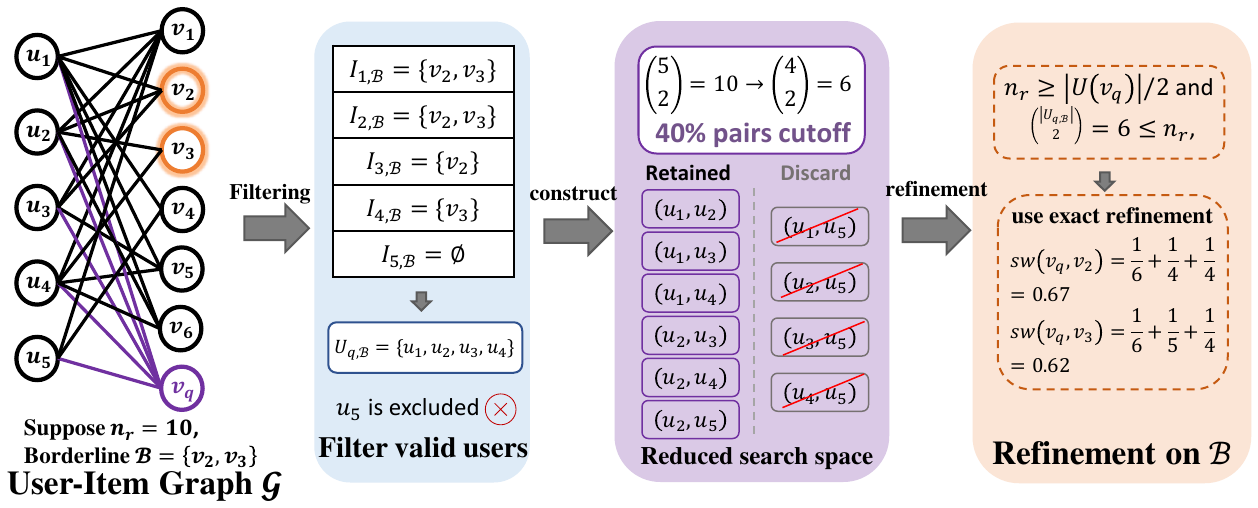}
    \vspace{-3ex}
    \caption{\update{A running example for Algorithm~\ref{alg:pair}.}}\label{fig:refinement-example}
    \vspace{-1ex}
\end{figure}

\begin{example}
\update{
Figure~\ref{fig:refinement-example} presents a running example of
Adaptive Pairwise Refinement with $n_r=10$, $\alpha=1$, and
$\mathcal{B}=\{v_2,v_3\}$.
Since $|\U(v_q)|=5$ and
$n_r\ge |\U(v_q)|/2$, Algorithm~\ref{alg:pair} first filters the
users according to their connections to the borderline items.
This yields
$\I_{1,\mathcal{B}}=\I_{2,\mathcal{B}}=\{v_2,v_3\}$,
$\I_{3,\mathcal{B}}=\{v_2\}$,
$\I_{4,\mathcal{B}}=\{v_3\}$, and
$\I_{5,\mathcal{B}}=\emptyset$, and hence
$\U_{q,\mathcal{B}}=\{u_1,u_2,u_3,u_4\}$.
As a result, the user-pair search space shrinks from
$\binom{5}{2}=10$ to $\binom{4}{2}=6$, i.e., by $40\%$.
Since $\binom{|\U_{q,\mathcal{B}}|}{2}=6\le n_r$,
Algorithm~\ref{alg:pair} (Lines 4--7) performs exact refinement over the reduced
search space.
For $v_2$, the contributing user pairs are
$(u_1,u_2)$, $(u_1,u_3)$, and $(u_2,u_3)$, yielding
$\textsf{sw}(v_q,v_2)=1/6+1/4+1/4=0.67$.
Similarly,
$\textsf{sw}(v_q,v_3)=1/6+1/5+1/4=0.62$.
Thus, all borderline-item Swing scores are exactly refined, excluding irrelevant pairs involving $u_5$.
}
\end{example}

\subsection{Theoretical Analyses}\label{sec:ASC-theory}

\stitle{Correctness Analysis}
First, if Inequality~\eqref{eq:select} holds ($d(v_q)$ is relatively low), Algorithm~\ref{alg:KASC} then takes the branch in Lines 1-4 and Algorithm~\ref{alg:subset_sample} is invoked with $\epsilon$, the resulting Swing estimates satisfy the $(\epsilon,\lambda)$-approximation guarantee according to Theorem~\ref{lem:USS}.
Otherwise, \kalgo{} handles high-degree query items using Algorithms~\ref{alg:candidate} and~\ref{alg:pair}.
Lemma~\ref{lem:mean-var} ensures the correctness of sample mean and variance maintained in Algorithm~\ref{alg:candidate}, which in turn justifies the lower and upper bounds in Eq.~\eqref{eq:up-low-bound} by harnessing Theorem~\ref{lem:bernstein}.

\begin{lemma}\label{lem:mean-var}
$\forall{v_t}\in \I$, 
$\widetilde{\textsf{sw}}(v_q,v_t)$ and $\sigma(v_t)$ in Algorithm~\ref{alg:candidate} are the sample mean and variance over the $n_f$ sampled observations.
\end{lemma}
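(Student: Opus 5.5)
The plan is to identify the per-sample observations explicitly, show that the loop in Lines 6--11 is a grouped Welford recursion over the \emph{nonzero} observations, and then show that Lines 12--14 merge in the zero observations by the standard parallel (Chan-style) combination of means and variances.

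\textbf{Observations.} Fix $v_t$. The $r$-th sampled pair $(u_i,u_j)$ contributes the observation
\[
Z_r=\frac{d(v_q)(d(v_q)-1)}{\alpha+|\I_{i,j}|}\cdot \mathbb{1}[v_t\in\I_{i,j}].
\]
The $n_f$ observations then split into two kinds:
\begin{itemize}
\item the $r(v_t)$ nonzero ones, where each distinct pair in $S$ with $v_t\in\I_{i,j}$ contributes $S(u_i,u_j)$ identical copies of the value $x_{i,j}=\frac{d(v_q)(d(v_q)-1)}{\alpha+|\I_{i,j}|}$;
\item $n_f-r(v_t)$ zeros.
\end{itemize}
By Line 9, after the loop $r(v_t)$ is exactly the number of nonzero observations.

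\textbf{Step 1: invariant for the loop.} I will prove by induction over the iterations touching $v_t$ the following invariant. If $r$ denotes the current $r(v_t)$, then $\widetilde{\textsf{sw}}(v_q,v_t)$ equals the mean $\mu_r$ of the $r$ nonzero observations processed so far. Moreover, $\sigma(v_t)=M_r/(r-1)$, where $M_r=\sum (x-\mu_r)^2$ is the sum of squared deviations over those observations.

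For the inductive step, add a block of $S=S(u_i,u_j)$ copies of $x$ to $r-S$ previous observations with mean $\mu$. Write $\Delta=x-\mu$, which is exactly $\Delta$ in Eq.~\eqref{eq:update-mu}.
\begin{itemize}
\item The new mean is $\mu+S\Delta/r$, which is Eq.~\eqref{eq:update-mu}.
\item The block itself has zero internal variance, so the two-group combination formula gives $M_r=M_{r-S}+\frac{S(r-S)}{r}\Delta^2$.
\end{itemize}
Dividing by $r-1$ and substituting $M_{r-S}=(r-S-1)\sigma_{\text{old}}$ yields
\[
\sigma_{\text{new}}=\sigma_{\text{old}}+\frac{S\left(\frac{r-S}{r}\Delta^2-\sigma_{\text{old}}\right)}{r-1},
\]
which is precisely Eq.~\eqref{eq:update-sigma}.

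The base case is the first block ($r=S$). There the $\Delta^2$ coefficient $r-S$ vanishes, and the mean becomes $x$, as required.

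\textbf{Step 2: the final correction.} Now merge the $r$ nonzero observations, with mean $\mu$ and deviation sum $M_r$, with $n_f-r$ zeros, which have mean $0$ and deviation sum $0$. The combination formula gives:
\begin{itemize}
\item overall mean $\frac{r}{n_f}\mu$, which is Line 13;
\item overall deviation sum $M_r+\frac{r(n_f-r)}{n_f}\mu^2$.
\end{itemize}
Dividing by $n_f-1$ and using $M_r=(r-1)\sigma$ gives
\[
\sigma+\frac{(n_f-r)\left(\frac{r}{n_f}\mu^2-\sigma\right)}{n_f-1},
\]
which is Line 14, provided $\mu$ there denotes the pre-correction mean. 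I will state that Lines 13--14 are read with the mean value from before the rescaling (equivalently, evaluated simultaneously).

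Items with $r(v_t)=0$ are handled trivially: all their observations are zero, so mean and variance are both $0$, matching the initialization.

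\textbf{Main obstacle.} The expected difficulty is bookkeeping rather than conceptual. Three points need care:
\begin{itemize}
\item verifying the grouped (multiplicity-$S$) variant of Welford's update, which is what Step 1 does;
\item the degenerate denominators: $r-1=0$ when the first block has $S=1$, and $n_f-1$;
\item the normalization convention, $1/(n-1)$ here versus $1/n$ in Theorem~\ref{lem:bernstein}.
\end{itemize}
For the degenerate denominators, I would adopt the convention $0/0:=0$, so that $\sigma=0$ for a single observation. On normalization, I would remark that the two conventions differ by the factor $\frac{n_f-1}{n_f}\le 1$, so using the unbiased version only makes the bound $\Phi$ more conservative.
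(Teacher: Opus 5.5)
Your proposal is correct and follows essentially the same route as the paper: the paper also derives a grouped Welford update for appending $c$ identical observations from Chan's pairwise combination formula, applies it to each block of $S(u_i,u_j)$ copies in Lines 6--11, and then merges the $n_f-r(v_t)$ unobserved rounds as zero-valued observations with the same formula for Lines 12--14. Your extra remarks are valid refinements that the paper's proof leaves implicit: Line 14 must use the mean from before the rescaling in Line 13, the first block with $S=1$ is a degenerate case, and the algorithm's $1/(n-1)$ normalization differs from the $1/n$ normalization in the Bernstein bound.
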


Since Algorithm~\ref{alg:candidate} runs \texttt{GNS} with a sample size of $n_f=\beta\cdot n$, by Theorem~\ref{lem:naive-mc}, $\widetilde{\textsf{\textnormal{sw}}}(v_q,v_t)\ \forall{v_t\in \I}$ computed by Algorithm~\ref{alg:candidate} are $(\epsilon,\frac{\lambda}{\beta})$-approximate Swing w.p. at least $1-\delta$.
Similarly, Algorithm~\ref{alg:pair} outputs $(\epsilon,\frac{\lambda}{1-\beta})$-approximate Swing $\widetilde{\textsf{\textnormal{sw}}}(v_q,v_t)$ $\forall{v_t\in \mathcal{B}}$ w.p. of at least $1-\delta$, using $n_r=n-n_f$ samples. 
The theoretical approximation assurance of \kalgo{} is established in Theorem~\ref{lem:kASC-approx}.
\begin{theorem}\label{lem:kASC-approx}
Let $\widetilde{\textsf{\textnormal{sw}}}(v_q,v_t)\ \forall{v_t\in \I}$ be the Swing estimates by \kalgo{}. For any item $v_t\in \I \setminus \mathcal{B}$, $\widetilde{\textsf{\textnormal{sw}}}(v_q,v_t)$ is the $(\epsilon,\lambda/\beta)$-approximate Swing w.p. of at least $1-\delta$, and for any item $v_t\in \mathcal{B}$, $\widetilde{\textsf{\textnormal{sw}}}(v_q,v_t)$ is the $(\epsilon,\lambda/(1-\beta))$-approximate Swing w.p. of at least $1-\delta$.
\end{theorem}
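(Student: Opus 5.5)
The plan is to split the items according to which estimator produced their final value, and to reduce each case to Theorem~\ref{lem:naive-mc} through a rescaling of the sample size in Eq.~\eqref{eq:comp-n}. The key observation is that in Eq.~\eqref{eq:comp-n}, $n$ is proportional to $\frac{1}{\lambda}$ for fixed $\epsilon$ and $\delta$. So running \texttt{GNS} with $n'=c\cdot n$ samples is the same as running it with the sample size that Eq.~\eqref{eq:comp-n} would prescribe for the threshold $\lambda/c$. Up to the ceiling, $\lceil \beta n\rceil \ge \beta\cdot\frac{2(d(v_q)^2-d(v_q))(\epsilon/3+1)}{(\alpha+2)\lambda\epsilon^2}\log(1/\delta)$, which is exactly the requirement for threshold $\lambda/\beta$. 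Theorem~\ref{lem:naive-mc} only needs the sample size to be at least the value given by Eq.~\eqref{eq:comp-n}.

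\textbf{Items outside $\mathcal{B}$.} First consider any $v_t\notin\mathcal{B}$. Algorithm~\ref{alg:pair} never touches such an item, so its final estimate is the one produced in Algorithm~\ref{alg:candidate}. I will argue that this estimate coincides with the \texttt{GNS} estimator on $n_f$ samples. By Lemma~\ref{lem:mean-var}, after the correction at Lines 12--14 the value $\widetilde{\textsf{sw}}(v_q,v_t)$ is the sample mean over all $n_f$ observations, with the zero observations included. That sample mean is exactly $\frac{1}{n_f}\sum_r Z_r$, which is what Algorithm~\ref{alg:group_sample} outputs with $n=n_f$. The proof of Theorem~\ref{lem:naive-mc} then applies with $\lambda$ replaced by $\lambda/\beta$, giving the $(\epsilon,\lambda/\beta)$ guarantee with probability at least $1-\delta$. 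If instead the \texttt{USS} branch is taken, Theorem~\ref{lem:USS} gives the stronger $(\epsilon,\lambda)$ guarantee. That guarantee implies the claim, because the relative-error condition is unchanged and $\epsilon\lambda\le\epsilon\lambda/\beta$. There is one edge case: items with $\lambda\le \textsf{sw}<\lambda/\beta$ need the additive bound $\epsilon\lambda/\beta$, and this follows from $\epsilon\,\textsf{sw}<\epsilon\lambda/\beta$.

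\textbf{Items in $\mathcal{B}$.} For $v_t\in\mathcal{B}$, the final estimate is overwritten by Algorithm~\ref{alg:pair}, so only the $n_r=n-n_f\ge(1-\beta)n-1$ refinement samples matter. I treat the three branches separately.
\begin{itemize}
\item Lines 5--7 compute the exact Swing, so the error is zero. Here I must check that restricting to $\U_{q,\mathcal{B}}$ loses no contributing pairs: any pair contributing to $v_t$ lies in $\U(v_q)\cap\U(v_t)\subseteq\U_{q,\mathcal{B}}$. I also need the estimate to be reset rather than accumulated onto the old value, so I will read Line 7 as starting from zero.
\item Lines 14--17 are \texttt{GNS}-style sampling with $n_r$ samples. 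They give unbiased i.i.d. observations bounded by $\frac{d(v_q)(d(v_q)-1)}{\alpha+2}$, so the Chernoff argument of Theorem~\ref{lem:naive-mc} applies with threshold $\lambda/(1-\beta)$.
\item Lines 9--12 sample uniformly from $\U_{q,\mathcal{B}}$. The scaling factor $|\U_{q,\mathcal{B}}|(|\U_{q,\mathcal{B}}|-1)$ makes the estimator unbiased, because every contributing pair lies in $\U_{q,\mathcal{B}}$. Each observation is bounded by $\frac{|\U_{q,\mathcal{B}}|(|\U_{q,\mathcal{B}}|-1)}{\alpha+2}\le\frac{d(v_q)(d(v_q)-1)}{\alpha+2}$. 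The range parameter in the Chernoff bound therefore only shrinks, and the same sample-size computation yields $(\epsilon,\lambda/(1-\beta))$.
\end{itemize}

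\textbf{Main obstacle.} The hard part is the conditioning issue: $\mathcal{B}$ is selected from the phase-one samples. I will handle it by noting that the $n_r$ refinement samples are drawn independently of the $n_f$ phase-one samples. Conditioned on any realization of $\mathcal{B}$, the per-item Chernoff bound therefore still holds for each fixed $v_t\in\mathcal{B}$, and unconditioning preserves the probability $1-\delta$. A secondary issue is the ceiling and the $-1$ slack in $n_r$. I will absorb it either by assuming $\beta n$ is an integer or by noting that Eq.~\eqref{eq:comp-n} already carries a ceiling; if needed, I would state the result with $n_r\ge(1-\beta)n$.
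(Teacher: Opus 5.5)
Your proposal is correct and follows essentially the same route as the paper's proof. The \texttt{USS} branch is covered by Theorem~\ref{lem:USS}; items outside $\mathcal{B}$ inherit the \texttt{GNS} guarantee with $n_f=\beta n$ samples via Lemma~\ref{lem:mean-var}, giving threshold $\lambda/\beta$; and items in $\mathcal{B}$ are handled branch by branch in Algorithm~\ref{alg:pair} with the same range bound $\zeta$ and $n_r=(1-\beta)n$ samples, giving threshold $\lambda/(1-\beta)$. Your extra care fills in details the paper silently assumes: the implication $(\epsilon,\lambda)\Rightarrow(\epsilon,\lambda/\beta)$, resetting the estimates of $\mathcal{B}$ before refinement, the independence of the refinement samples from the choice of $\mathcal{B}$, and the rounding slack in $n_r$. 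Note only that for $v_t\notin\mathcal{B}$ the bound holds as a joint probability, not conditionally on $v_t\notin\mathcal{B}$, because that event depends on the same phase-one samples.
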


\stitle{Complexity Analysis}
When Lines 1-3 are executed (i.e., \texttt{USS} is invoked), the worst-case time complexity of \kalgo{} is the same as \texttt{Exact}.
On the contrary, \kalgo{} invokes Algorithms~\ref{alg:candidate} and~\ref{alg:pair} with a total sample size $n=n_f + n_r$.
The overall complexity of \kalgo{} is 
$O\!\left(K\cdot \min\{n,d(v_q)\}+\min\!\left\{n,\frac{d(v_q)(d(v_q)-1)}{2}\right\}\cdot \frac{D(v_q)}{d(v_q)}+n\right)$ as per the complexity analyses in \S\ref{sec:cand-gen} and \S\ref{sec:PR}.

\section{Related Work}

\subsection{Similarity Search on Bipartite Graphs}

Early studies~\cite{katz1953new,haveliwala2002topic,jeh2003scaling} primarily adapted similarity measures originally designed for general graphs or set-based data, such as the Jaccard coefficient, the Katz index~\cite{katz1953new}, and Personalized PageRank (PPR)~\cite{haveliwala2002topic, jeh2003scaling}. To better model bipartite structures, bipartite SimRank~\cite{jeh2002simrank} was proposed, and its optimized variants~\cite{antonellis2008simrank++, dey2020p,yang2024efficient} have been extended to various applications. Another line of research focuses on extending classical PPR to bipartite graphs, leading to the Hidden Personalized PageRank (HPP)~\cite{deng2009generalized} and its symmetric extension, Bidirectional HPP~\cite{yang2022efficient}. Despite their effectiveness, these methods suffer from high computational complexity and scalability issues on large graphs. Approximation frameworks such as \texttt{SimPush}~\cite{shi2020realtime}, \texttt{Approx-BHPP}~\cite{yang2022efficient}, and \texttt{BIRD}~\cite{liu2024bird} integrate push-based methods with iterative refinement to achieve near-linear time complexity while maintaining accuracy guarantees. Recently,~\citet{yang2020large} exploit the user–item–user interaction structure in bipartite graphs, which provides stronger and more reliable signals for capturing product relationships. 
In addition, \citet{Liu2015Stability} study similarity measurement stability under some information loss. \citet{song2024efficient} proposed a temporal similarity measure on temporal-based bipartite graph to capture the activeness of nodes over time.

\subsection{i2i Retrieval in Recommender Systems}

The i2i retrieval has emerged as a fundamental paradigm in recommender systems~\cite{linden2003amazon,li2022query,wu2024hi,bao2025grain,cao2022gift,yang2022task,bao2024beyond,deng2024mmbee,feng2025llm}, where personalized candidates are generated based on item similarity from users' historical interactions. 
Compared with user-based methods, i2i approaches are more scalable due to the relative stability of the item–item similarity matrix, enabling efficient offline computation. They also require limited user history, mitigating cold start for long-tail users~\cite{linden2003amazon,feng2025llm,huang2025comprehensive}.
Early item-based collaborative filtering methods rely on direct user–item interactions to derive item similarity from the rating matrix~\cite{linden2003amazon, sarwar2001item}. Beyond using attribute or side information~\cite{shi2014collaborative,bao2009stacking,liang2010connecting}, the Swing algorithm~\cite{yang2020large} extends this paradigm by leveraging shared interactions to capture more robust item relationships.
Recently, \texttt{LLM-i2i}~\cite{feng2025llm} augments Swing with an LLM-based data generator to mitigate data sparsity and noise. Meanwhile, \texttt{CFG}~\cite{lv2023circle} incorporates Swing's local structural patterns into a Transformer to enhance structural awareness.
With deep learning, retrieval methods have evolved from graph-based similarity modeling to more expressive neural architectures~\cite{wu2021self,xu2022mixture,pei2024rimirec,huang2025comprehensive}. For example, \texttt{SGL}~\cite{wu2021self} introduces self-supervised contrastive learning, \texttt{MVKE}~\cite{xu2022mixture} explores multiple interest-aware embeddings, and \texttt{RimiRec}~\cite{pei2024rimirec} models hierarchical interests.

\section{Experiments}\label{sec:experiment}
This section experimentally evaluates the effectiveness and efficiency of \algo{} and \kalgo{}
for $(\epsilon,\lambda)$-approximate and top-$K$ Swing queries over 8 real datasets.
For the interest of space, we refer interested readers to 
Appendix~\ref{sec:add-exp}
for more experimental details and results.
All algorithms are implemented in C++ and compiled by g++ 8.5 with -O3 optimization, and all experiments are conducted on a Linux machine with an AMD EPYC 7742@2.25GHz CPU and 1TB RAM. 
\update{We further discuss how single-node query processing maps to sharded deployments on large graphs in Appendix~\ref{sec:discuss-shard}.}
For reproducibility, the source code and datasets are available at 
\url{https://github.com/HKBU-LAGAS/K-ASC}.
\subsection{Experimental Setup}\label{sec:experiment-setup}

\begin{table}[!t]
\scriptsize
\renewcommand{\arraystretch}{0.9}
\centering
\caption{Statistics of datasets used in experiments.}
\label{tab:datasets-retrieval}
\vspace{-3ex}
\begin{tabular}{c|rrrc}
\hline
{\bf Dataset} & {\bf $|\U|$} & {\bf $|\I|$} & {\bf $|\EDG|$} & {\bf Item Type} \\ \hline
MovieLens &	6,040 & 3,706 & 1,000,209 & movie \\
Gowalla & 29,858 & 40,981 & 1,027,370 & location \\
AmazonBook & 52,643 & 91,599 & 2,984,108 & book \\
SteamGame & 2,567,538 & 15,474 & 6,889,728 & video game\\
MIND & 876,956	& 97,509	& 18,149,915 & news \\
Twitch & 15,524,309 & 790,100 & 234,422,289 & streamer \\
Yambda & 921,023 & 8,746,689 & 1,339,219,563 & music \\
MAG	& 10,541,560 &	2,784,240 &	1,095,315,106 & paper \\
\hline
\end{tabular}
\end{table}

\stitle{Datasets}
Table~\ref{tab:datasets-retrieval} lists eight real user-item graph datasets for experimentation, which are widely used in recommender system literature~\cite{yang2022efficient,yang2022scalable,deng2009generalized,huang2025comprehensive,liu2024bird,wan2018item,rappaz2021recommendation}.
{\em Gowalla}~\cite{cho2011friendship} models user check-ins at locations. {\em AmazonBook}~\cite{he2016ups} and {\em SteamGame}~\cite{wan2018item} capture user interactions (e.g., reviews and purchases) with books and games, respectively. {\em MovieLens}~\cite{harper2015movielens}, {\em MIND}~\cite{yang2022scalable}, {\em Twitch}~\cite{rappaz2021recommendation}, and {\em Yambda}~\cite{ploshkin2025yambda} represent user interactions with movies, news, streaming content, and music, respectively. {\em MAG}~\cite{yang2022scalable} is an academic bipartite graph.

\stitle{Baselines}
\update{We compare our \algo{} and \kalgo{} for $(\epsilon,\lambda)$-approximate and top-$K$ Swing queries against 7 baselines: 5 Swing-specific algorithms \texttt{Exact}, \texttt{Truncated}, \NMC, \texttt{GNS}, and \texttt{USS}, and 2 general-purpose approximate similarity search algorithms \texttt{All-Pairs}\cite{bayardo2007scaling} and \texttt{WHIMP}\cite{sharma2017hashes}}.
\texttt{Truncated} is a variant of \texttt{Exact} that adopts the heuristic capping the maximum number of neighbors at 600.
\update{As \texttt{All-Pairs} and \texttt{WHIMP} were originally designed for cosine similarity, we adapt them as candidate generation methods for Swing retrieval. Further implementation details are provided in Appendix~\ref{app:baseline_adaptation}.}
For a fair comparison, we fix the failure probability $\delta = 10^{-4}$ and vary the relative error threshold $\epsilon \in \{0.02, 0.05, 0.1, 0.2\}$ to obtain the best trade-off between effectiveness and efficiency in all randomized approaches, i.e., \NMC, \texttt{GNS}, \texttt{USS}, \algo{} and \kalgo{}.
The set intersection and cardinality calculation are implemented by \texttt{QFilter++} in all methods for best performance.

\stitle{Evaluation Protocol}
We construct two query sets for evaluation. The random query set consists of 1,000 items sampled uniformly at random from all items, while the popular query set consists of 200 items sampled from popular items (top 20\% by degree). These two sets are used to evaluate performance on global and high-degree queries, respectively. Ground truth is obtained using \texttt{Exact}, and the top-$K$ results are derived by sorting the exact Swing scores.

We evaluate each method using two metrics, precision and relative error, for $K\in\{20,50,100\}$. Given a query item $v_q$, let $V_K$ denote the ground-truth top-$K$ result set and $\widehat{V}_K$ denote the approximate top-$K$ result set. Precision is defined as
$\mathrm{Precision}@K=\frac{|V_K\cap \widehat{V}_K|}{|V_K|}$,
which measures the fraction of approximate top-$K$ items that also appear in the ground-truth top-$K$ results.
To evaluate score accuracy, we further compute the average relative error over the approximate top-$K$ results. Let $\textsf{\textnormal{sw}}(v_q,v_t)$ and $\widetilde{\textsf{\textnormal{sw}}}(v_q,v_t)$ denote the exact and approximate Swing scores, respectively. The average relative error is defined as
$
Error@K = \frac{1}{|\widehat{V}_K|}\sum_{v_t\in \widehat{V}_K}
\left|
\frac{\widetilde{\textsf{\textnormal{sw}}}(v_q,v_t)-\textsf{\textnormal{sw}}(v_q,v_t)}
{\textsf{\textnormal{sw}}(v_q,v_t)}
\right|
$.

\input{figs/approximate_query}

\subsection{Approximation Query Performance}\label{sec:approx-query}
The results are presented in Figures~\ref{fig:time_pop} and~\ref{fig:time_rand}, which report the average execution time on both popular and random query sets as $\epsilon$ varies over $\{0.02,0.05,0.1,0.2\}$. 
\update{
Note that \texttt{All-Pairs} and \texttt{WHIMP} rely on candidate generation rather than $\epsilon$-controlled Swing approximation, and thus do not provide the $(\epsilon,\lambda)$-approximation guarantee evaluated here.
}
Based on the results, \algo{} consistently achieves the best efficiency across all eight datasets and under all error settings. 
\texttt{Na{\"{\i}}ve-MC} is competitive only when the required error tolerance is loose, but its running time increases rapidly as $\epsilon$ decreases, making it ineffective under strict accuracy requirements. 
\texttt{GNS} is more efficient than \texttt{Na{\"{\i}}ve-MC} in most cases, showing that the grouped strategy improves over naive estimation. 
\update{However, \texttt{GNS} still suffers from the same inefficiency trend as $\epsilon$ decreases, and its advantage diminishes on low-degree queries. In contrast, \texttt{USS} exhibits relatively stable running time, often achieving $2\times$ speedup compared to \texttt{Exact} across different $\epsilon$ and datasets, which suggests that subset sampling achieves a stable improvement.}
More importantly, \algo{} typically achieves about $3\times$ speedup on popular queries and about $5\times$ speedup on random queries compared with \texttt{GNS}. 
The advantage becomes much more pronounced on large graphs, i.e., {\em Yambda} and {\em MAG}, where \algo{} reaches up to $100\times$ speedup over \texttt{Exact} when $\epsilon=0.02$.
Overall, these results show that \algo{} effectively adapts to different error regimes, converges to \texttt{Exact} on small graphs, and delivers substantial acceleration on large datasets where efficient approximate retrieval is most needed.

\input{figs/re-avg_time_precision}

\input{figs/avg_err}

\input{figs/i2i_performance}

\begin{figure}[!t]
\centering
\begin{small}
\begin{tikzpicture}
   \hspace{3mm}\begin{customlegend}[
        legend entries={\texttt{Exact}, \texttt{GNS},\texttt{USS},\algo{}},
        legend columns=6,
        area legend,
        legend style={at={(0.45,1.25)},anchor=north,draw=none,font=\footnotesize,column sep=0.2cm}]
        \addlegendimage{pattern color=white, preaction={fill, NSCcol1}, pattern={north west lines}} 
        \addlegendimage{pattern color=white, preaction={fill, my_violet}, pattern={grid}} 
        \addlegendimage{pattern color=white, preaction={fill, cyan}, pattern={crosshatch}}
        \addlegendimage{pattern color=white, preaction={fill, my_blue!60},pattern={north east lines}} 
    \end{customlegend}
\end{tikzpicture}
\\[-4pt]
\vspace{-2mm}
\subfloat[{\em MovieLens}]{
\begin{tikzpicture}[scale=1]
\begin{axis}[
    height=2.8cm,
    width=3.6cm,
    xtick=\empty,
    ybar=0.6pt,
    bar width=0.25cm,
    enlarge x limits=0.1,
    ylabel={\em speedup} ($\times$),
    xticklabel=\empty,
    ymin=2,
    ytick={2,4,6},
    xticklabel style = {font=\scriptsize},
    yticklabel style = {font=\scriptsize},
    every node near coord/.append style={font=\scriptsize},
    every axis y label/.style={at={(current axis.north west)},right=5mm,above=0mm},
    legend style={draw=none, at={(1.02,1.02)},anchor=north west,cells={anchor=west},font=\tiny},
    legend image code/.code={ \draw [#1] (0cm,-0.1cm) rectangle (0.3cm,0.15cm);},
    ]

\addplot [pattern color=white, preaction={fill, NSCcol1}, pattern={north west lines}] coordinates {(1,4.09)}; 
\addplot [pattern color=white, preaction={fill, my_violet}, pattern={grid}] coordinates {(1,2.87)}; 
\addplot [pattern color=white, preaction={fill, cyan}, pattern={crosshatch}] coordinates {(1,6.71)}; 
\addplot [pattern color=white, preaction={fill, my_blue!60},pattern={north east lines}] coordinates {(1,3.58)}; 

\end{axis}
\end{tikzpicture}
\hspace{1mm}
}%
\subfloat[{\em Twitch}]{
\begin{tikzpicture}[scale=1]
\begin{axis}[
    height=2.8cm,
    width=3.6cm,
    xtick=\empty,
    ybar=0.6pt,
    bar width=0.25cm,
    enlarge x limits=0.1,
    ylabel={\em speedup} ($\times$),
    xticklabel=\empty,
    ymin=1,
    xticklabel style = {font=\scriptsize},
    yticklabel style = {font=\scriptsize},
    every node near coord/.append style={font=\scriptsize},
    every axis y label/.style={at={(current axis.north west)},right=5mm,above=0mm},
    legend style={draw=none, at={(1.02,1.02)},anchor=north west,cells={anchor=west},font=\tiny},
    legend image code/.code={ \draw [#1] (0cm,-0.1cm) rectangle (0.3cm,0.15cm);},
    ]

\addplot [pattern color=white, preaction={fill, NSCcol1}, pattern={north west lines}] coordinates {(1,2.26)}; 
\addplot [pattern color=white, preaction={fill, my_violet}, pattern={grid}] coordinates {(1,1.33)}; 
\addplot [pattern color=white, preaction={fill, cyan}, pattern={crosshatch}] coordinates {(1,1.69)}; 
\addplot [pattern color=white, preaction={fill, my_blue!60},pattern={north east lines}] coordinates {(1,1.58)}; 

\end{axis}
\end{tikzpicture}
\hspace{1mm}
}%
\subfloat[{\em MAG}]{
\begin{tikzpicture}[scale=1]
\begin{axis}[
    height=2.8cm,
    width=3.6cm,
    xtick=\empty,
    ybar=0.6pt,
    bar width=0.25cm,
    enlarge x limits=0.1,
    ylabel={\em speedup} ($\times$),
    xticklabel=\empty,
    ymin=1,
    ytick={1,2,3,4},
    xticklabel style = {font=\scriptsize},
    yticklabel style = {font=\scriptsize},
    every node near coord/.append style={font=\scriptsize},
    every axis y label/.style={at={(current axis.north west)},right=5mm,above=0mm},
    legend style={draw=none, at={(1.02,1.02)},anchor=north west,cells={anchor=west},font=\tiny},
    legend image code/.code={ \draw [#1] (0cm,-0.1cm) rectangle (0.3cm,0.15cm);},
    ]

\addplot [pattern color=white, preaction={fill, NSCcol1}, pattern={north west lines}] coordinates {(1,3.02)}; 
\addplot [pattern color=white, preaction={fill, my_violet}, pattern={grid}] coordinates {(1,1.47)}; 
\addplot [pattern color=white, preaction={fill, cyan}, pattern={crosshatch}] coordinates {(1,2.84)}; 
\addplot [pattern color=white, preaction={fill, my_blue!60},pattern={north east lines}] coordinates {(1,2.34)}; 

\end{axis}
\end{tikzpicture}
\hspace{1mm}
}%
\vspace{-3mm}
\end{small}
\caption{Effectiveness of \texttt{QFilter++} in Swing Computation.} \label{fig:ab-FSI}
\vspace{-2ex}
\end{figure}

\subsection{Top-$K$ Query Performance}\label{sec:performance-eval}
Figures~\ref{fig:precision-pop} and~\ref{fig:precision-rand} report the query performance, i.e., the trade-off between query time and precision, on the popular and random query sets, respectively, for \algo{}, \kalgo{}, and \update{seven} baselines across eight user-item graphs, with $K=20,50,100$. 
Overall, both \algo{} and \kalgo{} consistently outperform the baselines, achieving near-perfect precision with significantly lower query time, while \kalgo{} further improves efficiency via top-$K$ heuristic optimizations. 
More specifically, on {\em Yambda}, \kalgo{} attains above $99.97\%$ precision for all $K$ values, while reducing query time by over two orders of magnitude compared to \texttt{Exact}. 
Compared with \algo{}, \kalgo{} further reduces query time by about $5\times$, demonstrating the effectiveness of the top-$K$ optimization.
On {\em MAG}, the advantage becomes even more pronounced, where \kalgo{} yields up to three orders of magnitude speedup over \texttt{Exact} with nearly $100\%$ precision, and remains about $2\times$ faster than \algo{}.
\update{
Against the literature-based baselines \texttt{All-Pairs} and \texttt{WHIMP}, \kalgo{} is at least tens of times faster on {\em Yambda} and over two orders of magnitude faster on {\em MAG} at nearly perfect precision, with the speedup exceeding two orders of magnitude on both graphs for random queries.
}

On relatively small graphs such as {\em MIND} and {\em Twitch}, \algo{} already achieves substantial acceleration over \texttt{Exact}, while \kalgo{} further improves upon \algo{} by about $5\times$ on {\em MIND} and $8\times$ on {\em Twitch}, yielding overall speedups of up to two orders of magnitude.
\update{
Meanwhile, compared with \texttt{All-Pairs} and \texttt{WHIMP}, \kalgo{} achieves over $10\times$ speedup for popular queries and up to two orders of magnitude for random queries.
}
When query cost is already small, i.e., {\em Gowalla} and {\em AmazonBook}, both methods naturally converge to \texttt{Exact}, as expected on small graphs. 
Compared with popular queries, random queries exhibit a similar overall trend but are generally more favorable to \algo{} and \kalgo{}. Since random queries are typically less dominated by high-degree items, both methods achieve lower query time while still maintaining near-perfect precision.

Figure~\ref{fig:avgerr} reports the average top-$K$ relative error for different $K$ on the random query set, corresponding to the results in Figure~\ref{fig:time_rand}.
Overall, \algo{} achieves the best accuracy, while \kalgo{} offers a better efficiency and accuracy trade-off. 
This advantage is particularly evident on large graphs such as {\em Yambda} and {\em MAG}. 
Although \kalgo{} has slightly higher relative error due to its top-$K$ optimization, the increase is marginal given the substantial reduction in query time.

\update{
\subsection{Item-to-item Retrieval Performance}\label{sec:i2i-performance}
This section evaluates the empirical effectiveness of Swing for i2i retrieval in comparison with seven representative neighborhood-based, high-order proximity, and embedding-based methods.
Figure~\ref{fig:RecPerformance} illustrates the trade-off between top-$K$ recall and running time on {\em AmazonBook}, {\em SteamGame}, {\em Twitch}, and {\em MAG}, with $K=20,50,100$.
Detailed baseline descriptions and experimental settings are provided in Appendix~\ref{sec:i2i-setting}.
Overall, exact Swing consistently achieves competitive or superior recall across all four datasets and different values of $K$, confirming its effectiveness for i2i retrieval.
Meanwhile, \algo{} preserves nearly identical retrieval quality to exact Swing while substantially reducing the running time.
By exploiting the top-$K$ retrieval objective, \kalgo{} further accelerates retrieval with negligible recall loss, achieving a speedup of over $4{,}000\times$ over exact Swing on {\em MAG} while retaining about $99\%$ of its recall.
}

\subsection{Component Analysis}

\stitle{Effectiveness of \texttt{QFilter++}}
Figure~\ref{fig:ab-FSI} shows the average speedup over baseline implementations without \texttt{QFilter++} across all queries.
To ensure a fair comparison, we implement the set intersection in the baselines using Ankerl's hashmap~\cite{ankerlhashmap}, which is reported to be approximately $2\times$ faster than the standard hashmap~\cite{ankerlhashmap-benchmark}.
Compared to this highly optimized hashmap implementation, \texttt{QFilter++} still consistently achieves superior performance. The gains are more pronounced on {\em MAG}, reaching up to $2.3\times$.
Note that for queries with high-degree nodes, the speedup is even more significant, as \texttt{QFilter++} can prune more unnecessary candidates.

\stitle{Effectiveness of Adaptive Pairwise Refinement}
\begin{figure}[!t]
\centering
\begin{small}
\subfloat[{\em SteamGame}]{
\begin{tikzpicture}[scale=1]
\begin{axis}[
    height=2.8cm,
    width=3cm,
    ybar=0.6pt,
    bar width=0.25cm,
    enlarge x limits=0.35,
    ylabel={\em speedup} ($\times$),
    ymin=3,
    ytick={3,5,7,9},
    xtick={1,2,3},
    xticklabels={$20$,$50$,$100$},
    yticklabel style = {font=\scriptsize},
    every node near coord/.append style={font=\scriptsize},
    every axis y label/.style={at={(current axis.north west)},right=5mm,above=0mm},
    legend style={draw=none, at={(1.02,1.02)},anchor=north west,cells={anchor=west},font=\tiny},
    legend image code/.code={ \draw [#1] (0cm,-0.1cm) rectangle (0.3cm,0.15cm);},
    ]

\addplot [pattern color=white, preaction={fill, my_blue!60},pattern={north east lines}] coordinates {(1,9.844237194)
(2,4.82210555)
(3,4.739060511)
};

\end{axis}
\end{tikzpicture}
\hspace{1mm}
}%
\subfloat[{\em Twitch}]{
\begin{tikzpicture}[scale=1]
\begin{axis}[
    height=2.8cm,
    width=3cm,
    xtick=\empty,
    ybar=0.6pt,
    bar width=0.25cm,
    enlarge x limits=0.35,
    ylabel={\em speedup} ($\times$),
    ymin=8,
    ymax=11,
    xtick={1,2,3},
    xticklabels={$20$,$50$,$100$},
    xticklabel style = {font=\scriptsize},
    yticklabel style = {font=\scriptsize},
    every node near coord/.append style={font=\scriptsize},
    every axis y label/.style={at={(current axis.north west)},right=5mm,above=0mm},
    legend style={draw=none, at={(1.02,1.02)},anchor=north west,cells={anchor=west},font=\tiny},
    legend image code/.code={ \draw [#1] (0cm,-0.1cm) rectangle (0.3cm,0.15cm);},
    ]
\addplot [pattern color=white, preaction={fill, my_blue!60},pattern={north east lines}] coordinates {(1,10.01668352)
(2,10.83862144)
(3,9.849863286)
};

\end{axis}
\end{tikzpicture}
\hspace{1mm}
}%
\subfloat[{\em Yambda}]{
\begin{tikzpicture}[scale=1]
\begin{axis}[
    height=2.8cm,
    width=3cm,
    xtick=\empty,
    ybar=0.6pt,
    bar width=0.25cm,
    enlarge x limits=0.35,
    ylabel={\em speedup} ($\times$),
    ymin=3,
    ymax=11,
    ytick={3,7,11},
    xtick={1,2,3},
    xticklabels={$20$,$50$,$100$},
    xticklabel style = {font=\scriptsize},
    yticklabel style = {font=\scriptsize},
    every node near coord/.append style={font=\scriptsize},
    every axis y label/.style={at={(current axis.north west)},right=5mm,above=0mm},
    legend style={draw=none, at={(1.02,1.02)},anchor=north west,cells={anchor=west},font=\tiny},
    legend image code/.code={ \draw [#1] (0cm,-0.1cm) rectangle (0.3cm,0.15cm);},
    ]

\addplot [pattern color=white, preaction={fill, my_blue!60},pattern={north east lines}] coordinates {(1,9.529487268)
(2,10.4606626)
(3,4.652175386)
};

\end{axis}
\end{tikzpicture}
\hspace{1mm}
}%
\subfloat[{\em MAG}]{
\begin{tikzpicture}[scale=1]
\begin{axis}[
    height=2.8cm,
    width=3cm,
    xtick=\empty,
    ybar=0.6pt,
    bar width=0.25cm,
    enlarge x limits=0.35,
    ylabel={\em speedup} ($\times$),
    ymin=1,
    ytick={1,2,3,4},
    xtick={1,2,3},
    xticklabels={$20$,$50$,$100$},
    xticklabel style = {font=\scriptsize},
    yticklabel style = {font=\scriptsize},
    every node near coord/.append style={font=\scriptsize},
    every axis y label/.style={at={(current axis.north west)},right=5mm,above=0mm},
    legend style={draw=none, at={(1.02,1.02)},anchor=north west,cells={anchor=west},font=\tiny},
    legend image code/.code={ \draw [#1] (0cm,-0.1cm) rectangle (0.3cm,0.15cm);},
    ]
\addplot [pattern color=white, preaction={fill, my_blue!60},pattern={north east lines}] coordinates {(1,2.590946502)
(2,2.060209424)
(3,2.201398601)
}; 
\end{axis}
\end{tikzpicture}
\hspace{1mm}
}%
\vspace{-3mm}
\end{small}
\caption{Effectiveness of Algo.~\ref{alg:pair} in \kalgo{} for $K=20,50,100$
} \label{fig:ab-PR}
\end{figure}
Since \kalgo{} enhances \algo{} with a refinement step (Algorithm~\ref{alg:pair}) that re-evaluates borderline items identified by empirical bounds, we study its effectiveness empirically.
Figure~\ref{fig:ab-PR} reports the runtime speedup of \kalgo{} over \algo{} for random queries under comparable top-$K$ precision (both above $99.9\%$) for $K \in \{20,50,100\}$.
The results show that the refinement consistently improves efficiency under the same near-exact accuracy target. 
Specifically, the speedup reaches around one order of magnitude on {\em Twitch} and $4.7$-$10.5\times$ on the billion-scale graph {\em Yambda}.
The speedup slightly decreases as $K$ increases, due to the cost of refining more candidates. 

\subsection{Parameter Analysis}\label{sec:para-analysis}
\begin{figure}[!t]
\centering
\begin{small}
\begin{tikzpicture}
    \begin{customlegend}
    [legend columns=5,
        legend entries={$K=20$,$K=50$,$K=100$},
        legend style={at={(0.45,1.35)},anchor=north,draw=none,font=\footnotesize,column sep=0.2cm}]
    \addlegendimage{line width=0.7mm,mark size=4pt,color=NSCcol1}
    \addlegendimage{line width=0.7mm,mark size=4pt,color=my_violet}
    \addlegendimage{line width=0.7mm,mark size=4pt,color=cyan}
    \end{customlegend}
\end{tikzpicture}
\\[-\lineskip]
\vspace{-4mm}
\subfloat[{\em vary $\rho$}]{\label{fig:vary-time-rho}
\begin{tikzpicture}[scale=1,every mark/.append style={mark size=3pt}]
    \begin{axis}[
        height=2.7cm,
        width=3.6cm,
        ylabel={\it query time(ms)},
        xmin=1, xmax=6,
        ymin=1.2, 
        xtick={1,2,3,4,5,6},
        xticklabel style = {font=\scriptsize},
        yticklabel style = {font=\footnotesize},
        xticklabels={0.3,0.4,0.5,0.6,0.7,0.8},
        every axis y label/.style={font=\footnotesize,at={(current axis.north west)},right=5mm,above=0mm},
        legend style={fill=none,font=\small,at={(0.02,0.99)},anchor=north west,draw=none},
    ]
    \addplot[line width=0.7mm, smooth, color=NSCcol1]  %
    plot coordinates {
        (1, 1.452)
        (2, 1.363)
        (3, 1.398)
        (4, 1.384)
        (5, 1.594)
        (6, 1.401)
    };

    \addplot[line width=0.7mm, smooth, color=my_violet]  %
        plot coordinates {
            (1, 1.5)
            (2, 1.295)
            (3, 1.304)
            (4, 1.245)
            (5, 1.299)
            (6, 1.24)
        };
    
    \addplot[line width=0.7mm, smooth, color=cyan]  %
        plot coordinates {
            (1, 1.597)
            (2, 1.617)
            (3, 1.708)
            (4, 1.445)
            (5, 1.387)
            (6, 1.89)
        };

    \end{axis}
\end{tikzpicture}\hspace{0mm}
}
\subfloat[{\em vary $\beta$}]{\label{fig:vary-time-beta}
\begin{tikzpicture}[scale=1,every mark/.append style={mark size=3pt}]
    \begin{axis}[
        height=2.7cm,
        width=3.6cm,
        ylabel={\it query time(ms)},
        xmin=1, xmax=6,
        ymin=1.2, 
        xtick={1,2,3,4,5,6},
        ytick={1.2,1.6,2.0,2.4},
        xticklabel style = {font=\scriptsize},
        yticklabel style = {font=\footnotesize},
        xticklabels={0.05,0.1,0.15,0.2,0.25,0.3},
        yticklabels={1.2,1.6,2.0,2.4},
        every axis y label/.style={font=\footnotesize,at={(current axis.north west)},right=5mm,above=0mm},
        legend style={fill=none,font=\small,at={(0.02,0.99)},anchor=north west,draw=none},
    ]
    \addplot[line width=0.7mm, smooth, color=NSCcol1]  %
        plot coordinates {
            (1, 1.401)
            (2, 1.651)
            (3, 1.989)
            (4, 2.115)
            (5, 2.103)
            (6, 1.922)

        };

    \addplot[line width=0.7mm, smooth, color=my_violet]  %
        plot coordinates {
            (1, 1.208)
            (2, 1.263)
            (3, 1.282)
            (4, 1.356)
            (5, 1.391)
            (6, 1.475)
        };
    
    \addplot[line width=0.7mm, smooth, color=cyan]  %
        plot coordinates {
            (1, 2.403)
            (2, 2.278)
            (3, 1.549)
            (4, 1.529)
            (5, 1.597)
            (6, 1.541)

        };

    \end{axis}
\end{tikzpicture}\hspace{0mm}
}
\subfloat[{\em vary $\kappa$}]{\label{fig:vary-time-kap}
\begin{tikzpicture}[scale=1,every mark/.append style={mark size=3pt}]
    \begin{axis}[
        height=2.7cm,
        width=3.6cm,
        ylabel={\it query time(ms)},
        xmin=1, xmax=6,
        ymin=1.1, 
        xtick={1,2,3,4,5,6},
        ytick={1.1,1.3,1.5,1.7},
        xticklabel style = {font=\scriptsize},
        yticklabel style = {font=\footnotesize},
        xticklabels={1,3,5,7,10,15},
        every axis y label/.style={font=\footnotesize,at={(current axis.north west)},right=5mm,above=0mm},
        legend style={fill=none,font=\small,at={(0.02,0.99)},anchor=north west,draw=none},
    ]
    \addplot[line width=0.7mm, smooth, color=NSCcol1]  %
        plot coordinates {
            (1, 1.149)
            (2, 1.309)
            (3, 1.215)
            (4, 1.221)
            (5, 1.208)
            (6, 1.238)
    
        };

    \addplot[line width=0.7mm, smooth, color=my_violet]  %
        plot coordinates {
            (1, 1.198)
            (2, 1.256)
            (3, 1.263)
            (4, 1.267)
            (5, 1.253)
            (6, 1.34)

        };
    
    \addplot[line width=0.7mm, smooth, color=cyan]  %
        plot coordinates {
            (1, 1.504)
            (2, 1.583)
            (3, 1.441)
            (4, 1.433)
            (5, 1.421)
            (6, 1.649)

        };

    \end{axis}
\end{tikzpicture}\hspace{0mm}
}
\vspace{-3mm}
\subfloat[{\em vary $\rho$}]{\label{fig:vary-acc-rho}
\begin{tikzpicture}[scale=1,every mark/.append style={mark size=3pt}]
    \begin{axis}[
        height=2.7cm,
        width=3.6cm,
        ylabel={\it precision},
        xmin=1, xmax=6,
        ymin=0.999, ymax=1,
        xtick={1,2,3,4,5,6},
        ytick={0.999,1},
        xticklabel style = {font=\scriptsize},
        yticklabel style = {font=\footnotesize},
        xticklabels={0.3,0.4,0.5,0.6,0.7,0.8},
        yticklabels={99.9,100},
        every axis y label/.style={font=\footnotesize,at={(current axis.north west)},right=3mm,above=0mm},
        legend style={fill=none,font=\small,at={(0.02,0.99)},anchor=north west,draw=none},
    ]
    \addplot[line width=0.7mm, smooth, color=NSCcol1]  %
        plot coordinates {
            (1, 0.999695)
            (2, 0.999797)
            (3, 0.999746)
            (4, 0.999543)
            (5, 0.999644)
            (6, 0.999593)
        };

    \addplot[line width=0.7mm, smooth, color=my_violet]  %
        plot coordinates {
            (1, 0.999553)
            (2, 0.999512)
            (3, 0.999512)
            (4, 0.999533)
            (5, 0.999573)
            (6, 0.999573)
        };
    
    \addplot[line width=0.7mm, smooth, color=cyan]  %
        plot coordinates {
            (1, 0.999207)
            (2, 0.999329)
            (3, 0.999329)
            (4, 0.999258)
            (5, 0.999228)
            (6, 0.999228)
        };

    \end{axis}
\end{tikzpicture}\hspace{0mm}
}
\subfloat[{\em vary $\beta$}]{\label{fig:vary-acc-beta}
\begin{tikzpicture}[scale=1,every mark/.append style={mark size=3pt}]
    \begin{axis}[
        height=2.7cm,
        width=3.6cm,
        ylabel={\it precision},
        xmin=1, xmax=6,
        ymin=0.999, ymax=1,
        xtick={1,2,3,4,5,6},
        ytick={0.999,1},
        xticklabel style = {font=\scriptsize},
        yticklabel style = {font=\footnotesize},
        xticklabels={0.05,0.1,0.15,0.2,0.25,0.3},
        yticklabels={99.9,100},
        every axis y label/.style={font=\footnotesize,at={(current axis.north west)},right=3mm,above=0mm},
        legend style={fill=none,font=\small,at={(0.02,0.99)},anchor=north west,draw=none},
    ]
    \addplot[line width=0.7mm, smooth, color=NSCcol1]  %
        plot coordinates {
            (1, 0.999593)
            (2, 0.999644)
            (3, 0.999535)
            (4, 0.999492)
            (5, 0.999472)
            (6, 0.99939)

        };

    \addplot[line width=0.7mm, smooth, color=my_violet]  %
        plot coordinates {
            (1, 0.999248)
            (2, 0.999573)
            (3, 0.999654)
            (4, 0.999593)
            (5, 0.999492)
            (6, 0.999512)
        };
    
    \addplot[line width=0.7mm, smooth, color=cyan]  %
        plot coordinates {
            (1, 0.999093)
            (2, 0.999012)
            (3, 0.999126)
            (4, 0.999065)
            (5, 0.999147)
            (6, 0.999228)

        };

    \end{axis}
\end{tikzpicture}\hspace{0mm}
}
\subfloat[{\em vary $\kappa$}]{\label{fig:vary-acc-kap}
\begin{tikzpicture}[scale=1,every mark/.append style={mark size=3pt}]
    \begin{axis}[
        height=2.7cm,
        width=3.6cm,
        ylabel={\it precision},
        xmin=1, xmax=6,
        ymin=0.999, ymax=1,
        xtick={1,2,3,4,5,6},
        ytick={0.999,1},
        xticklabel style = {font=\scriptsize},
        yticklabel style = {font=\footnotesize},
        xticklabels={1,3,5,7,10,15},
        yticklabels={99.9,100},
        every axis y label/.style={font=\footnotesize,at={(current axis.north west)},right=3mm,above=0mm},
        legend style={fill=none,font=\small,at={(0.02,0.99)},anchor=north west,draw=none},
    ]
    \addplot[line width=0.7mm, smooth, color=NSCcol1]  %
        plot coordinates {
            (1, 0.999187)
            (2, 0.999593)
            (3, 0.999593)
            (4, 0.999593)
            (5, 0.999441)
            (6, 0.999593)

        };

    \addplot[line width=0.7mm, smooth, color=my_violet]  %
        plot coordinates {
            (1, 0.999248)
            (2, 0.999593)
            (3, 0.999573)
            (4, 0.999695)
            (5, 0.999634)
            (6, 0.999675)

        };
    
    \addplot[line width=0.7mm, smooth, color=cyan]  %
        plot coordinates {
            (1, 0.999136)
            (2, 0.999207)
            (3, 0.999228)
            (4, 0.999339)
            (5, 0.999248)
            (6, 0.999268)

        };

    \end{axis}
\end{tikzpicture}\hspace{0mm}
}
\end{small}
 \vspace{-3mm}
\caption{Performance when varying parameters on {\em MAG}.} \label{fig:vary}
\end{figure}
This section investigates the impact of $\rho$, $\beta$, and $\kappa$ in \kalgo{} on the billion-scale graph {\em MAG}, varying each while fixing the others and tuning $\epsilon$ to maintain top-$K$ precision above $99.9\%$, as shown in Figure~\ref{fig:vary}.
In practice, a larger $K$ requires a smaller $\epsilon$ to maintain precision. 
Accordingly, for each $K$, we recommend choosing $\epsilon$ to ensure high recall (with extra $\kappa$), and reducing it only when the target precision cannot be achieved by tuning other parameters.
\update{Detailed parameter settings and cross-dataset sensitivity and transferability analyses are provided in Appendices~\ref{sec:para-setting} and~\ref{sec:complete-parameter-analysis}, respectively.
}

\stitle{Varying $\rho$}
Figure~\ref{fig:vary-time-rho} shows that running time is largely insensitive to $\rho$, remaining stable around 0.2 ms across different values. 
\update{Given the cross-dataset results in Appendix~\ref{sec:complete-parameter-analysis}, we use $\rho=0.8$ as a transferable default for \kalgo{} without dataset-specific tuning.}

\stitle{Varying $\beta$}
Figure~\ref{fig:vary-time-beta} shows that $\beta$ has a significant impact, as it controls the budget for post-refinement. 
A smaller $\beta$ generally leads to lower running time for $K \in \{20,50\}$. 
However, when $K=100$, a slightly larger $\beta$ is needed to ensure accurate identification of borderline items, unless a smaller $\epsilon$ is used.

\stitle{Varying $\kappa$}
Figure~\ref{fig:vary-time-kap} indicates that the performance is relatively insensitive to $\kappa$, since increasing it yields little benefit once adequate recall is achieved.
Moreover, the additional refinement cost remains limited, resulting in only marginal changes in running time.

\begin{table}[!t]
\renewcommand{\arraystretch}{0.9}
\centering
\updatefig{}
\caption{\update{Runtime (ms) for popular query items at top-$20$.}}
\label{tab:extend-KASC}
\vspace{-2ex}
\begin{small}
\begin{tabular}{c|ccc}
\hline
\bf Dataset
& \bf \algo{}
& \bf \kalgo{}
& \bf \kalgoc{} \\
\hline
Movielens
& 257.406
& \textbf{202.361}
& 1327.461 \\

Gowalla
& 1.500
& \textbf{1.055}
& 3.185 \\

AmazonBook
& 3.434
& \textbf{2.626}
& 9.207 \\

SteamGame
& 354.812
& \textbf{79.181}
& 6971.122 \\

MIND
& 236.905
& \textbf{30.067}
& 1807.847 \\

Twitch
& 200.629
& \textbf{21.243}
& 1011.657 \\

Yambda
& 1377.023
& \textbf{319.663}
& 85208.520 \\

MAG
& 39.637
& \textbf{13.584}
& 384.344 \\
\hline
\end{tabular}
\end{small}
\arrayrulecolor{black}
\end{table}

\update{
\subsection{Correctness-Efficiency Trade-off of \kalgo{}}\label{sec:evalue-correctness}
This section investigates the trade-off between correctness and efficiency. In addition to the heuristic \kalgo{}, we investigate a correctness-guaranteed variant, \kalgoc{} certifying the top-$K$ results through confidence-bound separation. The algorithmic details and correctness analysis are deferred to Appendix~\ref{sec:KASC-cor}.
Table~\ref{tab:extend-KASC} reports the average query time of \algo{}, \kalgo{}, and \kalgoc{} for top-$20$ Swing queries with over $99.9\%$ empirical precision. Overall, \kalgo{} is consistently the most efficient, achieving up to $9.44\times$ speedup over \algo{}, whereas \kalgoc{} incurs substantial overhead for correctness certification. 
Although both \kalgo{} and \kalgoc{} rely on Eq.~\eqref{eq:up-low-bound}, \kalgoc{} requires substantially more samples to sufficiently tighten the bounds to separate the $K$-th and $(K+1)$-th items than \kalgo{} does for range-based identification. Thus, we adopt the heuristic \kalgo{} as our primary solution for the accuracy-efficiency trade-off.
}

\section{Conclusion}
This paper presents \algo{} and \kalgo{} for answering approximate and top-$K$ Swing queries. The main ideas include (i) adaptively combining \texttt{GNS} and \texttt{USS} to efficiently process items of varying degrees, and (ii) utilizing a filter-refinement paradigm with heuristic optimizations to drastically minimize sampling overhead. Compared to competing methods, our solutions significantly reduce computational cost and offer rigorous theoretical guarantees on both relative and additive errors. 
Extensive experiments demonstrate that \algo{} and \kalgo{} outperform exact and approximate baselines by orders of magnitude in query efficiency, enabling fast and accurate i2i retrieval on billion-scale user-item graphs.

\begin{acks}
This work was supported by the NSFC (No. 62302414), the Hong Kong RGC ECS grant (No. 22202623), GRF grant (No. 12201826), and YCRG (No. C2003-23Y), and Guangdong and Hong Kong Universities ``1+1+1'' Joint Research Collaboration Scheme, project No.: 2025A0505000002.
\end{acks}

\balance
\bibliographystyle{ACM-Reference-Format}
\bibliography{sample-base}


\begin{thebibliography}{81}


\ifx \showCODEN    \undefined \def \showCODEN     #1{\unskip}     \fi
\ifx \showDOI      \undefined \def \showDOI       #1{#1}\fi
\ifx \showISBNx    \undefined \def \showISBNx     #1{\unskip}     \fi
\ifx \showISBNxiii \undefined \def \showISBNxiii  #1{\unskip}     \fi
\ifx \showISSN     \undefined \def \showISSN      #1{\unskip}     \fi
\ifx \showLCCN     \undefined \def \showLCCN      #1{\unskip}     \fi
\ifx \shownote     \undefined \def \shownote      #1{#1}          \fi
\ifx \showarticletitle \undefined \def \showarticletitle #1{#1}   \fi
\ifx \showURL      \undefined \def \showURL       {\relax}        \fi
\providecommand\bibfield[2]{#2}
\providecommand\bibinfo[2]{#2}
\providecommand\natexlab[1]{#1}
\providecommand\showeprint[2][]{arXiv:#2}

\bibitem[{Aliyun.com}(2025)]%
        {aliyunswing}
\bibfield{author}{\bibinfo{person}{{Aliyun.com}}.} \bibinfo{year}{2025}\natexlab{}.
\newblock \bibinfo{title}{{Official Documentation}}.
\newblock \bibinfo{howpublished}{\url{https://help.aliyun.com/zh/airec/what-is-pai-rec/user-guide/swing-algorithm-tools}}.
\newblock
\newblock
\shownote{Online; accessed 17 June 2026}.


\bibitem[Antonellis et~al\mbox{.}(2008)]%
        {antonellis2008simrank++}
\bibfield{author}{\bibinfo{person}{Ioannis Antonellis}, \bibinfo{person}{Hector Garcia-Molina}, {and} \bibinfo{person}{Chi-Chao Chang}.} \bibinfo{year}{2008}\natexlab{}.
\newblock \showarticletitle{Simrank++ query rewriting through link analysis of the clickgraph (poster)}. In \bibinfo{booktitle}{\emph{Proceedings of the 17th international conference on World Wide Web}}. \bibinfo{pages}{1177--1178}.
\newblock


\bibitem[Audibert et~al\mbox{.}(2007)]%
        {audibert2007tuning}
\bibfield{author}{\bibinfo{person}{Jean-Yves Audibert}, \bibinfo{person}{R{\'e}mi Munos}, {and} \bibinfo{person}{Csaba Szepesv{\'a}ri}.} \bibinfo{year}{2007}\natexlab{}.
\newblock \showarticletitle{Tuning bandit algorithms in stochastic environments}. In \bibinfo{booktitle}{\emph{ALT}}. \bibinfo{pages}{150--165}.
\newblock


\bibitem[Baeza-Yates and Salinger(2005)]%
        {baeza2005experimental}
\bibfield{author}{\bibinfo{person}{Ricardo Baeza-Yates} {and} \bibinfo{person}{Alejandro Salinger}.} \bibinfo{year}{2005}\natexlab{}.
\newblock \showarticletitle{Experimental analysis of a fast intersection algorithm for sorted sequences}. In \bibinfo{booktitle}{\emph{International Symposium on String Processing and Information Retrieval}}. Springer, \bibinfo{pages}{13--24}.
\newblock


\bibitem[Bao et~al\mbox{.}(2025)]%
        {bao2025grain}
\bibfield{author}{\bibinfo{person}{Wei Bao}, \bibinfo{person}{Hao Chen}, \bibinfo{person}{Bang Lin}, \bibinfo{person}{Tao Zhang}, {and} \bibinfo{person}{Chengfu Huo}.} \bibinfo{year}{2025}\natexlab{}.
\newblock \showarticletitle{GRAIN: Group-Reinforced Adaptive Interaction Network for Cold-Start CTR Prediction in E-commerce Search}. In \bibinfo{booktitle}{\emph{Proceedings of the 48th International ACM SIGIR Conference on Research and Development in Information Retrieval}}. \bibinfo{pages}{4275--4279}.
\newblock


\bibitem[Bao et~al\mbox{.}(2024)]%
        {bao2024beyond}
\bibfield{author}{\bibinfo{person}{Wentian Bao}, \bibinfo{person}{Hu Liu}, \bibinfo{person}{Kai Zheng}, \bibinfo{person}{Chao Zhang}, \bibinfo{person}{Shunyu Zhang}, \bibinfo{person}{Enyun Yu}, \bibinfo{person}{Wenwu Ou}, {and} \bibinfo{person}{Yang Song}.} \bibinfo{year}{2024}\natexlab{}.
\newblock \showarticletitle{Beyond relevance: Improving user engagement by personalization for short-video search}.
\newblock \bibinfo{journal}{\emph{arXiv preprint arXiv:2409.11281}} (\bibinfo{year}{2024}).
\newblock


\bibitem[Bao et~al\mbox{.}(2009)]%
        {bao2009stacking}
\bibfield{author}{\bibinfo{person}{Xinlong Bao}, \bibinfo{person}{Lawrence Bergman}, {and} \bibinfo{person}{Rich Thompson}.} \bibinfo{year}{2009}\natexlab{}.
\newblock \showarticletitle{Stacking recommendation engines with additional meta-features}. In \bibinfo{booktitle}{\emph{Proceedings of the third ACM conference on Recommender systems}}. \bibinfo{pages}{109--116}.
\newblock


\bibitem[Bille et~al\mbox{.}(2007)]%
        {bille2007fast}
\bibfield{author}{\bibinfo{person}{Philip Bille}, \bibinfo{person}{Anna Pagh}, {and} \bibinfo{person}{Rasmus Pagh}.} \bibinfo{year}{2007}\natexlab{}.
\newblock \showarticletitle{Fast evaluation of union-intersection expressions}. In \bibinfo{booktitle}{\emph{International Symposium on Algorithms and Computation}}. Springer, \bibinfo{pages}{739--750}.
\newblock


\bibitem[Cao et~al\mbox{.}(2022)]%
        {cao2022gift}
\bibfield{author}{\bibinfo{person}{Yi Cao}, \bibinfo{person}{Sihao Hu}, \bibinfo{person}{Yu Gong}, \bibinfo{person}{Zhao Li}, \bibinfo{person}{Yazheng Yang}, \bibinfo{person}{Qingwen Liu}, {and} \bibinfo{person}{Shouling Ji}.} \bibinfo{year}{2022}\natexlab{}.
\newblock \showarticletitle{Gift: Graph-guided feature transfer for cold-start video click-through rate prediction}. In \bibinfo{booktitle}{\emph{Proceedings of the 31st ACM International Conference on Information \& Knowledge Management}}. \bibinfo{pages}{2964--2973}.
\newblock


\bibitem[Chan et~al\mbox{.}(1982)]%
        {chan1982updating}
\bibfield{author}{\bibinfo{person}{Tony~F Chan}, \bibinfo{person}{Gene~H Golub}, {and} \bibinfo{person}{Randall~J LeVeque}.} \bibinfo{year}{1982}\natexlab{}.
\newblock \showarticletitle{Updating formulae and a pairwise algorithm for computing sample variances}. In \bibinfo{booktitle}{\emph{COMPSTAT 1982 5th symposium held at Toulouse 1982: Part I: Proceedings in Computational Statistics}}. Springer, \bibinfo{pages}{30--41}.
\newblock


\bibitem[Cho et~al\mbox{.}(2011)]%
        {cho2011friendship}
\bibfield{author}{\bibinfo{person}{Eunjoon Cho}, \bibinfo{person}{Seth~A Myers}, {and} \bibinfo{person}{Jure Leskovec}.} \bibinfo{year}{2011}\natexlab{}.
\newblock \showarticletitle{Friendship and mobility: user movement in location-based social networks}. In \bibinfo{booktitle}{\emph{SIGKDD}}. \bibinfo{pages}{1082--1090}.
\newblock


\bibitem[Chung and Lu(2006)]%
        {chung2006concentration}
\bibfield{author}{\bibinfo{person}{Fan Chung} {and} \bibinfo{person}{Linyuan Lu}.} \bibinfo{year}{2006}\natexlab{}.
\newblock \showarticletitle{Concentration inequalities and martingale inequalities: a survey}.
\newblock \bibinfo{journal}{\emph{Internet mathematics}} \bibinfo{volume}{3}, \bibinfo{number}{1} (\bibinfo{year}{2006}), \bibinfo{pages}{79--127}.
\newblock


\bibitem[Deng et~al\mbox{.}(2009)]%
        {deng2009generalized}
\bibfield{author}{\bibinfo{person}{Hongbo Deng}, \bibinfo{person}{Michael~R Lyu}, {and} \bibinfo{person}{Irwin King}.} \bibinfo{year}{2009}\natexlab{}.
\newblock \showarticletitle{A generalized co-hits algorithm and its application to bipartite graphs}. In \bibinfo{booktitle}{\emph{Proceedings of the 15th ACM SIGKDD international conference on Knowledge discovery and data mining}}. \bibinfo{pages}{239--248}.
\newblock


\bibitem[Dey et~al\mbox{.}(2020)]%
        {dey2020p}
\bibfield{author}{\bibinfo{person}{Prasenjit Dey}, \bibinfo{person}{Kunal Goel}, {and} \bibinfo{person}{Rahul Agrawal}.} \bibinfo{year}{2020}\natexlab{}.
\newblock \showarticletitle{P-simrank: Extending simrank to scale-free bipartite networks}. In \bibinfo{booktitle}{\emph{WWW}}. \bibinfo{pages}{3084--3090}.
\newblock


\bibitem[Ding and K{\"o}nig(2011)]%
        {ding2011fast}
\bibfield{author}{\bibinfo{person}{Bolin Ding} {and} \bibinfo{person}{Arnd~Christian K{\"o}nig}.} \bibinfo{year}{2011}\natexlab{}.
\newblock \showarticletitle{Fast set intersection in memory}.
\newblock \bibinfo{journal}{\emph{Proceedings of the VLDB Endowment}} \bibinfo{volume}{4}, \bibinfo{number}{4} (\bibinfo{year}{2011}), \bibinfo{pages}{255--266}.
\newblock


\bibitem[et~al.(2003)]%
        {adamic2003friends}
\bibfield{author}{\bibinfo{person}{Adamic et al.}} \bibinfo{year}{2003}\natexlab{}.
\newblock \showarticletitle{Friends and neighbors on the web}.
\newblock \bibinfo{journal}{\emph{Social networks}} \bibinfo{volume}{25}, \bibinfo{number}{3} (\bibinfo{year}{2003}), \bibinfo{pages}{211--230}.
\newblock


\bibitem[et~al.(2007)]%
        {bayardo2007scaling}
\bibfield{author}{\bibinfo{person}{Bayardo et al.}} \bibinfo{year}{2007}\natexlab{}.
\newblock \showarticletitle{Scaling up all pairs similarity search}. In \bibinfo{booktitle}{\emph{WWW}}. \bibinfo{pages}{131--140}.
\newblock


\bibitem[et~al.(2016)]%
        {barkan2016item2vec}
\bibfield{author}{\bibinfo{person}{Barkan et al.}} \bibinfo{year}{2016}\natexlab{}.
\newblock \showarticletitle{Item2vec: neural item embedding for collaborative filtering}. In \bibinfo{booktitle}{\emph{2016 IEEE 26th international workshop on machine learning for signal processing (MLSP)}}. IEEE, \bibinfo{pages}{1--6}.
\newblock


\bibitem[et~al.(2024)]%
        {deng2024mmbee}
\bibfield{author}{\bibinfo{person}{Deng et al.}} \bibinfo{year}{2024}\natexlab{}.
\newblock \showarticletitle{MMBee: Live Streaming Gift-Sending Recommendations via Multi-Modal Fusion and Behaviour Expansion}. In \bibinfo{booktitle}{\emph{KDD}}. \bibinfo{pages}{4896--4905}.
\newblock


\bibitem[et~al.(2026)]%
        {guo2026onesug}
\bibfield{author}{\bibinfo{person}{Guo et al.}} \bibinfo{year}{2026}\natexlab{}.
\newblock \showarticletitle{Onesug: The unified end-to-end generative framework for e-commerce query suggestion}. In \bibinfo{booktitle}{\emph{AAAI}}, Vol.~\bibinfo{volume}{40}. \bibinfo{pages}{14774--14782}.
\newblock


\bibitem[et~al.(2023a)]%
        {huan2023samd}
\bibfield{author}{\bibinfo{person}{Huan et al.}} \bibinfo{year}{2023}\natexlab{a}.
\newblock \showarticletitle{SAMD: An Industrial Framework for Heterogeneous Multi-Scenario Recommendation}. In \bibinfo{booktitle}{\emph{KDD}}. \bibinfo{pages}{4175--4184}.
\newblock


\bibitem[et~al.(2014)]%
        {kusumoto2014scalable}
\bibfield{author}{\bibinfo{person}{Kusumoto et al.}} \bibinfo{year}{2014}\natexlab{}.
\newblock \showarticletitle{Scalable similarity search for SimRank}. In \bibinfo{booktitle}{\emph{SIGMOD}}. \bibinfo{pages}{325--336}.
\newblock


\bibitem[et~al.(2022)]%
        {li2022query}
\bibfield{author}{\bibinfo{person}{Li et al.}} \bibinfo{year}{2022}\natexlab{}.
\newblock \showarticletitle{Query Rewriting in TaoBao Search}. In \bibinfo{booktitle}{\emph{CIKM}}. \bibinfo{pages}{3262--3271}.
\newblock


\bibitem[et~al.(2023b)]%
        {nguyen2023lightsage}
\bibfield{author}{\bibinfo{person}{Nguyen et al.}} \bibinfo{year}{2023}\natexlab{b}.
\newblock \showarticletitle{LightSAGE: Graph Neural Networks for Large Scale Item Retrieval in Shopee’s Advertisement Recommendation}. In \bibinfo{booktitle}{\emph{RecSys}}. \bibinfo{pages}{334--337}.
\newblock


\bibitem[et~al.(2001)]%
        {sarwar2001item}
\bibfield{author}{\bibinfo{person}{Sarwar et al.}} \bibinfo{year}{2001}\natexlab{}.
\newblock \showarticletitle{Item-based collaborative filtering recommendation algorithms}. In \bibinfo{booktitle}{\emph{WWW}}. \bibinfo{pages}{285--295}.
\newblock


\bibitem[et~al.(2017)]%
        {sharma2017hashes}
\bibfield{author}{\bibinfo{person}{Sharma et al.}} \bibinfo{year}{2017}\natexlab{}.
\newblock \showarticletitle{When hashes met wedges: A distributed algorithm for finding high similarity vectors}. In \bibinfo{booktitle}{\emph{WWW}}. \bibinfo{pages}{431--440}.
\newblock


\bibitem[et~al.(2018)]%
        {wei2018topppr}
\bibfield{author}{\bibinfo{person}{Wei et al.}} \bibinfo{year}{2018}\natexlab{}.
\newblock \showarticletitle{Topppr: top-k personalized pagerank queries with precision guarantees on large graphs}. In \bibinfo{booktitle}{\emph{SIGMOD}}. \bibinfo{pages}{441--456}.
\newblock


\bibitem[Feng et~al\mbox{.}(2025)]%
        {feng2025llm}
\bibfield{author}{\bibinfo{person}{Yinfu Feng}, \bibinfo{person}{Yanjing Wu}, \bibinfo{person}{Rong Xiao}, {and} \bibinfo{person}{Xiaoyi Zen}.} \bibinfo{year}{2025}\natexlab{}.
\newblock \showarticletitle{LLM-I2I: Boost Your Small Item2Item Recommendation Model with Large Language Model}.
\newblock \bibinfo{journal}{\emph{arXiv preprint arXiv:2512.21595}} (\bibinfo{year}{2025}).
\newblock


\bibitem[Han et~al\mbox{.}(2018)]%
        {han2018speeding}
\bibfield{author}{\bibinfo{person}{Shuo Han}, \bibinfo{person}{Lei Zou}, {and} \bibinfo{person}{Jeffrey~Xu Yu}.} \bibinfo{year}{2018}\natexlab{}.
\newblock \showarticletitle{Speeding up set intersections in graph algorithms using simd instructions}. In \bibinfo{booktitle}{\emph{Proceedings of the 2018 International Conference on Management of Data}}. \bibinfo{pages}{1587--1602}.
\newblock


\bibitem[Harper and Konstan(2015)]%
        {harper2015movielens}
\bibfield{author}{\bibinfo{person}{F~Maxwell Harper} {and} \bibinfo{person}{Joseph~A Konstan}.} \bibinfo{year}{2015}\natexlab{}.
\newblock \showarticletitle{The movielens datasets: History and context}.
\newblock \bibinfo{journal}{\emph{Acm transactions on interactive intelligent systems (tiis)}} \bibinfo{volume}{5}, \bibinfo{number}{4} (\bibinfo{year}{2015}), \bibinfo{pages}{1--19}.
\newblock


\bibitem[Haveliwala(2002)]%
        {haveliwala2002topic}
\bibfield{author}{\bibinfo{person}{Taher~H Haveliwala}.} \bibinfo{year}{2002}\natexlab{}.
\newblock \showarticletitle{Topic-sensitive pagerank}. In \bibinfo{booktitle}{\emph{Proceedings of the 11th international conference on World Wide Web}}. \bibinfo{pages}{517--526}.
\newblock


\bibitem[Hayashi(2022)]%
        {hayashi2022rethinking}
\bibfield{author}{\bibinfo{person}{Katsuhiko Hayashi}.} \bibinfo{year}{2022}\natexlab{}.
\newblock \showarticletitle{Rethinking correlation-based item-item similarities for recommender systems}. In \bibinfo{booktitle}{\emph{SIGIR}}. \bibinfo{pages}{2287--2291}.
\newblock


\bibitem[He and McAuley(2016)]%
        {he2016ups}
\bibfield{author}{\bibinfo{person}{Ruining He} {and} \bibinfo{person}{Julian McAuley}.} \bibinfo{year}{2016}\natexlab{}.
\newblock \showarticletitle{Ups and downs: Modeling the visual evolution of fashion trends with one-class collaborative filtering}. In \bibinfo{booktitle}{\emph{TheWebConf}}. \bibinfo{pages}{507--517}.
\newblock


\bibitem[Huang et~al\mbox{.}(2025)]%
        {huang2025comprehensive}
\bibfield{author}{\bibinfo{person}{Junjie Huang}, \bibinfo{person}{Jizheng Chen}, \bibinfo{person}{Jianghao Lin}, \bibinfo{person}{Jiarui Qin}, \bibinfo{person}{Ziming Feng}, \bibinfo{person}{Weinan Zhang}, {and} \bibinfo{person}{Yong Yu}.} \bibinfo{year}{2025}\natexlab{}.
\newblock \showarticletitle{A comprehensive survey on retrieval methods in recommender systems}.
\newblock \bibinfo{journal}{\emph{ACM Transactions on Information Systems}} \bibinfo{volume}{44}, \bibinfo{number}{1} (\bibinfo{year}{2025}), \bibinfo{pages}{1--43}.
\newblock


\bibitem[Hwang and Lin(1972)]%
        {hwang1972simple}
\bibfield{author}{\bibinfo{person}{Frank~K. Hwang} {and} \bibinfo{person}{Shen Lin}.} \bibinfo{year}{1972}\natexlab{}.
\newblock \showarticletitle{A simple algorithm for merging two disjoint linearly ordered sets}.
\newblock \bibinfo{journal}{\emph{SIAM J. Comput.}} \bibinfo{volume}{1}, \bibinfo{number}{1} (\bibinfo{year}{1972}), \bibinfo{pages}{31--39}.
\newblock


\bibitem[Inoue et~al\mbox{.}(2014)]%
        {inoue2014faster}
\bibfield{author}{\bibinfo{person}{Hiroshi Inoue}, \bibinfo{person}{Moriyoshi Ohara}, {and} \bibinfo{person}{Kenjiro Taura}.} \bibinfo{year}{2014}\natexlab{}.
\newblock \showarticletitle{Faster set intersection with SIMD instructions by reducing branch mispredictions}.
\newblock \bibinfo{journal}{\emph{Proceedings of the VLDB Endowment}} \bibinfo{volume}{8}, \bibinfo{number}{3} (\bibinfo{year}{2014}), \bibinfo{pages}{293--304}.
\newblock


\bibitem[Ismailov(2022)]%
        {ismailov2022two}
\bibfield{author}{\bibinfo{person}{Adilzhan Ismailov}.} \bibinfo{year}{2022}\natexlab{}.
\newblock \showarticletitle{Two-Stage Approach for Predicting Fashion Compatibility}.
\newblock  (\bibinfo{year}{2022}).
\newblock


\bibitem[Jeh and Widom(2002)]%
        {jeh2002simrank}
\bibfield{author}{\bibinfo{person}{Glen Jeh} {and} \bibinfo{person}{Jennifer Widom}.} \bibinfo{year}{2002}\natexlab{}.
\newblock \showarticletitle{Simrank: a measure of structural-context similarity}. In \bibinfo{booktitle}{\emph{Proceedings of the eighth ACM SIGKDD international conference on Knowledge discovery and data mining}}. \bibinfo{pages}{538--543}.
\newblock


\bibitem[Jeh and Widom(2003)]%
        {jeh2003scaling}
\bibfield{author}{\bibinfo{person}{Glen Jeh} {and} \bibinfo{person}{Jennifer Widom}.} \bibinfo{year}{2003}\natexlab{}.
\newblock \showarticletitle{Scaling personalized web search}. In \bibinfo{booktitle}{\emph{Proceedings of the 12th international conference on World Wide Web}}. \bibinfo{pages}{271--279}.
\newblock


\bibitem[Katz(1953)]%
        {katz1953new}
\bibfield{author}{\bibinfo{person}{Leo Katz}.} \bibinfo{year}{1953}\natexlab{}.
\newblock \showarticletitle{A new status index derived from sociometric analysis}.
\newblock \bibinfo{journal}{\emph{Psychometrika}} \bibinfo{volume}{18}, \bibinfo{number}{1} (\bibinfo{year}{1953}), \bibinfo{pages}{39--43}.
\newblock


\bibitem[Kersbergen and Schelter(2021)]%
        {kersbergen2021learnings}
\bibfield{author}{\bibinfo{person}{Barrie Kersbergen} {and} \bibinfo{person}{Sebastian Schelter}.} \bibinfo{year}{2021}\natexlab{}.
\newblock \showarticletitle{Learnings from a retail recommendation system on billions of interactions at bol. com}. In \bibinfo{booktitle}{\emph{ICDE}}. IEEE, \bibinfo{pages}{2447--2452}.
\newblock


\bibitem[Lemire et~al\mbox{.}(2016)]%
        {lemire2016simd}
\bibfield{author}{\bibinfo{person}{Daniel Lemire}, \bibinfo{person}{Leonid Boytsov}, {and} \bibinfo{person}{Nathan Kurz}.} \bibinfo{year}{2016}\natexlab{}.
\newblock \showarticletitle{SIMD compression and the intersection of sorted integers}.
\newblock \bibinfo{journal}{\emph{Software: Practice and Experience}} \bibinfo{volume}{46}, \bibinfo{number}{6} (\bibinfo{year}{2016}), \bibinfo{pages}{723--749}.
\newblock


\bibitem[Liang et~al\mbox{.}(2010)]%
        {liang2010connecting}
\bibfield{author}{\bibinfo{person}{Huizhi Liang}, \bibinfo{person}{Yue Xu}, \bibinfo{person}{Yuefeng Li}, \bibinfo{person}{Richi Nayak}, {and} \bibinfo{person}{Xiaohui Tao}.} \bibinfo{year}{2010}\natexlab{}.
\newblock \showarticletitle{Connecting users and items with weighted tags for personalized item recommendations}. In \bibinfo{booktitle}{\emph{Proceedings of the 21st ACM conference on Hypertext and hypermedia}}. \bibinfo{pages}{51--60}.
\newblock


\bibitem[{LibRecommender}(2020)]%
        {librecommed}
\bibfield{author}{\bibinfo{person}{{LibRecommender}}.} \bibinfo{year}{2020}\natexlab{}.
\newblock \bibinfo{title}{{LibRecommender}}.
\newblock \bibinfo{howpublished}{\url{https://librecommender.readthedocs.io/en/latest/api/algorithms/swing.html}}.
\newblock
\newblock
\shownote{Online; accessed 17 June 2026}.


\bibitem[Linden et~al\mbox{.}(2003)]%
        {linden2003amazon}
\bibfield{author}{\bibinfo{person}{Greg Linden}, \bibinfo{person}{Brent Smith}, {and} \bibinfo{person}{Jeremy York}.} \bibinfo{year}{2003}\natexlab{}.
\newblock \showarticletitle{Amazon. com recommendations: Item-to-item collaborative filtering}.
\newblock \bibinfo{journal}{\emph{IEEE Internet computing}} \bibinfo{volume}{7}, \bibinfo{number}{1} (\bibinfo{year}{2003}), \bibinfo{pages}{76--80}.
\newblock


\bibitem[Liu and Luo(2024)]%
        {liu2024bird}
\bibfield{author}{\bibinfo{person}{Haoyu Liu} {and} \bibinfo{person}{Siqiang Luo}.} \bibinfo{year}{2024}\natexlab{}.
\newblock \showarticletitle{Bird: Efficient approximation of bidirectional hidden personalized pagerank}.
\newblock \bibinfo{journal}{\emph{PVLDB}} \bibinfo{volume}{17}, \bibinfo{number}{9} (\bibinfo{year}{2024}), \bibinfo{pages}{2255--2268}.
\newblock


\bibitem[Liu et~al\mbox{.}(2015)]%
        {Liu2015Stability}
\bibfield{author}{\bibinfo{person}{Jianguo Liu}, \bibinfo{person}{Lei Hou}, \bibinfo{person}{Xue Pan}, \bibinfo{person}{Qiang Guo}, {and} \bibinfo{person}{Tao Zhou}.} \bibinfo{year}{2015}\natexlab{}.
\newblock \showarticletitle{Stability of similarity measurements for bipartite networks}.
\newblock \bibinfo{journal}{\emph{Scientific Reports}}  \bibinfo{volume}{6} (\bibinfo{year}{2015}).
\newblock
\urldef\tempurl%
\url{https://doi.org/10.1038/srep18653}
\showDOI{\tempurl}


\bibitem[Luo et~al\mbox{.}(2025)]%
        {luo2025trawl}
\bibfield{author}{\bibinfo{person}{Weiqing Luo}, \bibinfo{person}{Chonggang Song}, \bibinfo{person}{Lingling Yi}, {and} \bibinfo{person}{Gong Cheng}.} \bibinfo{year}{2025}\natexlab{}.
\newblock \showarticletitle{Trawl: External knowledge-enhanced recommendation with llm assistance}. In \bibinfo{booktitle}{\emph{Proceedings of the 34th ACM International Conference on Information and Knowledge Management}}. \bibinfo{pages}{5907--5914}.
\newblock


\bibitem[Lv et~al\mbox{.}(2023)]%
        {lv2023circle}
\bibfield{author}{\bibinfo{person}{Jingsong Lv}, \bibinfo{person}{Hongyang Chen}, \bibinfo{person}{Yao Qi}, {and} \bibinfo{person}{Lei Yu}.} \bibinfo{year}{2023}\natexlab{}.
\newblock \showarticletitle{Circle Feature Graphormer: Can Circle Features Stimulate Graph Transformer?}
\newblock \bibinfo{journal}{\emph{arXiv preprint arXiv:2309.06574}} (\bibinfo{year}{2023}).
\newblock


\bibitem[{Martin Leitner-Ankerl}(2025a)]%
        {ankerlhashmap}
\bibfield{author}{\bibinfo{person}{{Martin Leitner-Ankerl}}.} \bibinfo{year}{2025}\natexlab{a}.
\newblock \bibinfo{title}{{Ankerl's Hashmap}}.
\newblock \bibinfo{howpublished}{\url{https://github.com/martinus/unordered_dense}}.
\newblock
\newblock
\shownote{Online; accessed 17 April 2026}.


\bibitem[{Martin Leitner-Ankerl}(2025b)]%
        {ankerlhashmap-benchmark}
\bibfield{author}{\bibinfo{person}{{Martin Leitner-Ankerl}}.} \bibinfo{year}{2025}\natexlab{b}.
\newblock \bibinfo{title}{{Comprehensive C++ Hashmap Benchmarks 2022}}.
\newblock \bibinfo{howpublished}{\url{https://martin.ankerl.com/2022/08/27/hashmap-bench-01/}}.
\newblock
\newblock
\shownote{Online; accessed 17 April 2026}.


\bibitem[Motwani and Raghavan(1996)]%
        {motwani1996randomized}
\bibfield{author}{\bibinfo{person}{Rajeev Motwani} {and} \bibinfo{person}{Prabhakar Raghavan}.} \bibinfo{year}{1996}\natexlab{}.
\newblock \showarticletitle{Randomized algorithms}.
\newblock \bibinfo{journal}{\emph{ACM Computing Surveys (CSUR)}} \bibinfo{volume}{28}, \bibinfo{number}{1} (\bibinfo{year}{1996}), \bibinfo{pages}{33--37}.
\newblock


\bibitem[Pei et~al\mbox{.}(2024)]%
        {pei2024rimirec}
\bibfield{author}{\bibinfo{person}{Haolei Pei}, \bibinfo{person}{Yuanyuan Xu}, \bibinfo{person}{Yangping Zhu}, {and} \bibinfo{person}{Yuan Nie}.} \bibinfo{year}{2024}\natexlab{}.
\newblock \showarticletitle{RimiRec: Modeling Refined Multi-interest in Hierarchical Structure for Recommendation}. In \bibinfo{booktitle}{\emph{Companion Proceedings of the ACM Web Conference 2024}}. \bibinfo{pages}{746--749}.
\newblock


\bibitem[Ploshkin et~al\mbox{.}(2025)]%
        {ploshkin2025yambda}
\bibfield{author}{\bibinfo{person}{Alexander Ploshkin}, \bibinfo{person}{Vladislav Tytskiy}, \bibinfo{person}{Alexey Pismenny}, \bibinfo{person}{Vladimir Baikalov}, \bibinfo{person}{Evgeny Taychinov}, \bibinfo{person}{Artem Permiakov}, \bibinfo{person}{Daniil Burlakov}, {and} \bibinfo{person}{Eugene Krofto}.} \bibinfo{year}{2025}\natexlab{}.
\newblock \showarticletitle{Yambda-5B—A Large-Scale Multi-Modal Dataset for Ranking and Retrieval}. In \bibinfo{booktitle}{\emph{Proceedings of the Nineteenth ACM Conference on Recommender Systems}}. \bibinfo{pages}{894--901}.
\newblock


\bibitem[Rappaz et~al\mbox{.}(2021)]%
        {rappaz2021recommendation}
\bibfield{author}{\bibinfo{person}{J{\'e}r{\'e}mie Rappaz}, \bibinfo{person}{Julian McAuley}, {and} \bibinfo{person}{Karl Aberer}.} \bibinfo{year}{2021}\natexlab{}.
\newblock \showarticletitle{Recommendation on live-streaming platforms: Dynamic availability and repeat consumption}. In \bibinfo{booktitle}{\emph{Proceedings of the 15th ACM conference on recommender systems}}. \bibinfo{pages}{390--399}.
\newblock


\bibitem[Sanders and Transier(2007)]%
        {sanders2007intersection}
\bibfield{author}{\bibinfo{person}{Peter Sanders} {and} \bibinfo{person}{Frederik Transier}.} \bibinfo{year}{2007}\natexlab{}.
\newblock \showarticletitle{Intersection in integer inverted indices}. In \bibinfo{booktitle}{\emph{2007 Proceedings of the Ninth Workshop on Algorithm Engineering and Experiments (ALENEX)}}. SIAM, \bibinfo{pages}{71--83}.
\newblock


\bibitem[Shi et~al\mbox{.}(2020)]%
        {shi2020realtime}
\bibfield{author}{\bibinfo{person}{Jieming Shi}, \bibinfo{person}{Tianyuan Jin}, \bibinfo{person}{Renchi Yang}, \bibinfo{person}{Xiaokui Xiao}, {and} \bibinfo{person}{Yin Yang}.} \bibinfo{year}{2020}\natexlab{}.
\newblock \showarticletitle{Realtime index-free single source SimRank processing on web-scale graphs}.
\newblock \bibinfo{journal}{\emph{PVLDB}} \bibinfo{volume}{13}, \bibinfo{number}{7} (\bibinfo{year}{2020}), \bibinfo{pages}{966--980}.
\newblock


\bibitem[Shi et~al\mbox{.}(2014)]%
        {shi2014collaborative}
\bibfield{author}{\bibinfo{person}{Yue Shi}, \bibinfo{person}{Martha Larson}, {and} \bibinfo{person}{Alan Hanjalic}.} \bibinfo{year}{2014}\natexlab{}.
\newblock \showarticletitle{Collaborative filtering beyond the user-item matrix: A survey of the state of the art and future challenges}.
\newblock \bibinfo{journal}{\emph{ACM Computing Surveys (CSUR)}} \bibinfo{volume}{47}, \bibinfo{number}{1} (\bibinfo{year}{2014}), \bibinfo{pages}{1--45}.
\newblock


\bibitem[Song et~al\mbox{.}(2024)]%
        {song2024efficient}
\bibfield{author}{\bibinfo{person}{Yifan Song}, \bibinfo{person}{Xiaolong Chen}, \bibinfo{person}{Wenqing Lin}, \bibinfo{person}{Jia Li}, \bibinfo{person}{Chen Zhang}, \bibinfo{person}{Yan Zhou}, \bibinfo{person}{Lei Chen}, {and} \bibinfo{person}{Jing Tang}.} \bibinfo{year}{2024}\natexlab{}.
\newblock \showarticletitle{Efficient Graph Embedding Generation and Update for Large-Scale Temporal Graph}.
\newblock \bibinfo{journal}{\emph{Proceedings of the VLDB Endowment}} \bibinfo{volume}{18}, \bibinfo{number}{4} (\bibinfo{year}{2024}), \bibinfo{pages}{929--942}.
\newblock


\bibitem[Tsirogiannis et~al\mbox{.}(2009)]%
        {tsirogiannis2009improving}
\bibfield{author}{\bibinfo{person}{Dimitris Tsirogiannis}, \bibinfo{person}{Sudipto Guha}, {and} \bibinfo{person}{Nick Koudas}.} \bibinfo{year}{2009}\natexlab{}.
\newblock \showarticletitle{Improving the performance of list intersection}.
\newblock \bibinfo{journal}{\emph{Proceedings of the VLDB Endowment}} \bibinfo{volume}{2}, \bibinfo{number}{1} (\bibinfo{year}{2009}), \bibinfo{pages}{838--849}.
\newblock


\bibitem[Wan et~al\mbox{.}(2023)]%
        {wan2023interval}
\bibfield{author}{\bibinfo{person}{B Wan}, \bibinfo{person}{Z Wu}, \bibinfo{person}{M Han}, {and} \bibinfo{person}{M Wan}.} \bibinfo{year}{2023}\natexlab{}.
\newblock \showarticletitle{Interval-valued q-rung orthopair fuzzy Weber operator and its group decision-making application}.
\newblock  (\bibinfo{year}{2023}).
\newblock


\bibitem[Wan and McAuley(2018)]%
        {wan2018item}
\bibfield{author}{\bibinfo{person}{Mengting Wan} {and} \bibinfo{person}{Julian McAuley}.} \bibinfo{year}{2018}\natexlab{}.
\newblock \showarticletitle{Item recommendation on monotonic behavior chains}. In \bibinfo{booktitle}{\emph{Proceedings of the 12th ACM conference on recommender systems}}. \bibinfo{pages}{86--94}.
\newblock


\bibitem[Wang et~al\mbox{.}(2018)]%
        {wang2018billion}
\bibfield{author}{\bibinfo{person}{Jizhe Wang}, \bibinfo{person}{Pipei Huang}, \bibinfo{person}{Huan Zhao}, \bibinfo{person}{Zhibo Zhang}, \bibinfo{person}{Binqiang Zhao}, {and} \bibinfo{person}{Dik~Lun Lee}.} \bibinfo{year}{2018}\natexlab{}.
\newblock \showarticletitle{Billion-scale commodity embedding for e-commerce recommendation in alibaba}. In \bibinfo{booktitle}{\emph{Proceedings of the 24th ACM SIGKDD international conference on knowledge discovery \& data mining}}. \bibinfo{pages}{839--848}.
\newblock


\bibitem[Wang et~al\mbox{.}(2026)]%
        {wang2026pi2i}
\bibfield{author}{\bibinfo{person}{Shaoqing Wang}, \bibinfo{person}{Yingcai Ma}, \bibinfo{person}{Kairui Fu}, \bibinfo{person}{Ziyang Wang}, \bibinfo{person}{Dunxian Huang}, \bibinfo{person}{Yuliang Yan}, {and} \bibinfo{person}{Jian Wu}.} \bibinfo{year}{2026}\natexlab{}.
\newblock \showarticletitle{PI2I: A Personalized Item-Based Collaborative Filtering Retrieval Framework}.
\newblock \bibinfo{journal}{\emph{arXiv preprint arXiv:2601.16815}} (\bibinfo{year}{2026}).
\newblock


\bibitem[Welford(1962)]%
        {welford1962note}
\bibfield{author}{\bibinfo{person}{Barry~Payne Welford}.} \bibinfo{year}{1962}\natexlab{}.
\newblock \showarticletitle{Note on a method for calculating corrected sums of squares and products}.
\newblock \bibinfo{journal}{\emph{Technometrics}} \bibinfo{volume}{4}, \bibinfo{number}{3} (\bibinfo{year}{1962}), \bibinfo{pages}{419--420}.
\newblock


\bibitem[Wu et~al\mbox{.}(2021)]%
        {wu2021self}
\bibfield{author}{\bibinfo{person}{Jiancan Wu}, \bibinfo{person}{Xiang Wang}, \bibinfo{person}{Fuli Feng}, \bibinfo{person}{Xiangnan He}, \bibinfo{person}{Liang Chen}, \bibinfo{person}{Jianxun Lian}, {and} \bibinfo{person}{Xing Xie}.} \bibinfo{year}{2021}\natexlab{}.
\newblock \showarticletitle{Self-supervised graph learning for recommendation}. In \bibinfo{booktitle}{\emph{Proceedings of the 44th international ACM SIGIR conference on research and development in information retrieval}}. \bibinfo{pages}{726--735}.
\newblock


\bibitem[Wu et~al\mbox{.}(2024)]%
        {wu2024hi}
\bibfield{author}{\bibinfo{person}{Yanjing Wu}, \bibinfo{person}{Yinfu Feng}, \bibinfo{person}{Jian Wang}, \bibinfo{person}{Wenji Zhou}, \bibinfo{person}{Yunan Ye}, \bibinfo{person}{Rong Xiao}, {and} \bibinfo{person}{Jun Xiao}.} \bibinfo{year}{2024}\natexlab{}.
\newblock \showarticletitle{Hi-gen: Generative retrieval for large-scale personalized e-commerce search}.
\newblock \bibinfo{journal}{\emph{arXiv preprint arXiv:2404.15675}} (\bibinfo{year}{2024}).
\newblock


\bibitem[Xia et~al\mbox{.}(2026)]%
        {xia2026qarm}
\bibfield{author}{\bibinfo{person}{Tian Xia}, \bibinfo{person}{Jiaqi Zhang}, \bibinfo{person}{Yueyang Liu}, \bibinfo{person}{Hongjian Dou}, \bibinfo{person}{Tingya Yin}, \bibinfo{person}{Jiangxia Cao}, \bibinfo{person}{Xulei Liang}, \bibinfo{person}{Tianlu Xie}, \bibinfo{person}{Lihao Liu}, \bibinfo{person}{Xiang Chen}, {et~al\mbox{.}}} \bibinfo{year}{2026}\natexlab{}.
\newblock \showarticletitle{QARM V2: Quantitative Alignment Multi-Modal Recommendation for Reasoning User Sequence Modeling}.
\newblock \bibinfo{journal}{\emph{arXiv preprint arXiv:2602.08559}} (\bibinfo{year}{2026}).
\newblock


\bibitem[Xu et~al\mbox{.}(2023)]%
        {xu2023multi}
\bibfield{author}{\bibinfo{person}{Jingcao Xu}, \bibinfo{person}{Chaokun Wang}, \bibinfo{person}{Cheng Wu}, \bibinfo{person}{Yang Song}, \bibinfo{person}{Kai Zheng}, \bibinfo{person}{Xiaowei Wang}, \bibinfo{person}{Changping Wang}, \bibinfo{person}{Guorui Zhou}, {and} \bibinfo{person}{Kun Gai}.} \bibinfo{year}{2023}\natexlab{}.
\newblock \showarticletitle{Multi-behavior Self-supervised Learning for Recommendation}.
\newblock \bibinfo{journal}{\emph{arXiv preprint arXiv:2305.18238}} (\bibinfo{year}{2023}).
\newblock


\bibitem[Xu et~al\mbox{.}(2022)]%
        {xu2022mixture}
\bibfield{author}{\bibinfo{person}{Zhenhui Xu}, \bibinfo{person}{Meng Zhao}, \bibinfo{person}{Liqun Liu}, \bibinfo{person}{Lei Xiao}, \bibinfo{person}{Xiaopeng Zhang}, {and} \bibinfo{person}{Bifeng Zhang}.} \bibinfo{year}{2022}\natexlab{}.
\newblock \showarticletitle{Mixture of virtual-kernel experts for multi-objective user profile modeling}. In \bibinfo{booktitle}{\emph{Proceedings of the 28th ACM SIGKDD Conference on Knowledge Discovery and Data Mining}}. \bibinfo{pages}{4257--4267}.
\newblock


\bibitem[Xue et~al\mbox{.}(2026)]%
        {xue2026generative}
\bibfield{author}{\bibinfo{person}{Ben Xue}, \bibinfo{person}{Dan Liu}, \bibinfo{person}{Lixiang Wang}, \bibinfo{person}{Mingjie Sun}, \bibinfo{person}{Peng Wang}, \bibinfo{person}{Pengfei Zhang}, \bibinfo{person}{Shaoyun Shi}, \bibinfo{person}{Tianyu Xu}, \bibinfo{person}{Yunhao Sha}, \bibinfo{person}{Zhiqiang Liu}, {et~al\mbox{.}}} \bibinfo{year}{2026}\natexlab{}.
\newblock \showarticletitle{Generative Recommendation for Large-Scale Advertising}.
\newblock \bibinfo{journal}{\emph{arXiv preprint arXiv:2602.22732}} (\bibinfo{year}{2026}).
\newblock


\bibitem[Xue et~al\mbox{.}(2025)]%
        {xue2025e2e}
\bibfield{author}{\bibinfo{person}{Rui Xue}, \bibinfo{person}{Shichao Zhu}, \bibinfo{person}{Liang Qin}, \bibinfo{person}{Guangmou Pan}, \bibinfo{person}{Yang Song}, {and} \bibinfo{person}{Tianfu Wu}.} \bibinfo{year}{2025}\natexlab{}.
\newblock \showarticletitle{E2E-GRec: An End-to-End Joint Training Framework for Graph Neural Networks and Recommender Systems}.
\newblock \bibinfo{journal}{\emph{arXiv preprint arXiv:2511.20564}} (\bibinfo{year}{2025}).
\newblock


\bibitem[Yang et~al\mbox{.}(2022a)]%
        {yang2022task}
\bibfield{author}{\bibinfo{person}{Jieyu Yang}, \bibinfo{person}{Zhaoxin Huan}, \bibinfo{person}{Yong He}, \bibinfo{person}{Ke Ding}, \bibinfo{person}{Liang Zhang}, \bibinfo{person}{Xiaolu Zhang}, \bibinfo{person}{Jun Zhou}, {and} \bibinfo{person}{Linjian Mo}.} \bibinfo{year}{2022}\natexlab{a}.
\newblock \showarticletitle{Task Similarity Aware Meta Learning for Cold-Start Recommendation}. In \bibinfo{booktitle}{\emph{Proceedings of the 31st ACM International Conference on Information \& Knowledge Management}}. \bibinfo{pages}{4630--4634}.
\newblock


\bibitem[Yang et~al\mbox{.}(2022b)]%
        {yang2022hicf}
\bibfield{author}{\bibinfo{person}{Menglin Yang}, \bibinfo{person}{Zhihao Li}, \bibinfo{person}{Min Zhou}, \bibinfo{person}{Jiahong Liu}, {and} \bibinfo{person}{Irwin King}.} \bibinfo{year}{2022}\natexlab{b}.
\newblock \showarticletitle{Hicf: Hyperbolic informative collaborative filtering}. In \bibinfo{booktitle}{\emph{Proceedings of the 28th ACM SIGKDD conference on knowledge discovery and data mining}}. \bibinfo{pages}{2212--2221}.
\newblock


\bibitem[Yang(2022)]%
        {yang2022efficient}
\bibfield{author}{\bibinfo{person}{Renchi Yang}.} \bibinfo{year}{2022}\natexlab{}.
\newblock \showarticletitle{Efficient and effective similarity search over bipartite graphs}. In \bibinfo{booktitle}{\emph{WWW}}. \bibinfo{pages}{308--318}.
\newblock


\bibitem[Yang and Shi(2024)]%
        {yang2024efficient}
\bibfield{author}{\bibinfo{person}{Renchi Yang} {and} \bibinfo{person}{Jieming Shi}.} \bibinfo{year}{2024}\natexlab{}.
\newblock \showarticletitle{Efficient high-quality clustering for large bipartite graphs}.
\newblock \bibinfo{journal}{\emph{Proceedings of the ACM on Management of Data}} \bibinfo{volume}{2}, \bibinfo{number}{1} (\bibinfo{year}{2024}), \bibinfo{pages}{1--27}.
\newblock


\bibitem[Yang et~al\mbox{.}(2022c)]%
        {yang2022scalable}
\bibfield{author}{\bibinfo{person}{Renchi Yang}, \bibinfo{person}{Jieming Shi}, \bibinfo{person}{Keke Huang}, {and} \bibinfo{person}{Xiaokui Xiao}.} \bibinfo{year}{2022}\natexlab{c}.
\newblock \showarticletitle{Scalable and Effective Bipartite Network Embedding}. In \bibinfo{booktitle}{\emph{Proceedings of the 2022 International Conference on Management of Data}}. \bibinfo{pages}{1977--1991}.
\newblock


\bibitem[Yang et~al\mbox{.}(2020)]%
        {yang2020large}
\bibfield{author}{\bibinfo{person}{Xiaoyong Yang}, \bibinfo{person}{Yadong Zhu}, \bibinfo{person}{Yi Zhang}, \bibinfo{person}{Xiaobo Wang}, {and} \bibinfo{person}{Quan Yuan}.} \bibinfo{year}{2020}\natexlab{}.
\newblock \showarticletitle{Large scale product graph construction for recommendation in e-commerce}.
\newblock \bibinfo{journal}{\emph{arXiv preprint arXiv:2010.05525}} (\bibinfo{year}{2020}).
\newblock


\bibitem[Zhang et~al\mbox{.}(2023)]%
        {zhang2023contrastive}
\bibfield{author}{\bibinfo{person}{Bo-Wen Zhang}, \bibinfo{person}{Yan Yan}, {and} \bibinfo{person}{Jiapei Yu}.} \bibinfo{year}{2023}\natexlab{}.
\newblock \showarticletitle{Contrastive Learning of Sentence Embeddings in Product Search}. In \bibinfo{booktitle}{\emph{ICASSP 2023-2023 IEEE International Conference on Acoustics, Speech and Signal Processing (ICASSP)}}. IEEE, \bibinfo{pages}{1--5}.
\newblock


\bibitem[Zhang et~al\mbox{.}(2024)]%
        {zhang2024bigset}
\bibfield{author}{\bibinfo{person}{Shiding Zhang}, \bibinfo{person}{Jianye Yang}, \bibinfo{person}{Wenjie Zhang}, \bibinfo{person}{Shiyu Yang}, \bibinfo{person}{Ying Zhang}, {and} \bibinfo{person}{Xuemin Lin}.} \bibinfo{year}{2024}\natexlab{}.
\newblock \showarticletitle{BigSet: An Efficient Set Intersection Approach}.
\newblock \bibinfo{journal}{\emph{IEEE Transactions on Knowledge and Data Engineering}} \bibinfo{volume}{36}, \bibinfo{number}{12} (\bibinfo{year}{2024}), \bibinfo{pages}{7677--7691}.
\newblock


\bibitem[Zzh et~al\mbox{.}(2022)]%
        {zzh2022industrial}
\bibfield{author}{\bibinfo{person}{Zzh}, \bibinfo{person}{Wei Zhang}, {and} \bibinfo{person}{Wentao}.} \bibinfo{year}{2022}\natexlab{}.
\newblock \showarticletitle{Industrial Solution in Fashion-domain Recommendation by an Efficient Pipeline using GNN and Lightgbm}.
\newblock In \bibinfo{booktitle}{\emph{Proceedings of the Recommender Systems Challenge 2022}}. \bibinfo{pages}{45--49}.
\newblock


\end{thebibliography}

\appendix
\section{The \texttt{QFilter++} Algorithm}\label{sec:FSI}

In the following, we detail the efficiency optimizations for BSR construction and intersection cardinality computation implemented in \texttt{QFilter++}, extending the design of \texttt{QFilter}~\cite{han2018speeding} to user-item graphs.

\begin{table}[h]
\centering
\caption{BSR Construction Overhead.}
\label{tab:BSRcost}
\vspace{-2ex}
\begin{small}
\addtolength{\tabcolsep}{-0.4em}
\resizebox{\columnwidth}{!}{
\begin{tabular}{c|*{4}{ccc}}
\toprule
\multirow{2}{*}{\makecell{Order\\ Scheme}} & 
\multicolumn{2}{c}{Twitch} & 
\multicolumn{2}{c}{Yambda} &
\multicolumn{2}{c}{Mag} &\\
 \cmidrule(lr){2-3} \cmidrule(lr){4-5} \cmidrule(lr){6-7} 
\multicolumn{1}{c|}{}  & \multicolumn{1}{c}{Prep.(s)} & \multicolumn{1}{c}{Time(ms)} & \multicolumn{1}{c}{Prep.(s)} & \multicolumn{1}{c}{Time(ms)} & \multicolumn{1}{c}{Prep.(s)} & \multicolumn{1}{c}{Time(ms)} \\
\midrule
 \texttt{Ori} 
& 31.06 & 0.813 
& 293.6 & 11.262 
& 126.4 & 11.759\\

 \texttt{Deg}
& 31.55 & 0.825 
& 291.7 & 12.658
& 170.7 & 8.019 \\

\texttt{GRO}
& 99.55 & 0.680 
& 8635 & 7.397 
& 1470	& 6.968\\

\texttt{Ours}
& 44.07 & 0.681 
& 1285 & 7.232 
& 216.0 & 6.816\\

\bottomrule
\end{tabular}
}
\end{small}
\end{table}

\subsection{Optimized BSR Construction}
To address the high preprocessing cost of BSR construction, we propose a simple yet effective sampling-based optimization strategy. As shown in Table~\ref{tab:BSRcost}, the BSR structure is sensitive to node ordering before construction, and using the original ordering can be up to twice as slow as using GRO~\cite{han2018speeding}. However, GRO incurs a high computational cost of  $O(\log|\I| \cdot \sum_{v\in \I}\sum_{u\in\U(v)}d(v)d(u))$ as it greedily selects locally optimal items.
In contrast, we observe that the state chunk size of BSR is typically small (e.g., 32), making it inefficient to spend substantial computation on estimating exact contributions for local optimization. Instead, we propose to approximate the contribution of items via random sampling, namely LightGRO, which achieves comparable effectiveness while significantly reducing computational overhead.

Specifically, we begin with an adaptive degree-based sorting procedure (Lines 1–5). Under the power-law assumption, items are partitioned into a low-degree set $\mathcal{L}$ and a high-degree set $\mathcal{H}$ (Line 2) using an adaptive threshold $\tau_d$ (Line 1). We sort $\mathcal{L}$ with \texttt{BucketSort} and $\mathcal{H}$ with \texttt{QuickSort}, achieving near-linear complexity in practice. 
The subsequent steps follow those of GRO~\cite{han2018speeding}. When filling an empty bucket, we select the remaining item $v$ with the largest degree from $\mathcal{M}_0$ (Line 8), place it into a new bucket, and remove it from $\mathcal{M}_0$ (Line 10). Otherwise, if the current bucket is not empty, we select the item $v$ with the largest weight from the top of $\mathcal{Q}$, where $\mathcal{Q}$ is a priority queue. Instead of examining all neighbors, we sample a set of $S$ neighbors from $\U(v)$. We then accumulate weights from the corresponding items $v_t$ (Lines 13-14). To reduce the overhead of frequent priority queue updates, the updates are batched and applied together (Line 15). After reordering, we construct the BSR with bucket size 32 following~\cite{han2018speeding} (Lines 17–21).

\subsection{Light Intersection Cardinality Computation}
However, computing the intersection cardinality via full set intersection requires materializing and storing the explicit results, which may incur substantially higher costs than direct cardinality computation, often dominating the overall execution time due to extra memory accesses and result write-back overhead. As Figure~\ref{fig:speedup} shows, intersection cardinality computation can achieve up to $3\times$ speedup.

Specifically, we specialize the original \texttt{QFilter} (Algorithm 4 in~\cite{han2018speeding}) for intersection cardinality computation by removing unnecessary result materialization. 
To illustrate the modification, we revisit the structure of \texttt{QFilter}. The original procedure, based on the pseudo-code in Appendix~A.3 of~\cite{han2018speeding}, consists of three phases, namely candidate pruning (Steps S1-S2), candidate verification (Steps S3-S4), and result materialization (Steps S5-S6), followed by pointer advancement (Step S7).
In our cardinality-only variant, the pruning (Steps S1-S2) and verification (Steps S3-S4) stages are retained, since the algorithm must still filter out impossible matches through byte-checking, align candidate pairs, and verify both base equality and non-empty state intersection. This design choice is also consistent with the analysis in Section~4.2 of~\cite{han2018speeding}, which shows that byte-checking is highly effective because “no-match” cases are common whereas “multi-match” cases are rare. 
By contrast, the materialization stage (Steps S5-S6) is removed. In the original algorithm, these steps gather valid result chunks and write them to output arrays. For cardinality computation, this materialization is unnecessary. Instead, after obtaining the base-equality mask and the non-empty intersection mask, we directly count the number of set bits in the resulting bitmask for each block and accumulate them to obtain the final cardinality.
Finally, Step S7 remains unchanged, since the pointer advancement logic depends only on the ordering of base values and is independent of whether the intersection results are materialized.
Overall, this specialization preserves the filtering and alignment mechanisms of \texttt{QFilter} while removing the result construction phase. 

\begin{algorithm}[h]
\caption{Optimized BSR Construction for \texttt{QFilter++}}\label{alg:FastSet}
\KwIn{$\G=(\U\cup\I,\EDG)$, threshold $\tau$}
\KwOut{$\textsf{BSR}$}
$\tau_d \gets \texttt{nth-element}(\{d(v)\mid v\in \I\}, |\I| - \lceil \frac{|\I|}{\log|\I|} \rceil)$\;
$\mathcal{L}\gets \{v_i\in \I\mid d(v_i) < \tau_d\}$, $\mathcal{H}\gets \I\setminus\mathcal{L}$\;
sort $\mathcal{L}$ by descending degree using \texttt{BucketSort}\;
sort $\mathcal{H}$ by descending degree using \texttt{QuickSort}\;
$\mathcal{M}_0 = \mathcal{H} \mathbin\Vert \mathcal{L} $\;
\For{$i \gets 0$ to $n-1$}{
    \If{$i \bmod 32 = 0$}{
        $v \gets \mathcal{M}_0.\texttt{front}()$, $\mathcal{Q}\gets \emptyset$, $\mathcal{W}\gets \emptyset$\;
    }\lElse{
        $v \gets \mathcal{Q}.\texttt{top}()$, $\mathcal{Q}.\texttt{pop}()$
    }    
    $\mathcal{M}(v) \gets i$, $\mathcal{M}_0.\texttt{delete}(v)$\;
    $S \gets \texttt{Sample}(\U(v), \min\{\tau_s,|\U(v)|\})$, $\mathcal{P} \gets \emptyset$\;
    \For{$u \in S$}{
        \For{$v_t \in \I(u) \cap \mathcal{M}_0$}{
            $\mathcal{P} \gets \mathcal{P} \cup \{v_t\}$, 
            $\mathcal{W}(v_t) \gets \mathcal{W}(v_t) + 1$\;
        }
    }
    \lFor{$v_t \in \mathcal{P}$}{
        $\mathcal{Q}(v_t)=\mathcal{W}(v_t)$
    }
}

\For{$u\in \U$}{
    $\textsf{BSR}(u) \gets \emptyset$\;
    \For{$v_i\in \I(u)$}{
        ${b} \gets \mathcal{M}(v_i)\gg 5;\ {s} \gets \mathcal{M}(v_i)\mathbin{\&} 31$\;
        \lIf{$b \notin \textnormal{\textsf{BSR}}(u)$}{
            $\textsf{BSR}(u) \gets \textsf{BSR}(u) \cup (b,2^{s})$
        }
        \lElse{
            $\textsf{BSR}(u,b) \gets \textsf{BSR}(u,b) \mathbin{|}  2^{s}$
        }
    }
}
\end{algorithm}

\subsection{\update{Dynamic Update Cost}}\label{sec:qft-update}
\update{
Table~\ref{tab:update-time-max} reports the cost of maintaining \texttt{QFilter++} under dynamic edge updates. For each dataset, we randomly sample 10000 edges for update evaluation. We first compute the item ordering and build the corresponding BSR representation during preprocessing. For each subsequent single-edge update, we keep the ordering index unchanged and only update the sorted neighbor list and locally reconstruct the BSR representation of the affected user.
Overall, the update overhead is small across the evaluated datasets. Except for {\em Yambda}, the average insertion and deletion latencies are below $9\,\mu$s and $8\,\mu$s, respectively, while the maximum costs remain below $1.3$ ms for insertion and $0.1$ ms for deletion. {\em Yambda} incurs higher costs because it has a substantially larger average user degree. Since each update locally reconstructs the compressed representation of the affected user, the update cost increases with user degree. This also explains the gap between the average and maximum latency, as the latter is dominated by a few extremely high-degree users.
}

\begin{table}[!t]
\renewcommand{\arraystretch}{0.9}
\centering
\updatefig{}
\caption{\update{Update time of \texttt{QFilter++} for edge insertions and deletions.}}
\label{tab:update-time-max}
\vspace{-2ex}
\begin{small}
\begin{tabular}{c|cc|cc}
\hline
\multirow{2}{*}{\bf Dataset}
& \multicolumn{2}{c|}{\bf Average ($\mu$s)}
& \multicolumn{2}{c}{\bf Maximum ($\mu$s)} \\
& \bf Insert & \bf Delete
& \bf Insert & \bf Delete \\
\hline
Movielens    & 8.667  & 7.665  & 598.641   & 97.574    \\
Gowalla      & 3.144  & 2.178  & 1229.203  & 37.060    \\
AmazonBook   & 4.418  & 3.330  & 1192.984  & 59.092    \\
SteamGame    & 1.319  & 0.402  & 1206.379  & 15.560    \\
MIND         & 2.443  & 1.398  & 1227.299  & 24.256    \\
Twitch       & 3.074  & 1.090  & 1272.724  & 8.727     \\
Yambda       & 81.360 & 70.561 & 21233.573 & 16383.458 \\
MAG          & 7.694  & 5.630  & 1236.364  & 31.089    \\
\hline
\end{tabular}
\end{small}
\arrayrulecolor{black}
\end{table}

\update{
\section{Correctness-Guaranteed Extension of \kalgo{}}\label{sec:KASC-cor}
In addition to the heuristic \kalgo{}, which avoids excessive samples through borderline items identification, we further investigate a
correctness-guaranteed variant, \kalgoc{}.
Specifically, for each observed item $v_t$, \kalgoc{} maintains its estimated Swing score $\widetilde{\textsf{sw}}(v_q,v_t)$ and empirical variance $\sigma_t$ in an online manner.
Based on the empirical Bernstein inequality, after $n$ samples, we
construct the lower and upper confidence bounds
$\widetilde{\textsf{sw}}(v_q,v_t)-\Phi(n,\sigma_t,\zeta,\delta)$ and
$\widetilde{\textsf{sw}}(v_q,v_t)+\Phi(n,\sigma_t,\zeta,\delta)$,
respectively.
For the items that have not been observed in the samples, $\Phi(n,0,\zeta,\delta)$ serves as a uniform upper bound on their Swing scores.
Once the minimum lower bound among the current top-$K$ items exceeds the maximum upper bound among all the remaining items, the current top-$K$ result is certified to be exact with the prescribed confidence.
If such separation cannot be established efficiently, \kalgoc{} eventually resorts to exact computation, as correctness certification can require substantially tighter confidence bounds and hence considerably more samples than the range-based identification in \kalgo{}, particularly when the $K$-th and $(K+1)$-th Swing scores are close.
}

\stitle{\update{Correctness Analysis}}
\update{
The correctness of \kalgoc{} follows from the maintained confidence intervals.
Suppose that, with probability at least $1-\delta$, the true Swing score of every observed item $v_t$ lies within
$\left[
\widetilde{\textsf{sw}}(v_q,v_t)-f(n,\sigma_t,\zeta,\delta),
\widetilde{\textsf{sw}}(v_q,v_t)+f(n,\sigma_t,\zeta,\delta)
\right]$,
while the Swing score of every unseen item is upper-bounded by $f(n,0,\zeta,\delta)$.
Accordingly, $\underline{\beta}_K$ lower-bounds the Swing scores of all items in $\mathcal{H}$, whereas $\overline{\beta}_{K+1}$ upper-bounds those of all items outside $\mathcal{H}$.
Once the separation condition
$\underline{\beta}_K>\overline{\beta}_{K+1}$
holds, for any $v_i\in\mathcal{H}$ and $v_j\notin\mathcal{H}$, we have
$\textsf{sw}(v_q,v_i)
\ge \underline{\beta}_K
> \overline{\beta}_{K+1}
\ge \textsf{sw}(v_q,v_j)$.
Therefore, no item outside $\mathcal{H}$ can have a larger Swing score than any item in $\mathcal{H}$, and hence $\mathcal{H}$ constitutes the exact top-$K$ result with probability at least $1-\delta$.
}

\begin{algorithm}[!t]
\caption{\kalgoc{}}\label{alg:topk-sample}
\KwIn{$\G=(\U,\I,\EDG)$, query item $v_q$.}
\KwOut{$\widetilde{\textsf{sw}}(v_q,v_t)\ \forall{v_t\in \I}$}

\For{$\epsilon\gets \frac{1}{2},\frac{1}{4},\frac{1}{8},\ldots$}{
Calculate $n$ according to Eq.~\eqref{eq:comp-n}\;
\lIf{$n>|\U(v_q)|\cdot(|\U(v_q)|-1)/2$}{
    \Return{$\widetilde{\textsf{\textnormal{sw}}}(v_q,v_t)$ by Algorithm~\ref{alg:basic}}
}
Initialize an empty map $S$ and set $\mathcal{A}$\;
\For{$r\gets 1$ to $n$}{
Sample $u_i, u_j\in \U(v_q)$ s.t. $u_i\neq u_j$ and $d(u_i),d(u_j)\ge 2$\;
Increase $S(u_i,u_j)$ by 1\;
}
Initialize $r(v_t),\widetilde{\textsf{sw}}(v_t)$,$\sigma(v_t)$\ $\forall{v_t\in \I}$\;
\For{$(u_i,u_j)\in S$}{
$\I_{i,j}\gets \I(u_i) \cap \I(u_j)$\;
\For{$v_t\in \I_{i,j}$}{
$\mathcal{A} \gets \mathcal{A} \cup\{v_t\}$ \;
$r(v_t) \gets r(v_t) + S(u_i,u_j)$\;
Update $\widetilde{\textsf{sw}}(v_q,v_t)$ according to Eq.~\eqref{eq:update-mu}\;
Update $\sigma(v_t)$ according to Eq.~\eqref{eq:update-sigma}\;
}}
Initialize an empty heap $\mathcal{H}$ of size $K$\;
\For{$v_t \in \I$ s.t. $r(v_t)\neq 0$}{
$\widetilde{\textsf{sw}}(v_q,v_t) \gets \frac{r(v_t)}{n}\cdot \widetilde{\textsf{sw}}(v_q,v_t)$\;
$\sigma(v_t)\gets \sigma(v_t)+\frac{(n-r(v_t))(\frac{r(v_t)}{n}\cdot\widetilde{\textsf{sw}}(v_q,v_t)^2-\sigma(v_t))}{n-1}$\;
$v^\ast\gets \mathcal{H}.\textsf{min}()$\;
\lIf{$\widetilde{\textsf{\textnormal{sw}}}(v_q,v_t)>\widetilde{\textsf{\textnormal{sw}}}(v_q,v^\ast)$}{
$\mathcal{H}.\textsf{push}(v_t)$
}
}

\Comment{Separation Condition}
$\underline{\beta}_k\gets\underset{v_\ell\in\mathcal{H}}{\min}\ \widetilde{\textsf{\textnormal{sw}}}(v_q,v_\ell)-f(n,\sigma_\ell,\zeta,\delta)$\;
{
\footnotesize
$\overline{\beta}_{k+1}\gets{\max}\{{\underset{v_\ell\in\mathcal{A}\setminus\mathcal{H}}{\max}}\ \widetilde{\textsf{sw}}(v_q,v_\ell)+f(n,\sigma_\ell,\zeta,\delta),f(n,0,\zeta,\delta)\}$\;
}
\lIf{$\underline{\beta}_k>\overline{\beta}_{k+1}$}{\Return{$\widetilde{\textsf{\textnormal{sw}}}(v_q,v_t)\ \forall v_t\in\I$}}
}
\end{algorithm}

\section{Additional Experiments}\label{sec:add-exp}

\subsection{Intersection Cardinality Computation}
Figure~\ref{fig:speedup} reports the practical speedup of computing the intersection size directly, compared with explicitly materializing the intersection set. The results show that the size-only computation is consistently faster in nearly all evaluated cases. In particular, the speedup reaches up to $3.68\times$, while most cases fall between $1.3\times$ and $1.6\times$. These gains indicate that a considerable fraction of the cost in full set intersection comes not from identifying common elements itself, but from materializing and storing the resulting set. This improvement is more pronounced on MovieLens due to denser user overlap, which amplifies the cost of materialization. Therefore, when only the intersection cardinality is needed, bypassing result construction provides a clear efficiency advantage.

\begin{figure}[h]
\centering
\begin{small}
\begin{tikzpicture}[scale=1]
\begin{axis}[
    height=3.4cm,
    width=9cm,
    ybar=0.6pt,
    nodes near coords={\pgfmathprintnumber{\pgfplotspointmeta}\,$\times$},
    bar width=0.32cm,
    enlarge x limits=0.1,
    ylabel={\em Speedup} ($\times$),
    ymin=1,
    ymax=4.3,
    ytick={1,2,3,4},
    xtick={1,2,3,4,5,6,7,8},
    xticklabels={{\em MovieLens},{\em Gowalla},{\em AmazonBook},{\em SteamGame},{\em MIND},{\em Twitch},{\em Yambda},{\em MAG}},
    xticklabel style={rotate=15, anchor= north, font=\scriptsize},
    yticklabel style = {font=\scriptsize},
    every node near coord/.append style={font=\scriptsize},
    every axis y label/.style={at={(current axis.north west)},right=5mm,above=0mm},
    legend style={draw=none, at={(1.02,1.02)},anchor=north west,cells={anchor=west},font=\scriptsize},
    legend image code/.code={ \draw [#1] (0cm,-0.1cm) rectangle (0.3cm,0.15cm);},
    ]

\addplot [pattern color=white, preaction={fill, myblue}, pattern={north east lines}] coordinates {(1,3.68)
(2,1.61)
(3,1.305)
(4,1.628)
(5,1.287)
(6,1.615)
(7,1.63)
(8,1.32)
}; 

\end{axis}
\end{tikzpicture}
\hspace{1mm}
\vspace{-3mm}
\end{small}
\caption{Practical speedup of intersection-size computation over full set intersection.} \label{fig:speedup}
\end{figure}

\begin{table}[!t]
\renewcommand{\arraystretch}{0.9}
\centering
\updatefig{}
\caption{\update{Preprocessing costs and space overhead.}}
\label{tab:preprocess}
\vspace{-2ex}
\resizebox{\columnwidth}{!}{%
\begin{small}
\begin{tabular}{c|cc|cc|c}
\hline
\multirow{2}{*}{\bf Dataset} 
& \multicolumn{2}{c|}{\bf Time (s)} 
& \multicolumn{2}{c|}{\bf Space (GB)} 
& \multirow{2}{*}{\bf Graph Size} \\
& \bf Hashmap & \bf BSR & \bf Hashmap & \bf BSR \\
\hline
Movielens  & 0.077 & 0.414 & 0.004 & 0.003 & 0.011 GB\\
Gowalla    & 0.229 & 0.704 & 0.013 & 0.005 & 0.015 GB\\
AmazonBook & 0.699 & 4.189 & 0.043 & 0.014 & 0.039 GB\\
SteamGame  & 3.131 & 1.632 & 0.398 & 0.066 & 0.190 GB\\
MIND       & 6.056 & 4.364 & 0.404 & 0.111 & 0.228 GB\\
Twitch     & 60.050 & 44.065 & 4.195 & 0.999 & 2.927 GB\\
Yambda     & 119.759 & 1284.860 & 9.295 & 2.862 & 13.002 GB\\
MAG        & 225.489 & 216.014 & 15.899 & 4.692 & 11.054 GB\\
\hline
\end{tabular}
\end{small}
}
\arrayrulecolor{black}
\end{table}

\begin{table}[!t]
\renewcommand{\arraystretch}{0.9}
\centering
\updatefig{}
\caption{\update{Extra memory overhead during query processing.}}
\label{tab:query-extra-memory}
\vspace{-2ex}
\begin{small}
\begin{tabular}{c|cc|cc}
\hline
\multirow{2}{*}{\bf Dataset}
& \multicolumn{2}{c|}{\bf Average Extra RSS (KB)}
& \multicolumn{2}{c}{\bf Peak Extra RSS (MB)} \\
& \bf \algo{} & \bf \kalgo{}
& \bf \algo{} & \bf \kalgo{} \\
\hline
Movielens  & 77.310  & \textbf{15.041} & 74.359  & \textbf{19.509} \\
Gowalla    & 0.002   & \textbf{0.001}           & 0.002   & \textbf{0.001} \\
AmazonBook & \textbf{0.032} & 0.077    & \textbf{0.031} & 0.076 \\
SteamGame  & 118.773 & \textbf{24.531} & 125.875 & \textbf{23.586} \\
MIND       & 121.407 & \textbf{10.735} & 93.227  & \textbf{11.478} \\
Twitch     & 103.052 & \textbf{3.421}  & 110.703 & \textbf{6.973} \\
Yambda     & \textbf{92.126} & 3630.875 & \textbf{74.824} & 136.496 \\
MAG        & \textbf{0.065} & 17.770   & \textbf{0.125} & 21.352 \\
\hline
\end{tabular}
\end{small}
\arrayrulecolor{black}
\end{table}

\begin{table}[!t]
\renewcommand{\arraystretch}{0.9}
\centering
\updatefig{}
\caption{\update{Complete memory usage during query processing.}}
\label{tab:query-memory-full}
\vspace{-2ex}
\begin{small}
\begin{tabular}{c|cc|cc}
\hline
\multirow{2}{*}{\bf Dataset}
& \multicolumn{2}{c|}{\bf Average RSS (GB)}
& \multicolumn{2}{c}{\bf Peak RSS (GB)} \\
& \bf \algo{} & \bf \kalgo{}
& \bf \algo{} & \bf \kalgo{} \\
\hline
Movielens  & 0.102 & \textbf{0.054} & 0.106 & \textbf{0.056} \\
Gowalla    & 0.053 & \textbf{0.051} & 0.054 & \textbf{0.052} \\
AmazonBook & 0.139 & \textbf{0.136} & 0.139 & \textbf{0.136} \\
SteamGame  & 0.961 & \textbf{0.868} & 0.979 & \textbf{0.878} \\
MIND       & 1.031 & \textbf{0.935} & 1.047 & \textbf{0.939} \\
Twitch     & 10.705 & \textbf{10.618} & 10.719 & \textbf{10.624} \\
MAG        & 40.602 & \textbf{40.455} & 40.602 & \textbf{40.456} \\
Yambda     & \textbf{35.784} & 35.831 & \textbf{35.854} & 35.912 \\
\hline
\end{tabular}
\end{small}
\arrayrulecolor{black}
\end{table}

\subsection{BSR Construction Cost}
Table~\ref{tab:BSRcost} presents the BSR construction time (denoted as Prep.) and the average set intersection time under different ordering schemes in the \texttt{GNS} algorithm with the same parameters. The results show that our LightGRO-based BSR construction is consistently faster in both preprocessing and set intersection costs. Specifically, on the large dataset {\em Yambda}, it achieves up to an $6.7\times$ speedup over GRO in preprocessing time, while maintaining comparable or even better intersection performance. In contrast, simple ordering strategies, such as the original index or degree-ascending order, tend to perform poorly on large datasets.

\update{
\subsection{Preprocessing Cost and Space Overhead}\label{sec:space}
}

\stitle{\update{Preprocess Cost}}
\update{
Table~\ref{tab:preprocess} reports the preprocessing cost and memory overhead for constructing the hashmap used for constant-time membership checking, and the BSR representation in \texttt{QFilter++}.
While BSR incurs a higher preprocessing cost than the lightweight hashmap, our optimized implementation significantly reduces its construction time, achieving up to $6\times$ speedup over the original design~\cite{han2018speeding}, while remaining practical at scale. This one-time cost is quickly amortized under frequent queries, where BSR delivers substantial efficiency gains, while the hashmap supports efficient dynamic updates without full reconstruction.
}

\stitle{\update{Query Memory Usage}}
\update{
Table~\ref{tab:query-extra-memory} reports the additional peak and average memory usage during query processing.
On most datasets, \kalgo{} requires less extra memory than \algo{}, as candidate generation and refinement reduce the number of sampled user pairs that need to be maintained.
However, this reduction is dataset-dependent, the auxiliary storage for variance estimates, hit counts, and a top-$K$ priority queue offsets or outweighs the savings from reduced sampling.
Overall, the additional memory overhead during query processing remains modest.
}

\stitle{\update{Complete Memory Usage}}
\update{
Table~\ref{tab:query-memory-full} reports the total memory footprint during query processing, accounting for the graph, hash map, BSR, and dynamic query memory. The two methods show similar memory usage overall, since the total footprint is dominated by the graph and preprocessing structures. For implementation convenience, we retain both the original and reordered graphs, together with the hashmap and BSR, throughout query processing. As a result, the reduction in dynamic query memory achieved by \kalgo{} only leads to a modest reduction in the total memory footprint on most datasets, while {\em Yambda} shows a slightly higher usage. Overall, \kalgo{} introduces no substantial increase in complete memory usage compared with \algo{}.
}

\subsection{Parameter Settings}\label{sec:para-setting}
In this section, we introduce the parameters that are not discussed in the main text. Some parameters are fixed for each dataset because they have little impact on the experimental results. Specifically, the parameter $\rho$ in Eq.~\eqref{eq:select} is fixed to 0.5 for \algo{} and 0.8 for \kalgo{} across all datasets, and the number of extra candidates $\kappa$ is fixed to 5 for all datasets. We then perform an exhaustive grid search over the parameter spaces of \algo{} and \kalgo{} to obtain the optimal configurations.
Other parameter settings with the best performance are summarized below.

For \algo{}, the optimal $\epsilon$ is 0.025 on {\em Movielens}, 0.1 on {\em Gowalla}, 0.05 on {\em AmazonBook}, 0.02 on {\em SteamGame}, 0.02 on {\em MIND}, 0.02 on {\em Twitch}, and 0.02 on {\em Yambda}. On {\em MAG}, the optimal $\epsilon$ is 0.05 for popular queries and 0.1 for random queries.

\update{For \kalgo{}, the parameter configuration requires slightly finer tuning than that of \algo{}, since the top-$K$ extension introduces additional control over candidate identification and refinement. In particular, the optimal $(\epsilon,\beta)$ may vary with the dataset, query type, and $K$, reflecting different query-time and precision trade-offs under different settings. The resulting optimal configurations are summarized in Table~\ref{tab:kasc_parameters}.}

\begin{table}[!t]
\renewcommand{\arraystretch}{0.9}
\centering
\caption{\update{Optimal parameter settings of \kalgo{}.}}
\label{tab:kasc_parameters}
\vspace{-2ex}
\updatefig{}
\begin{small}
\begin{tabular}{c|c|ccc}
\hline
\multirow{2}{*}{\bf Dataset}
& \multirow{2}{*}{\bf Query}
& \multicolumn{3}{c}{\bf $(\epsilon,\beta)$} \\
& & $\mathbf{K=20}$ & $\mathbf{K=50}$ & $\mathbf{K=100}$ \\
\hline
\multirow{2}{*}{Movielens}
& Popular & (0.06, 0.1) & (0.06, 0.1) & (0.06, 0.1) \\
& Random  & (0.05, 0.1) & (0.04, 0.1) & (0.04, 0.1) \\
\hline
Gowalla
& Both & (0.1, 0.2) & (0.1, 0.2) & (0.1, 0.2) \\
\hline
AmazonBook
& Both & (0.05, 0.2) & (0.05, 0.2) & (0.05, 0.2) \\
\hline
\multirow{2}{*}{SteamGame}
& Popular & (0.03, 0.2) & (0.025, 0.2) & (0.02, 0.2) \\
& Random  & (0.05, 0.2) & (0.03, 0.2)  & (0.025, 0.2) \\
\hline
\multirow{2}{*}{MIND}
& Popular & (0.05, 0.2) & (0.03, 0.2) & (0.025, 0.3) \\
& Random  & (0.05, 0.1) & (0.05, 0.2) & (0.04, 0.2) \\
\hline
\multirow{2}{*}{Twitch}
& Popular & (0.05, 0.2) & (0.03, 0.2) & (0.03, 0.2) \\
& Random  & (0.05, 0.1) & (0.05, 0.1) & (0.05, 0.2) \\
\hline
\multirow{2}{*}{MAG}
& Popular & (0.1, 0.05) & (0.1, 0.1) & (0.08, 0.2) \\
& Random  & (0.3, 0.05) & (0.2, 0.1) & (0.2, 0.3) \\
\hline
\multirow{2}{*}{Yambda}
& Popular & (0.05, 0.1) & (0.05, 0.2) & (0.06, 0.1) \\
& Random  & (0.1, 0.1)  & (0.1, 0.1)  & (0.05, 0.1) \\
\hline
\end{tabular}
\end{small}
\arrayrulecolor{black}
\end{table}

\subsection{\update{Complete Parameter Analysis}}\label{sec:complete-parameter-analysis}
\begin{figure*}[!t]
\centering
\begin{small}
\newcommand{\ASCRhoTimeSubplot}[3]{%
\subfloat[{\em #1}]{%
\begin{tikzpicture}[scale=1,every mark/.append style={mark size=3pt}]
\begin{axis}[
    height=2.7cm,
    width=3.0cm,
    ylabel={\it query time(ms)},
    xmin=1, xmax=7,
    xtick={1,...,7},
    xticklabel style={font=\scriptsize,rotate=40,anchor=north east},
    yticklabel style={font=\footnotesize},
    xticklabels={0.2,0.3,0.4,0.5,0.6,0.7,0.8},
    scaled y ticks=false,
    every axis y label/.style={font=\footnotesize,at={(current axis.north west)},right=5mm,above=0mm},
    #3
]
\addplot[line width=0.7mm,smooth,color=NSCcol1] plot coordinates {#2};
\end{axis}
\end{tikzpicture}\hspace{0mm}%
}%
}
\begin{tikzpicture}
\begin{customlegend}[
    legend columns=3,
    legend entries={$K=20$,$K=50$,$K=100$},
    legend style={at={(0.45,1.35)},anchor=north,draw=none,font=\footnotesize,column sep=0.2cm}
]
\addlegendimage{line width=0.7mm,mark size=4pt,color=NSCcol1}
\addlegendimage{line width=0.7mm,mark size=4pt,color=my_violet}
\addlegendimage{line width=0.7mm,mark size=4pt,color=cyan}
\end{customlegend}
\end{tikzpicture}
\\[-\lineskip]
\vspace{-4mm}

\ASCRhoTimeSubplot{MovieLens}
{(1,102.214) (2,102.332) (3,102.885) (4,109.062) (5,107.256) (6,112.479) (7,115.447)}
{ymin=100,ymax=120,ytick={100,110,120}}
\ASCRhoTimeSubplot{Gowalla}
{(1,0.259) (2,0.259) (3,0.257) (4,0.259) (5,0.259) (6,0.258) (7,0.256)}
{ymin=0.25,ymax=0.27,ytick={0.25,0.26,0.27}}
\ASCRhoTimeSubplot{AmazonBook}
{(1,0.808) (2,0.820) (3,0.815) (4,0.823) (5,0.818) (6,0.818) (7,0.819)}
{ymin=0.80,ymax=0.83,ytick={0.80,0.81,0.82,0.83},yticklabels={0.80,0.81,0.82,0.83}}
\ASCRhoTimeSubplot{SteamGame}
{(1,56.431) (2,63.992) (3,66.320) (4,77.890) (5,99.530) (6,90.075) (7,107.742)}
{ymin=50,ymax=110,ytick={50,70,90,110}}
\ASCRhoTimeSubplot{MIND}
{(1,46.833) (2,45.937) (3,45.938) (4,49.395) (5,51.921) (6,53.689) (7,58.986)}
{ymin=45,ymax=60,ytick={45,50,55,60}}
\ASCRhoTimeSubplot{Twitch}
{(1,41.877) (2,39.044) (3,42.000) (4,43.302) (5,46.039) (6,49.330) (7,49.421)}
{ymin=38,ymax=51,ytick={38,42,46,50}}
\ASCRhoTimeSubplot{Yambda}
{(1,384.795) (2,320.918) (3,301.770) (4,294.355) (5,286.082) (6,281.869) (7,288.409)}
{ymin=280,ymax=400,ytick={280,320,360,400}}
\ASCRhoTimeSubplot{MAG}
{(1,4.228) (2,4.358) (3,4.077) (4,3.831) (5,3.869) (6,4.348) (7,3.399)}
{ymin=3.2,ymax=4.5,ytick={3.2,3.6,4.0,4.4},yticklabels={3.2,3.6,4.0,4.4}}

\end{small}
\vspace{-3mm}
\caption{Query time of \algo{} when varying $\rho$.}
\label{fig:asc-vary-time-rho}
\end{figure*}
\begin{figure*}[!t]
\centering
\begin{small}
\newcommand{\ASCRhoPrecisionSubplot}[4]{%
\subfloat[{\em #1}]{%
\begin{tikzpicture}[scale=1,every mark/.append style={mark size=3pt}]
\begin{axis}[
    height=2.7cm,
    width=3.0cm,
    ylabel={\it precision},
    xmin=1, xmax=7,
    ymin=0.999, ymax=1,
    xtick={1,2,3,4,5,6,7},
    ytick={0.999,1},
    xticklabel style={font=\scriptsize,rotate=40,anchor=north east},
    yticklabel style={font=\scriptsize},
    xticklabels={0.2,0.3,0.4,0.5,0.6,0.7,0.8},
    yticklabels={99.9,100},
    every axis y label/.style={font=\footnotesize,at={(current axis.north west)},right=3mm,above=0mm},
]
\addplot[line width=0.7mm,smooth,color=NSCcol1] plot coordinates {#2};
\addplot[line width=0.7mm,smooth,color=my_violet] plot coordinates {#3};
\addplot[line width=0.7mm,smooth,color=cyan] plot coordinates {#4};
\end{axis}
\end{tikzpicture}\hspace{0mm}%
}%
}

\vspace{-4mm}

\ASCRhoPrecisionSubplot{MovieLens}
{(1,1.000000) (2,1.000000) (3,1.000000) (4,1.000000) (5,1.000000) (6,1.000000) (7,1.000000)}
{(1,0.999979) (2,0.999979) (3,0.999979) (4,0.999979) (5,0.999979) (6,0.999979) (7,0.999979)}
{(1,0.999949) (2,0.999959) (3,0.999959) (4,0.999979) (5,1.000000) (6,1.000000) (7,0.999979)}
\ASCRhoPrecisionSubplot{Gowalla}
{(1,1.000000) (2,1.000000) (3,1.000000) (4,1.000000) (5,1.000000) (6,1.000000) (7,1.000000)}
{(1,1.000000) (2,1.000000) (3,1.000000) (4,1.000000) (5,1.000000) (6,1.000000) (7,1.000000)}
{(1,1.000000) (2,1.000000) (3,1.000000) (4,1.000000) (5,1.000000) (6,1.000000) (7,1.000000)}
\ASCRhoPrecisionSubplot{AmazonBook}
{(1,1.000000) (2,1.000000) (3,1.000000) (4,1.000000) (5,1.000000) (6,1.000000) (7,1.000000)}
{(1,1.000000) (2,1.000000) (3,1.000000) (4,1.000000) (5,1.000000) (6,1.000000) (7,1.000000)}
{(1,1.000000) (2,1.000000) (3,1.000000) (4,1.000000) (5,1.000000) (6,1.000000) (7,1.000000)}
\ASCRhoPrecisionSubplot{SteamGame}
{(1,0.999942) (2,1.000000) (3,0.999885) (4,0.999885) (5,0.999942) (6,0.999942) (7,0.999712)}
{(1,0.999746) (2,0.999723) (3,0.999769) (4,0.999723) (5,0.999700) (6,0.999700) (7,0.999562)}
{(1,0.999758) (2,0.999723) (3,0.999781) (4,0.999712) (5,0.999700) (6,0.999689) (7,0.999619)}
\ASCRhoPrecisionSubplot{MIND}
{(1,0.999930) (2,0.999930) (3,0.999930) (4,0.999930) (5,0.999860) (6,0.999930) (7,0.999930)}
{(1,0.999832) (2,0.999832) (3,0.999832) (4,0.999832) (5,0.999916) (6,0.999860) (7,0.999860)}
{(1,0.999846) (2,0.999832) (3,0.999832) (4,0.999846) (5,0.999804) (6,0.999846) (7,0.999763)}
\ASCRhoPrecisionSubplot{Twitch}
{(1,1.000000) (2,0.999947) (3,0.999895) (4,1.000000) (5,1.000000) (6,1.000000) (7,1.000000)}
{(1,0.999979) (2,0.999937) (3,0.999958) (4,0.999895) (5,0.999916) (6,0.999916) (7,0.999916)}
{(1,0.999884) (2,0.999863) (3,0.999895) (4,0.999884) (5,0.999895) (6,0.999895) (7,0.999895)}
\ASCRhoPrecisionSubplot{Yambda}
{(1,0.999903) (2,0.999903) (3,0.999903) (4,1.000000) (5,1.000000) (6,1.000000) (7,1.000000)}
{(1,0.999884) (2,0.999961) (3,0.999923) (4,0.999923) (5,0.999923) (6,0.999923) (7,0.999923)}
{(1,0.999923) (2,0.999903) (3,0.999884) (4,0.999903) (5,0.999923) (6,0.999923) (7,0.999923)}
\ASCRhoPrecisionSubplot{MAG}
{(1,0.999898) (2,0.999949) (3,0.999949) (4,0.999949) (5,0.999949) (6,0.999949) (7,0.999898)}
{(1,0.999837) (2,0.999817) (3,0.999878) (4,0.999837) (5,0.999837) (6,0.999837) (7,0.999858)}
{(1,0.999736) (2,0.999746) (3,0.999817) (4,0.999837) (5,0.999837) (6,0.999837) (7,0.999807)}

\end{small}
\vspace{-3mm}
\caption{Precision of \algo{} when varying $\rho$.}
\label{fig:asc-vary-precision-rho}
\end{figure*}

\begin{figure*}[!t]
\centering
\begin{small}
\newcommand{\VaryRhoSubplot}[5]{%
\subfloat[{\em #1}]{%
\begin{tikzpicture}[scale=1,every mark/.append style={mark size=3pt}]
\begin{axis}[
    height=2.7cm,
    width=3.0cm,
    ylabel={\it query time(ms)},
    xmin=1, xmax=7,
    xtick={1,...,7},
    xticklabel style={font=\scriptsize,rotate=40,anchor=north east},
    yticklabel style={font=\footnotesize},
    xticklabels={0.3,0.4,0.5,0.6,0.7,0.8,0.9},
    scaled y ticks=false,
    every axis y label/.style={font=\footnotesize,at={(current axis.north west)},right=5mm,above=0mm},
    #5
]
\addplot[line width=0.7mm,smooth,color=NSCcol1] plot coordinates {#2};
\addplot[line width=0.7mm,smooth,color=my_violet] plot coordinates {#3};
\addplot[line width=0.7mm,smooth,color=cyan] plot coordinates {#4};
\end{axis}
\end{tikzpicture}\hspace{0mm}%
}%
}

\begin{tikzpicture}
\begin{customlegend}[
    legend columns=3,
    legend entries={$K=20$,$K=50$,$K=100$},
    legend style={at={(0.45,1.35)},anchor=north,draw=none,font=\footnotesize,column sep=0.2cm}
]
\addlegendimage{line width=0.7mm,mark size=4pt,color=NSCcol1}
\addlegendimage{line width=0.7mm,mark size=4pt,color=my_violet}
\addlegendimage{line width=0.7mm,mark size=4pt,color=cyan}
\end{customlegend}
\end{tikzpicture}
\\[-\lineskip]
\vspace{-4mm}

\VaryRhoSubplot{MovieLens}
{(1,67.316) (2,63.796) (3,62.638) (4,60.401) (5,55.366) (6,53.262) (7,55.545)}
{(1,82.595) (2,80.859) (3,80.442) (4,75.778) (5,72.258) (6,70.315) (7,69.161)}
{(1,81.102) (2,79.464) (3,78.397) (4,74.695) (5,70.718) (6,69.016) (7,70.531)}
{ymin=50,ymax=85,ytick={50,60,70,80}}
\VaryRhoSubplot{Gowalla}
{(1,0.257) (2,0.256) (3,0.253) (4,0.259) (5,0.255) (6,0.257) (7,0.258)}
{(1,0.266) (2,0.265) (3,0.265) (4,0.266) (5,0.268) (6,0.271) (7,0.269)}
{(1,0.256) (2,0.260) (3,0.259) (4,0.257) (5,0.257) (6,0.260) (7,0.263)}
{ymin=0.25,ymax=0.275,ytick={0.25,0.26,0.27}}
\VaryRhoSubplot{AmazonBook}
{(1,0.814) (2,0.816) (3,0.820) (4,0.814) (5,0.809) (6,0.819) (7,0.812)}
{(1,0.837) (2,0.837) (3,0.829) (4,0.836) (5,0.851) (6,0.838) (7,0.835)}
{(1,0.811) (2,0.809) (3,0.815) (4,0.812) (5,0.822) (6,0.827) (7,0.808)}
{ymin=0.80,ymax=0.86,ytick={0.80,0.82,0.84,0.86}, yticklabels={0.80,0.82,0.84,0.86}}
\VaryRhoSubplot{SteamGame}
{(1,24.627) (2,19.8) (3,18.127) (4,16.768) (5,16.285) (6,15.884) (7,7.283)}
{(1,30.305) (2,24.604) (3,19.156) (4,18.380) (5,18.289) (6,17.401) (7,14.421)}
{(1,35.398) (2,31.793) (3,26.933) (4,23.897) (5,22.845) (6,20.169) (7,15.858)}
{ymin=5,ymax=38,ytick={5,15,25,35}}
\VaryRhoSubplot{MIND}
{(1,9.525) (2,7.945) (3,6.687) (4,6.244) (5,5.937) (6,5.463) (7,5.355)}
{(1,9.614) (2,8.008) (3,6.767) (4,6.345) (5,6.091) (6,5.603) (7,5.476)}
{(1,15.792) (2,13.416) (3,8.448) (4,8.549) (5,8.001) (6,7.758) (7,7.479)}
{ymin=5,ymax=17,ytick={5,10,15}}
\VaryRhoSubplot{Twitch}
{(1,5.063) (2,4.503) (3,4.162) (4,3.896) (5,3.901) (6,3.782) (7,4.401)}
{(1,4.964) (2,4.391) (3,4.040) (4,3.566) (5,3.532) (6,3.410) (7,3.844)}
{(1,5.239) (2,4.586) (3,4.250) (4,3.744) (5,3.715) (6,3.626) (7,3.962)}
{ymin=3,ymax=5.5,ytick={3,4,5}}
\VaryRhoSubplot{Yambda}
{(1,31.539) (2,26.900) (3,29.384) (4,24.881) (5,25.978) (6,24.767) (7,26.572)}
{(1,29.045) (2,29.067) (3,26.776) (4,26.657) (5,26.552) (6,27.011) (7,27.181)}
{(1,70.960) (2,68.026) (3,68.248) (4,60.521) (5,60.538) (6,60.761) (7,60.707)}
{ymin=20,ymax=75,ytick={20,40,60}}
\VaryRhoSubplot{MAG}
{(1,1.452) (2,1.363) (3,1.398) (4,1.384) (5,1.594) (6,1.401) (7,1.798)}
{(1,1.5) (2,1.295) (3,1.304) (4,1.245) (5,1.299) (6,1.24) (7,2.137)}
{(1,1.597) (2,1.617) (3,1.708) (4,1.445) (5,1.387) (6,1.89) (7,2.582)}
{ymin=1.2,ymax=2.7,ytick={1.2,1.7,2.2,2.7}}

\end{small}
\vspace{-3mm}
\caption{Query time of \kalgo{} when varying $\rho$.}
\end{figure*}
\begin{figure*}[!t]
\centering
\begin{small}

\newcommand{\VaryRhoPrecisionSubplot}[4]{%
\subfloat[{\em #1}]{%
\begin{tikzpicture}[scale=1,every mark/.append style={mark size=3pt}]
\begin{axis}[
    height=2.7cm,
    width=3.0cm,
    ylabel={\it precision},
    xmin=1, xmax=7,
    ymin=0.999, ymax=1,
    xtick={1,2,3,4,5,6,7},
    ytick={0.999,1},
    xticklabel style={font=\scriptsize,rotate=40,anchor=north east},
    yticklabel style={font=\scriptsize},
    xticklabels={0.3,0.4,0.5,0.6,0.7,0.8,0.9},
    yticklabels={99.9,100},
    every axis y label/.style={
        font=\footnotesize,
        at={(current axis.north west)},
        right=3mm,above=0mm
    },
]
\addplot[line width=0.7mm,smooth,color=NSCcol1]
    plot coordinates {#2};
\addplot[line width=0.7mm,smooth,color=my_violet]
    plot coordinates {#3};
\addplot[line width=0.7mm,smooth,color=cyan]
    plot coordinates {#4};
\end{axis}
\end{tikzpicture}\hspace{0mm}%
}%
}

\vspace{-5mm}

\VaryRhoPrecisionSubplot{MovieLens}
{(1,0.999897) (2,0.999846) (3,0.999794) (4,0.999743) (5,0.999743) (6,0.999794) (7,0.999691)}
{(1,0.999918) (2,0.999897) (3,0.999938) (4,0.999897) (5,0.999938) (6,0.999897) (7,0.999918)}
{(1,0.999887) (2,0.999877) (3,0.999907) (4,0.999866) (5,0.999877) (6,0.999835) (7,0.999784)}
\VaryRhoPrecisionSubplot{Gowalla}
{(1,1.000000) (2,1.000000) (3,1.000000) (4,1.000000) (5,1.000000) (6,1.000000) (7,1.000000)}
{(1,1.000000) (2,1.000000) (3,1.000000) (4,1.000000) (5,1.000000) (6,1.000000) (7,1.000000)}
{(1,1.000000) (2,1.000000) (3,1.000000) (4,1.000000) (5,1.000000) (6,1.000000) (7,1.000000)}
\VaryRhoPrecisionSubplot{AmazonBook}
{(1,1.000000) (2,1.000000) (3,1.000000) (4,1.000000) (5,1.000000) (6,1.000000) (7,1.000000)}
{(1,1.000000) (2,1.000000) (3,1.000000) (4,1.000000) (5,1.000000) (6,1.000000) (7,1.000000)}
{(1,1.000000) (2,1.000000) (3,1.000000) (4,1.000000) (5,1.000000) (6,1.000000) (7,1.000000)}
\VaryRhoPrecisionSubplot{SteamGame}
{(1,0.999942) (2,1) (3,1) (4,0.999942) (5,1) (6,1) (7,0.999539)}
{(1,0.999769) (2,0.999931) (3,0.999862) (4,0.999954) (5,0.999885) (6,0.999792) (7,0.999493)}
{(1,0.999815) (2,0.999815) (3,0.999839) (4,0.999839) (5,0.999815) (6,0.999792) (7,0.999343)}
\VaryRhoPrecisionSubplot{MIND}
{(1,0.999651) (2,0.999651) (3,0.999791) (4,0.999721) (5,0.999441) (6,0.999791) (7,0.999721)}
{(1,0.999385) (2,0.999525) (3,0.999469) (4,0.999358) (5,0.999358) (6,0.999162) (7,0.999274)}
{(1,0.999372) (2,0.999344) (3,0.999330) (4,0.999330) (5,0.999232) (6,0.999190) (7,0.999246)}
\VaryRhoPrecisionSubplot{Twitch}
{(1,1) (2,1) (3,1) (4,0.999947) (5,0.999947) (6,0.999895) (7,1.000000)}
{(1,0.999747) (2,0.999621) (3,0.999579) (4,0.999579) (5,0.999579) (6,0.999642) (7,0.999811)}
{(1,0.999095) (2,0.999126) (3,0.999093) (4,0.999083) (5,0.999073) (6,0.999005) (7,0.999284)}
\VaryRhoPrecisionSubplot{Yambda}
{(1,0.999131) (2,0.999517) (3,0.999807) (4,0.999710) (5,0.999710) (6,0.999710) (7,0.999614)}
{(1,0.999058) (2,0.999151) (3,0.999344) (4,0.999498) (5,0.999228) (6,0.999228) (7,0.999305)}
{(1,0.999788) (2,0.999865) (3,0.999865) (4,0.999710) (5,0.999710) (6,0.999710) (7,0.999710)}
\VaryRhoPrecisionSubplot{MAG}
{(1,0.999695) (2,0.999797) (3,0.999746) (4,0.999543) (5,0.999644) (6,0.999593) (7,0.999746)}
{(1,0.999553) (2,0.999512) (3,0.999512) (4,0.999533) (5,0.999573) (6,0.999573) (7,0.999492)}
{(1,0.999207) (2,0.999329) (3,0.999329) (4,0.999258) (5,0.999228) (6,0.999228) (7,0.999167)}

\end{small}
\vspace{-3mm}
\caption{Precision of \kalgo{} when varying $\rho$.}
\end{figure*}

\begin{figure*}[!t]
\centering
\begin{small}
\newcommand{\VaryBetaSubplot}[5]{%
\subfloat[{\em #1}]{%
\begin{tikzpicture}[scale=1,every mark/.append style={mark size=3pt}]
\begin{axis}[
    height=2.7cm,
    width=3.0cm,
    ylabel={\it query time(ms)},
    xmin=1, xmax=5,
    xtick={1,...,5},
    xticklabel style={font=\scriptsize},
    yticklabel style={font=\footnotesize},
    xticklabels={0.1,0.15,0.2,0.25,0.3},
    scaled y ticks=false,
    every axis y label/.style={font=\footnotesize,at={(current axis.north west)},right=5mm,above=0mm},
    #5
]
\addplot[line width=0.7mm,smooth,color=NSCcol1] plot coordinates {#2};
\addplot[line width=0.7mm,smooth,color=my_violet] plot coordinates {#3};
\addplot[line width=0.7mm,smooth,color=cyan] plot coordinates {#4};
\end{axis}
\end{tikzpicture}\hspace{0mm}%
}%
}

\begin{tikzpicture}
\begin{customlegend}[
    legend columns=3,
    legend entries={$K=20$,$K=50$,$K=100$},
    legend style={at={(0.45,1.35)},anchor=north,draw=none,font=\footnotesize,column sep=0.2cm}
]
\addlegendimage{line width=0.7mm,mark size=4pt,color=NSCcol1}
\addlegendimage{line width=0.7mm,mark size=4pt,color=my_violet}
\addlegendimage{line width=0.7mm,mark size=4pt,color=cyan}
\end{customlegend}
\end{tikzpicture}
\\[-\lineskip]
\vspace{-4mm}

\VaryBetaSubplot{MovieLens}
{(1,53.925) (2,55.883) (3,57.710) (4,59.210) (5,61.438)}
{(1,72.321) (2,72.902) (3,73.945) (4,76.333) (5,76.571)}
{(1,69.632) (2,71.749) (3,73.261) (4,73.916) (5,74.962)}
{ymin=50,ymax=80,ytick={50,60,70,80}}
\VaryBetaSubplot{Gowalla}
{(1,0.253) (2,0.254) (3,0.256) (4,0.255) (5,0.254)}
{(1,0.265) (2,0.266) (3,0.264) (4,0.264) (5,0.269)}
{(1,0.258) (2,0.260) (3,0.261) (4,0.260) (5,0.261)}
{ymin=0.25,ymax=0.27,ytick={0.25,0.26,0.27}}
\VaryBetaSubplot{AmazonBook}
{(1,0.804) (2,0.805) (3,0.808) (4,0.806) (5,0.806)}
{(1,0.843) (2,0.851) (3,0.844) (4,0.841) (5,0.844)}
{(1,0.811) (2,0.813) (3,0.812) (4,0.809) (5,0.823)}
{ymin=0.80,ymax=0.86,ytick={0.80,0.82,0.84,0.86},yticklabels={0.80,0.82,0.84,0.86}}
\VaryBetaSubplot{SteamGame}
{(1,13.479) (2,15.089) (3,16.013) (4,16.429) (5,16.601)}
{(1,14.401) (2,16.157) (3,17.159) (4,18.889) (5,20.712)}
{(1,16.580) (2,18.579) (3,20.198) (4,24.009) (5,27.658)}
{ymin=10,ymax=30,ytick={10,20,30}}
\VaryBetaSubplot{MIND}
{(1,5.360) (2,5.715) (3,6.058) (4,6.415) (5,6.898)}
{(1,4.900) (2,5.490) (3,5.904) (4,6.245) (5,6.670)}
{(1,7.189) (2,7.373) (3,7.882) (4,8.611) (5,9.237)}
{ymin=4,ymax=10,ytick={4,6,8,10}}
\VaryBetaSubplot{Twitch}
{(1,3.942) (2,4.150) (3,4.330) (4,4.451) (5,4.681)}
{(1,3.544) (2,3.784) (3,4.022) (4,4.166) (5,4.526)}
{(1,3.511) (2,3.713) (3,3.887) (4,4.180) (5,4.446)}
{ymin=3.4,ymax=4.8,ytick={3.4,3.8,4.2,4.6}}
\VaryBetaSubplot{Yambda}
{(1,28.213) (2,28.685) (3,29.695) (4,31.348) (5,36.420)}
{(1,24.566) (2,29.226) (3,31.414) (4,33.405) (5,35.869)}
{(1,60.780) (2,67.137) (3,79.145) (4,79.576) (5,85.616)}
{ymin=20,ymax=90,ytick={20,40,60,80}}
\VaryBetaSubplot{MAG}
{(1,1.651) (2,1.989) (3,2.115) (4,2.103) (5,1.922)}
{(1,1.263) (2,1.282) (3,1.356) (4,1.391) (5,1.475)}
{(1,2.403) (2,2.278) (3,1.549) (4,1.529) (5,1.597)}
{ymin=1.2,ymax=2.5,ytick={1.2,1.6,2.0,2.4},yticklabels={1.2,1.6,2.0,2.4}}

\end{small}
\vspace{-3mm}
\caption{Query time of \kalgo{} when varying $\beta$.}
\end{figure*}
\begin{figure*}[!t]
\centering
\begin{small}

\newcommand{\VaryBetaPrecisionSubplot}[4]{%
\subfloat[{\em #1}]{%
\begin{tikzpicture}[scale=1,every mark/.append style={mark size=3pt}]
\begin{axis}[
    height=2.7cm,
    width=3.0cm,
    ylabel={\it precision},
    xmin=1, xmax=5,
    ymin=0.999, ymax=1,
    xtick={1,2,3,4,5},
    ytick={0.999,1},
    xticklabel style={font=\scriptsize},
    yticklabel style={font=\scriptsize},
    xticklabels={0.1,0.15,0.2,0.25,0.3},
    yticklabels={99.9,100},
    every axis y label/.style={
        font=\footnotesize,
        at={(current axis.north west)},
        right=3mm,above=0mm
    },
]
\addplot[line width=0.7mm,smooth,color=NSCcol1]
    plot coordinates {#2};
\addplot[line width=0.7mm,smooth,color=my_violet]
    plot coordinates {#3};
\addplot[line width=0.7mm,smooth,color=cyan]
    plot coordinates {#4};
\end{axis}
\end{tikzpicture}\hspace{0mm}%
}%
}

\vspace{-5mm}

\VaryBetaPrecisionSubplot{MovieLens}
{(1,0.999794) (2,0.999588) (3,0.999743) (4,0.999537) (5,0.999743)}
{(1,0.999897) (2,0.999877) (3,0.999877) (4,0.999918) (5,0.999918)}
{(1,0.999835) (2,0.999805) (3,0.999835) (4,0.999835) (5,0.999846)}
\VaryBetaPrecisionSubplot{Gowalla}
{(1,1.000000) (2,1.000000) (3,1.000000) (4,1.000000) (5,1.000000)}
{(1,1.000000) (2,1.000000) (3,1.000000) (4,1.000000) (5,1.000000)}
{(1,1.000000) (2,1.000000) (3,1.000000) (4,1.000000) (5,1.000000)}
\VaryBetaPrecisionSubplot{AmazonBook}
{(1,1.000000) (2,1.000000) (3,1.000000) (4,1.000000) (5,1.000000)}
{(1,1.000000) (2,1.000000) (3,1.000000) (4,1.000000) (5,1.000000)}
{(1,1.000000) (2,1.000000) (3,1.000000) (4,1.000000) (5,1.000000)}
\VaryBetaPrecisionSubplot{SteamGame}
{(1,0.999885) (2,1) (3,1) (4,1) (5,0.999885)}
{(1,0.999608) (2,0.999677) (3,0.999792) (4,0.999839) (5,0.999931)}
{(1,0.999562) (2,0.999689) (3,0.999792) (4,0.999815) (5,0.999815)}
\VaryBetaPrecisionSubplot{MIND}
{(1,0.999791) (2,0.999721) (3,0.999721) (4,0.999511) (5,0.999581)}
{(1,0.999015) (2,0.999211) (3,0.999362) (4,0.999469) (5,0.999609)}
{(1,0.999013) (2,0.999126) (3,0.999190) (4,0.999344) (5,0.999427)}
\VaryBetaPrecisionSubplot{Twitch}
{(1,1) (2,0.999895) (3,0.999842) (4,0.999789) (5,0.999947)}
{(1,0.999811) (2,0.999768) (3,0.999874) (4,0.999832) (5,0.999789)}
{(1,0.999200) (2,0.999389) (3,0.999389) (4,0.999495) (5,0.999495)}
\VaryBetaPrecisionSubplot{Yambda}
{(1,0.999710) (2,0.999421) (3,0.999614) (4,0.999324) (5,0.999421)}
{(1,0.999228) (2,0.999382) (3,0.999305) (4,0.999382) (5,0.999189)}
{(1,0.999710) (2,0.999807) (3,0.999807) (4,0.999749) (5,0.999749)}
\VaryBetaPrecisionSubplot{MAG}
{(1,0.999644) (2,0.999535) (3,0.999492) (4,0.999472) (5,0.999390)}
{(1,0.999573) (2,0.999654) (3,0.999593) (4,0.999492) (5,0.999512)}
{(1,0.999012) (2,0.999126) (3,0.999065) (4,0.999147) (5,0.999228)}

\end{small}
\vspace{-3mm}
\caption{Precision of \kalgo{} when varying $\beta$.}
\end{figure*}

\begin{figure*}[!t]
\centering
\begin{small}
\newcommand{\VaryKappaSubplot}[5]{%
\subfloat[{\em #1}]{%
\begin{tikzpicture}[scale=1,every mark/.append style={mark size=3pt}]
\begin{axis}[
    height=2.7cm,
    width=3.0cm,
    ylabel={\it query time(ms)},
    xmin=1, xmax=6,
    xtick={1,...,6},
    xticklabel style={font=\scriptsize},
    yticklabel style={font=\footnotesize},
    xticklabels={1,3,5,7,10,15},
    scaled y ticks=false,
    every axis y label/.style={font=\footnotesize,at={(current axis.north west)},right=5mm,above=0mm},
    #5
]
\addplot[line width=0.7mm,smooth,color=NSCcol1] plot coordinates {#2};
\addplot[line width=0.7mm,smooth,color=my_violet] plot coordinates {#3};
\addplot[line width=0.7mm,smooth,color=cyan] plot coordinates {#4};
\end{axis}
\end{tikzpicture}\hspace{0mm}%
}%
}

\begin{tikzpicture}
\begin{customlegend}[
    legend columns=3,
    legend entries={$K=20$,$K=50$,$K=100$},
    legend style={at={(0.45,1.35)},anchor=north,draw=none,font=\footnotesize,column sep=0.2cm}
]
\addlegendimage{line width=0.7mm,mark size=4pt,color=NSCcol1}
\addlegendimage{line width=0.7mm,mark size=4pt,color=my_violet}
\addlegendimage{line width=0.7mm,mark size=4pt,color=cyan}
\end{customlegend}
\end{tikzpicture}
\\[-\lineskip]
\vspace{-4mm}
\VaryKappaSubplot{MovieLens}
{(1,53.897) (2,55.492) (3,55.517) (4,54.876) (5,55.640) (6,55.228)}
{(1,69.980) (2,70.568) (3,70.232) (4,70.350) (5,70.188) (6,70.333)}
{(1,70.362) (2,69.434) (3,70.109) (4,70.162) (5,70.516) (6,69.679)}
{ymin=50,ymax=75,ytick={50,60,70}}
\VaryKappaSubplot{Gowalla}
{(1,0.256) (2,0.253) (3,0.253) (4,0.255) (5,0.254) (6,0.255)}
{(1,0.266) (2,0.263) (3,0.266) (4,0.265) (5,0.265) (6,0.266)}
{(1,0.263) (2,0.262) (3,0.262) (4,0.264) (5,0.260) (6,0.259)}
{ymin=0.25,ymax=0.27,ytick={0.25,0.26,0.27}}
\VaryKappaSubplot{AmazonBook}
{(1,0.811) (2,0.806) (3,0.809) (4,0.802) (5,0.807) (6,0.806)}
{(1,0.835) (2,0.830) (3,0.835) (4,0.832) (5,0.840) (6,0.831)}
{(1,0.823) (2,0.824) (3,0.824) (4,0.829) (5,0.822) (6,0.824)}
{ymin=0.80,ymax=0.85,ytick={0.80,0.82,0.84},yticklabels={0.80,0.82,0.84}}
\VaryKappaSubplot{SteamGame}
{(1,13.985) (2,14.925) (3,15.996) (4,15.736) (5,16.674) (6,18.086)}
{(1,15.426) (2,15.695) (3,17.316) (4,17.891) (5,18.062) (6,19.928)}
{(1,21.770) (2,23.355) (3,23.631) (4,20.280) (5,20.423) (6,22.305)}
{ymin=12,ymax=25,ytick={12,16,20,24}}
\VaryKappaSubplot{MIND}
{(1,4.980) (2,5.047) (3,5.389) (4,5.570) (5,5.759) (6,5.591)}
{(1,5.567) (2,5.588) (3,5.603) (4,5.749) (5,6.168) (6,6.270)}
{(1,7.181) (2,7.530) (3,7.867) (4,8.108) (5,8.280) (6,7.943)}
{ymin=4.5,ymax=8.5,ytick={4.5,6.0,7.5},yticklabels={4.5,6.0,7.5}}
\VaryKappaSubplot{Twitch}
{(1,3.634) (2,3.793) (3,3.795) (4,3.850) (5,3.949) (6,3.962)}
{(1,3.427) (2,3.416) (3,3.460) (4,3.450) (5,3.346) (6,3.495)}
{(1,3.685) (2,3.909) (3,3.750) (4,3.754) (5,3.200) (6,3.245)}
{ymin=3.2,ymax=4.0,ytick={3.2,3.6,4.0},yticklabels={3.2,3.6,4.0}}
\VaryKappaSubplot{Yambda}
{(1,24.479) (2,26.925) (3,24.566) (4,27.176) (5,24.843) (6,24.888)}
{(1,25.178) (2,24.995) (3,24.722) (4,24.852) (5,25.142) (6,25.154)}
{(1,59.458) (2,59.978) (3,60.594) (4,60.969) (5,61.365) (6,62.234)}
{ymin=20,ymax=65,ytick={20,35,50,65}}
\VaryKappaSubplot{MAG}
{(1,1.149) (2,1.309) (3,1.215) (4,1.221) (5,1.208) (6,1.238)}
{(1,1.198) (2,1.256) (3,1.263) (4,1.267) (5,1.253) (6,1.340)}
{(1,1.504) (2,1.583) (3,1.441) (4,1.433) (5,1.421) (6,1.649)}
{ymin=1.1,ymax=1.7,ytick={1.1,1.3,1.5,1.7}}
\end{small}
\vspace{-3mm}
\caption{Query time of \kalgo{} when varying $\kappa$.}
\label{fig:vary-time-kappa}
\end{figure*}
\begin{figure*}[!t]
\centering
\begin{small}

\newcommand{\VaryKappaPrecisionSubplot}[4]{%
\subfloat[{\em #1}]{%
\begin{tikzpicture}[scale=1,every mark/.append style={mark size=3pt}]
\begin{axis}[
    height=2.7cm,
    width=3.0cm,
    ylabel={\it precision},
    xmin=1, xmax=6,
    ymin=0.999, ymax=1,
    xtick={1,2,3,4,5,6},
    ytick={0.999,1},
    xticklabel style={font=\scriptsize},
    yticklabel style={font=\scriptsize},
    xticklabels={1,3,5,7,10,15},
    yticklabels={99.9,100},
    every axis y label/.style={
        font=\footnotesize,
        at={(current axis.north west)},
        right=3mm,above=0mm
    },
]
\addplot[line width=0.7mm,smooth,color=NSCcol1]
    plot coordinates {#2};
\addplot[line width=0.7mm,smooth,color=my_violet]
    plot coordinates {#3};
\addplot[line width=0.7mm,smooth,color=cyan]
    plot coordinates {#4};
\end{axis}
\end{tikzpicture}\hspace{0mm}%
}%
}

\vspace{-5mm}

\VaryKappaPrecisionSubplot{MovieLens}
{(1,0.999588) (2,0.999691) (3,0.999794) (4,0.999794) (5,0.999846) (6,0.999640)}
{(1,0.999815) (2,0.999856) (3,0.999897) (4,0.999918) (5,0.999979) (6,0.999938)}
{(1,0.999753) (2,0.999846) (3,0.999835) (4,0.999815) (5,0.999815) (6,0.999825)}
\VaryKappaPrecisionSubplot{Gowalla}
{(1,1.000000) (2,1.000000) (3,1.000000) (4,1.000000) (5,1.000000) (6,1.000000)}
{(1,1.000000) (2,1.000000) (3,1.000000) (4,1.000000) (5,1.000000) (6,1.000000)}
{(1,1.000000) (2,1.000000) (3,1.000000) (4,1.000000) (5,1.000000) (6,1.000000)}
\VaryKappaPrecisionSubplot{AmazonBook}
{(1,1.000000) (2,1.000000) (3,1.000000) (4,1.000000) (5,1.000000) (6,1.000000)}
{(1,1.000000) (2,1.000000) (3,1.000000) (4,1.000000) (5,1.000000) (6,1.000000)}
{(1,1.000000) (2,1.000000) (3,1.000000) (4,1.000000) (5,1.000000) (6,1.000000)}
\VaryKappaPrecisionSubplot{SteamGame}
{(1,0.999423) (2,0.999885) (3,1) (4,1) (5,0.999942) (6,1)}
{(1,0.999354) (2,0.999700) (3,0.999792) (4,0.999908) (5,0.999931) (6,0.999839)}
{(1,0.999527) (2,0.999642) (3,0.999735) (4,0.999792) (5,0.999815) (6,0.999850)}
\VaryKappaPrecisionSubplot{MIND}
{(1,0.999215) (2,0.999441) (3,0.999791) (4,0.999581) (5,0.999651) (6,0.999372)}
{(1,0.999043) (2,0.999094) (3,0.999162) (4,0.999413) (5,0.999358) (6,0.999413)}
{(1,0.999064) (2,0.999190) (3,0.999190) (4,0.999344) (5,0.999441) (6,0.999358)}
\VaryKappaPrecisionSubplot{Twitch}
{(1,0.999684) (2,0.999947) (3,0.999895) (4,0.999947) (5,0.999947) (6,0.999947)}
{(1,0.999221) (2,0.999558) (3,0.999642) (4,0.999600) (5,0.999684) (6,0.999663)}
{(1,0.999000) (2,0.999211) (3,0.999263) (4,0.999347) (5,0.999011) (6,0.999095)}
\VaryKappaPrecisionSubplot{Yambda}
{(1,0.998649) (2,0.999228) (3,0.999710) (4,0.999614) (5,0.999228) (6,0.999614)}
{(1,0.998958) (2,0.999112) (3,0.999228) (4,0.999305) (5,0.999344) (6,0.999228)}
{(1,0.999575) (2,0.999672) (3,0.999710) (4,0.999653) (5,0.999884) (6,0.999749)}
\VaryKappaPrecisionSubplot{MAG}
{(1,0.999187) (2,0.999593) (3,0.999593) (4,0.999593) (5,0.999441) (6,0.999593)}
{(1,0.999248) (2,0.999593) (3,0.999573) (4,0.999695) (5,0.999634) (6,0.999675)}
{(1,0.999136) (2,0.999207) (3,0.999228) (4,0.999339) (5,0.999248) (6,0.999268)}

\end{small}
\vspace{-3mm}
\caption{Precision of \kalgo{} when varying $\kappa$.}
\label{fig:vary-precision-kappa}
\end{figure*}

\update{
This section further studies the sensitivity and transferability of the parameters in \algo{} and \kalgo{} across all eight datasets, as shown in Figures~\ref{fig:asc-vary-time-rho}--\ref{fig:vary-precision-kappa}. Specifically, we vary $\rho\in\{0.2,0.3,\ldots,0.8\}$ for \algo{}, and $\rho\in\{0.3,0.4,\ldots,0.9\}$, $\beta\in\{0.1,0.15,0.2,0.25,0.3\}$, and $\kappa\in\{1,3,5,7,10,15\}$ for \kalgo{}, while fixing the other parameters and calibrating $\epsilon$ to maintain top-$K$ precision above $99.9\%$. The remaining parameters follow the calibrated settings reported in Appendix~\ref{sec:para-setting}.
}

\stitle{\update{Varying $\rho$ in \algo{}}}
\update{We first examine the sensitivity of $\rho$ in \algo{} across all eight datasets. The precision remains consistently above $99.9\%$ over the tested range, while the query time varies only moderately on most datasets. Although some medium-scale graphs exhibit larger fluctuations, no single value is uniformly optimal across all datasets. We therefore use $\rho=0.5$ as a transferable default for \algo{}, which provides stable accuracy and competitive efficiency without dataset-specific tuning.}

\stitle{\update{Varying $\rho$ in \kalgo{}}}
\update{The effect of $\rho$ is consistent across different values of $K$. On {\em MovieLens}, {\em SteamGame}, {\em MIND}, and {\em Twitch}, increasing $\rho$ generally reduces query time until the improvement saturates around $0.7$--$0.8$. {\em Gowalla} and {\em AmazonBook} are considerably less sensitive because their query times are already small, while {\em Yambda} and {\em MAG} exhibit a less monotonic trend. These results indicate that $\rho$ mainly controls the efficiency of adaptive processing rather than the final accuracy once it is in a reasonable range. We therefore use $\rho=0.8$ as a transferable default for \kalgo{}, without separately tuning it for each dataset or $K$.}

\stitle{\update{Varying $\beta$}}
\update{Increasing $\beta$ generally increases query time on {\em MovieLens}, {\em SteamGame}, {\em MIND}, {\em Twitch}, and {\em Yambda}, whereas the effect is much smaller on {\em Gowalla} and {\em AmazonBook}. Thus, $\beta$ primarily determines the amount of computation devoted to difficult or borderline candidates and provides an efficiency and refinement trade-off rather than a sharp accuracy transition. We use $\beta=0.2$ as a robust initialization for a new dataset and tune it only when optimizing the best query time under the precision constraint.}

\stitle{\update{Varying $\kappa$}}
\update{The effect of $\kappa$ on query time is substantially weaker than that of $\rho$ or $\beta$ for most datasets. Once enough extra candidates are retained, increasing $\kappa$ gives little additional benefit and only introduces modest extra refinement cost. In contrast, an overly small value such as $\kappa=1$ can reduce precision, most visibly on {\em Yambda}. The results are stable across $K=20,50,100$, suggesting that $\kappa$ does not need to scale with $K$ in the tested range. We therefore fix $\kappa=5$, which lies near the knee of the accuracy--efficiency trade-off and transfers well across datasets and different $K$ values.}

\update{
\subsection{Adaptation of Similarity Search Methods}
\label{app:baseline_adaptation}
Since \texttt{All-Pairs}~\cite{bayardo2007scaling} and \texttt{WHIMP}~\cite{sharma2017hashes} were originally designed to retrieve all pairs whose cosine similarity exceeds a predefined threshold $\tau$, their optimizations based on inner products cannot be directly applied to the more complex computation of Swing scores. 
Therefore, we adapt them as candidate generation methods for the Swing retrieval setting. 
Each item is represented by its historical interaction set, and the query interfaces of both methods are modified to support queries from an arbitrary source item.
}

\update{
Specifically, we select $\tau$ from a set of values within $[0.01, 0.5]$ according to the retrieval performance on the validation set for each baseline. 
For \texttt{All-Pairs}, instead of constructing and updating the inverted index during pairwise retrieval, we construct the inverted index over the entire graph in advance for the selected value of $\tau$.
For \texttt{WHIMP}, we precompute the SimHash signatures of all items. 
These preprocessing modifications allow any item to be used as the source node at query time. Apart from these preprocessing and query interface modifications, we retain the original candidate filtering and pruning mechanisms of both methods.
Using the selected threshold, each baseline first generates candidates based on cosine similarity.
We vary the maximum candidate pool size over $\{150, 200, 300, 400, 500\}$ to achieve the best trade-off between retrieval effectiveness and computational efficiency.
The Swing scores of the retained candidates are then computed for final ranking and evaluation. 
On large-scale graphs, where exact Swing computation is expensive, we accelerate this step using our adaptive pairwise Algorithm~8, treating the candidate set as the borderline set $\mathcal{B}$ and setting the sample size $n$ according to Eq.~\eqref{eq:comp-n}.
}

\update{
\subsection{Item-to-item Retrieval Setting}\label{sec:i2i-setting}
We compare against representative baselines from three categories:
\begin{itemize}[leftmargin=*,noitemsep]
\item neighborhood-based methods: \texttt{ItemCF}~\cite{sarwar2001item}, \texttt{Adamic-Adar}~\cite{adamic2003friends}, \texttt{Jaccard} and \texttt{Partial Correlation}~\cite{hayashi2022rethinking};
\item high-order proximity methods: \texttt{PPR}~\cite{wei2018topppr} and \texttt{SimRank}~\cite{kusumoto2014scalable};
\item embedding-based methods: \texttt{Item2vec}~\cite{barkan2016item2vec}.
\end{itemize}
The same evaluation protocol and candidate filtering strategy are applied to all methods.
Each dataset is randomly split into $80\%$ training and $20\%$ test interactions, and 200 test users are randomly sampled for evaluation.
For each user, the training items are used as source items, and each i2i method computes relevance scores from every source item to candidate items.
The resulting score vectors are L1-normalized and aggregated using max, mean, or sum pooling, and the best-performing aggregation strategy is adopted.
After filtering items observed in the training history, we rank the remaining candidates by their aggregated scores and retrieve the top-$K$ items for $K\in\{20,50,100\}$.
We report Recall@$K$ averaged over the evaluated users and the average online retrieval time per user, measured as the cumulative i2i query time over all source items.
}

\subsection{\update{Discussion on Scalability and Sharded Deployment}}\label{sec:discuss-shard}

\update{
Although our experiments are conducted on a single node, this setting is intended to isolate the computational efficiency and memory footprint of the proposed single-source top-$K$ Swing query processing, rather than to impose a single-machine deployment assumption. After the graph and auxiliary data structures are constructed, the online processing of a query node $v_q$ primarily accesses its neighboring users and the neighborhood information needed to evaluate the corresponding user pairs, without scanning the entire graph. Therefore, in a large-scale deployment, adjacency lists and auxiliary graph data can be partitioned across multiple shards and fetched from the corresponding workers on demand.
}

\update{
The query computation is also naturally decomposable. The Swing score of each candidate item is obtained by accumulating the contributions of individual user pairs associated with $v_q$. Thus, these user pairs can be partitioned among multiple workers, each of which independently computes partial candidate scores using the required neighborhood information. The partial scores can then be summed across workers, followed by a global top-$K$ selection. Since this partitioning only distributes the underlying pairwise computations and does not alter the estimator itself, it does not affect the approximation guarantee of our method.}

\section{Theoretical Proofs}\label{sec:proof}

\begin{proof}[\bf Proof of Lemma~\ref{lem:swing-range}]
Since each term $\frac{1}{\alpha + |\I(u_i) \cap \I(u_j)|} > 0$, by the definition of $\textsf{sw}(v_q,v_t)$ in Eq.~\eqref{eq:swing}, $\textsf{sw}(v_q,v_t)$ attains the minimum value of $0$ when $\U(v_q)\cap \U(v_t)$ is an empty set.
On the other hand, since $v_q \neq v_t$, $u_i, u_j\in \U(v_q)\cap \U(v_t)$ indicates that $v_q,v_t\in \I(u_i)\cap \I(u_j)$, i.e., $|\I(u_i)\cap \I(u_j)|\ge 2$. Therefore, we can derive that
\begin{align*}
\textsf{sw}(v_q,v_t) &= \sum_{u_i, u_j\in \U(v_q)\cap \U(v_t)}{\frac{1}{\alpha+|\I(u_i)\cap \I(u_j)|}} \\
& \le \sum_{u_i, u_j\in \U(v_q)}{\frac{1}{\alpha+|\I(u_i)\cap \I(u_j)|}} \\
& \le d(v_q)\cdot (d(v_q)-1)\cdot \frac{1}{\alpha+2} = \frac{d(v_q)^2-d(v_q )}{\alpha+2}.
\end{align*}
The lemma is proved.
\end{proof}

\begin{proof}[\bf Proof of Lemma~\ref{lem:exact-time}]
First, computing the set intersection in Line 4 takes $O(\min(d(u_i), d(u_j)))$ time. For all valid user pairs, we have
\begin{small}
\begin{align*}
\sum_{u_i\neq u_j\in \U(v_q)}{\frac{\min(d(u_i), d(u_j))}{2}} &\le  \frac{1}{4}\sum_{u_i\neq u_j\in \U(v_q)}{d(u_i)+d(u_j)}\\
& = \frac{1}{2} \sum_{u_i\neq u_j\in \U(v_q)}{d(u_i)}\\
& = \frac{1}{2} \left(\sum_{u_i, u_j\in \U(v_q)}{d(u_i)}-\sum_{u_i\in \U(v_q)}{d(u_i)}\right)\\
&=\frac{d(v_q)-1}{2}\sum_{u\in \U(v_q)}{d(u)}.
\end{align*}
\end{small}
Consequently, the worst-case time complexity of Algorithm~\ref{alg:basic} is then $O\left(\frac{d(v_q)-1}{2}\cdot D(v_q)\right)$.
\end{proof}

\begin{proof}[\bf Proof of Theorem~\ref{lem:naive-mc}]

\begin{lemma}[Chernoff Bound~\cite{chung2006concentration}]\label{lem:chernoff}
For a set of $\{Z_i\}_{i=1}^n$ i.i.d. random variables with $Z_i\in [0,\zeta]$, let $Z=\sum_{i=1}^n{Z_i}$ and $\mathbb{E}[Z]$ be its mean. Then,
\begin{small}
\begin{equation}
\mathbb{P}\left[\left|Z-\mathbb{E}[Z]\right|\ge \varepsilon\right] \le \exp{\left(-\frac{\varepsilon^2}{2\zeta(\varepsilon/3+\mathbb{E}[Z])}\right)}.
\end{equation}
\end{small}
\end{lemma}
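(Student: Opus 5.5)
The plan is the standard moment-generating-function (Bernstein/Bennett) argument, carried out on each tail separately.

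\textbf{Normalization.} First set $X_i=Z_i/\zeta\in[0,1]$, $X=\sum_i X_i=Z/\zeta$, $\mu=\mathbb{E}[X]=\mathbb{E}[Z]/\zeta$, and $t=\varepsilon/\zeta$. The claimed exponent then becomes
\[
\frac{\varepsilon^2}{2\zeta(\varepsilon/3+\mathbb{E}[Z])}=\frac{t^2}{2(\mu+t/3)},
\]
so it suffices to bound $\mathbb{P}[X\ge \mu+t]$ and $\mathbb{P}[X\le\mu-t]$ by $\exp\bigl(-\tfrac{t^2}{2(\mu+t/3)}\bigr)$.

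\textbf{Moment generating function.} For any real $\lambda$ and $x\in[0,1]$, convexity of $x\mapsto e^{\lambda x}$ gives $e^{\lambda x}\le 1+(e^\lambda-1)x$. Hence, with $\mu_i=\mathbb{E}[X_i]$,
\[
\mathbb{E}[e^{\lambda X_i}]\le 1+(e^\lambda-1)\mu_i\le \exp\bigl((e^\lambda-1)\mu_i\bigr).
\]
By independence, $\mathbb{E}[e^{\lambda X}]\le \exp\bigl((e^\lambda-1)\mu\bigr)$.

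\textbf{Upper tail.} For $\lambda>0$, Markov's inequality gives
\[
\mathbb{P}[X\ge\mu+t]\le\exp\bigl(\mu(e^\lambda-1)-\lambda(\mu+t)\bigr).
\]
Choosing $\lambda=\ln(1+u)$ with $u=t/\mu$ yields the Bennett form $\exp(-\mu\,h(u))$, where $h(u)=(1+u)\ln(1+u)-u$.

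\textbf{Lower tail.} Apply the same bound with $\lambda<0$ to $-X$, choosing $\lambda=\ln(1-u)$ when $u<1$. This gives $\exp(-\mu\,h(-u))$. If $t>\mu$ the lower-tail event is empty, so there is nothing to prove.

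\textbf{Key inequalities (main obstacle).} The remaining step is the elementary pair
\[
h(u)\ge \frac{u^2}{2(1+u/3)}\quad (u\ge0),\qquad h(-u)\ge \frac{u^2}{2}\ge\frac{u^2}{2(1+u/3)}\quad (0\le u<1).
\]
For the first, define $g(u)=h(u)-\frac{u^2}{2+2u/3}$. I would check that $g(0)=g'(0)=0$ and that $g''\ge0$. Here $h''(u)=1/(1+u)$, and one must show the second derivative of the rational function is at most $1/(1+u)$; this reduces to a polynomial inequality, which I would verify directly. For the second, $h(-u)=(1-u)\ln(1-u)+u$ has Taylor series $\sum_{k\ge2}u^k/(k(k-1))$ with all coefficients nonnegative, so $h(-u)\ge u^2/2$.

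Multiplying by $\mu$ and substituting $u=t/\mu$ gives the exponent $t^2/(2(\mu+t/3))$ for both tails. Undoing the normalization yields the stated bound.

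\textbf{Caveat on the constant.} The argument bounds each tail by the stated expression. Combining the two tails by a union bound produces a factor of $2$ in front of the exponential. The statement as worded omits this factor, so either it should be read per tail or the factor should be absorbed into $\delta$, for instance by replacing $\log(1/\delta)$ with $\log(2/\delta)$ in Eq.~\eqref{eq:comp-n}. I would make this explicit rather than claim the two-sided bound without the factor.
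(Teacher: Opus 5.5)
Your per-tail proof is correct, and it proves slightly more than the statement. The paper gives no argument of its own for this lemma, only the citation to Chung and Lu, so the comparison is with the standard derivation behind that citation, and your route differs from it.

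\textbf{Comparison with the cited derivation.} The usual route first proves a variance-form Bernstein bound $\exp\bigl(-\varepsilon^2/(2(\sigma^2+\zeta\varepsilon/3))\bigr)$. It does so with the MGF estimate $2(e^y-1-y)/y^2\le(1-y/3)^{-1}$, and then substitutes $\sigma^2\le\zeta\,\mathbb{E}[Z]$ for $[0,\zeta]$-valued summands.

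You instead use the chord bound to get the Poisson-type MGF $\exp((e^\lambda-1)\mu)$. You optimize exactly to Bennett's $\exp(-\mu h(\pm u))$ and only then relax to the stated exponent. The polynomial inequality you leave to check is $(3+u)^3\ge 27(1+u)$, i.e.\ $u^2(u+9)\ge 0$, so that step holds.

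The variance route gives a sharper intermediate bound when $\sigma^2\ll\zeta\,\mathbb{E}[Z]$. Your route has three advantages:
\begin{itemize}
\item it is self-contained;
\item it gives the stronger lower-tail exponent $t^2/(2\mu)$;
\item because you keep separate $\mu_i$, it uses only independence, not identical distribution.
\end{itemize}
The last point is what the \texttt{USS} analysis actually needs. Its per-pair summands are independent but not identically distributed, so the i.i.d.\ hypothesis of the lemma as stated does not literally cover that application.

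\textbf{The factor of two.} Your caveat is necessary, because the two-sided inequality as literally stated is false. Take $n=1$, $\zeta=1$, $Z_1$ uniform on $\{0,1\}$ and $\varepsilon=1/2$. The left-hand side is $1$, while the right-hand side is $e^{-3/16}<1$. So the provable statement is your per-tail bound, with a factor $2$ for the event $|Z-\mathbb{E}[Z]|\ge\varepsilon$. Sample sizes such as Eq.~\eqref{eq:comp-n} should then use $\log(2/\delta)$, as you suggest.

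\textbf{Boundary cases.} Two cases fall outside your case split, and both are easy.
\begin{itemize}
\item $\mu=0$ is trivial, since then $X=0$ almost surely.
\item $t=\mu$, where the lower-tail event is $\{X=0\}$, follows by letting $\lambda\to-\infty$. This gives $\mathbb{P}[X=0]\le e^{-\mu}\le e^{-3\mu/8}$, which is exactly the required bound at $t=\mu$.
\end{itemize}
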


Define the random variable for the $r$-th trial (with user pairs $u_i$ and $u_j$) is
\[Z_r = \frac{d(v_q)\cdot (d(v_q)-1)}{\alpha+|\I_{i,j}|} \cdot \mathbb{1}[v_t\in \I_{i,j}].\]
According to our analysis in \S\ref{sec:baseline}, 
\[\widetilde{\textsf{\textnormal{sw}}}(v_q, v_t) = \frac{1}{n}\sum_{r=1}^{n}{Z_r}\]
is an unbiased estimator.
Notice that $v_q,v_t\in \I_{i,j}$, and then, $2\le |\I_{i,j}|\le \min\{d(u_i), d(u_j)\}$. The random variable $Z_r$ can be bounded by
\begin{align*}
Z_r & \le \frac{d(v_q)\cdot (d(v_q)-1)}{\alpha+|\I_{i,j}|} \le \frac{d(v_q)\cdot (d(v_q)-1)}{\alpha+2} = \zeta.
\end{align*}
By letting $\varepsilon=n\cdot \epsilon\cdot {\textsf{\textnormal{sw}}}(v_q, v_t)$ and applying Lemma~\ref{lem:chernoff}, we can derive
\begin{align*}
& \mathbb{P}\left[\left|\widetilde{\textsf{\textnormal{sw}}}(v_q, v_t)-{\textsf{\textnormal{sw}}}(v_q, v_t)\right| \ge \epsilon\cdot {\textsf{\textnormal{sw}}}(v_q, v_t)\right] \\
&= \mathbb{P}\left[\left|n\cdot\widetilde{\textsf{\textnormal{sw}}}(v_q, v_t)-{n\cdot\textsf{\textnormal{sw}}}(v_q, v_t)\right| \ge n\cdot\epsilon\cdot {\textsf{\textnormal{sw}}}(v_q, v_t)\right]\\
& \le \exp{\left(-\frac{(n\cdot \epsilon\cdot {\textsf{\textnormal{sw}}}(v_q, v_t))^2}{2\zeta(\frac{n \cdot \epsilon\cdot {\textsf{\textnormal{sw}}}(v_q, v_t)}{3}+ {n\cdot \textsf{\textnormal{sw}}}(v_q, v_t))}\right)}\\
& = \exp{\left(-\frac{n\cdot \epsilon^2\cdot {\textsf{\textnormal{sw}}}(v_q, v_t)}{2\zeta(\frac{\epsilon}{3}+ 1)}\right)}.
\end{align*}
Note that $n\ge \frac{2\zeta(1+\epsilon/3)}{\epsilon^2\lambda}\log\frac{1}{\delta}$. Then we have 
\begin{align*}
& \mathbb{P}\left[\left|\widetilde{\textsf{\textnormal{sw}}}(v_q, v_t)-{\textsf{\textnormal{sw}}}(v_q, v_t)\right| \ge \epsilon\cdot {\textsf{\textnormal{sw}}}(v_q, v_t)\right] \\
& \le \exp{\left(-\frac{{\textsf{\textnormal{sw}}}(v_q, v_t)}{\lambda}\log\frac{1}{\delta}\right)}.
\end{align*}
When ${\textsf{\textnormal{sw}}}(v_q, v_t)\ge \lambda$, we have
\begin{equation*}
\mathbb{P}\left[\left|\widetilde{\textsf{\textnormal{sw}}}(v_q, v_t)-{\textsf{\textnormal{sw}}}(v_q, v_t)\right| \ge \epsilon\cdot {\textsf{\textnormal{sw}}}(v_q, v_t)\right] \le \delta.
\end{equation*}
Applying Lemma~\ref{lem:chernoff} with $\varepsilon=n \cdot \epsilon\cdot \lambda$, we can derive 
\begin{align*}
& \mathbb{P}\left[\left|\widetilde{\textsf{\textnormal{sw}}}(v_q, v_t)-{\textsf{\textnormal{sw}}}(v_q, v_t)\right| \ge \epsilon\cdot \lambda \right] \\
& \le \exp{\left(-\frac{n\cdot (\epsilon\cdot\lambda)^2}{2\zeta(\frac{\epsilon\cdot\lambda}{3}+ {\textsf{\textnormal{sw}}}(v_q,v_t))}\right)}.
\end{align*}
When ${\textsf{\textnormal{sw}}}(v_q, v_t)< \lambda$, we have
\begin{equation*}
\mathbb{P}\left[\left|\widetilde{\textsf{\textnormal{sw}}}(v_q, v_t)-{\textsf{\textnormal{sw}}}(v_q, v_t)\right| \ge \epsilon\cdot \lambda\right] \le \delta,
\end{equation*}
which finishes the proof.

\end{proof}

\begin{proof}[\bf Proof of Theorem~\ref{lem:USS}]
\update{
Consider the $r$-th random event that catches target item $v_t$ in the subset $\I^{\prime}_{i,j}$ and yields an update $Z_r$ to $\widetilde{\textsf{\textnormal{sw}}}(v_q, v_t)$.
Let $\mathbb{1}_{v_t\in \I^{\prime}_{i,j}}$ be the indicator variable where $\mathbb{1}_{v_t\in \I^{\prime}_{i,j}}=1$ if $v_t\in \I^{\prime}_{i,j}$ and $0$ otherwise. Since $\I^{\prime}_{i,j}$ is obtained as Line 4 in Algorithm~\ref{alg:subset_sample} and $\I^{\prime}(u_i)$ is drawn uniformly at random from $\I(u_i)$, we have
\begin{align*}
\mathbb{E}[\mathbb{1}_{v_t\in\I^\prime_{i,j}}]
&=
\frac{|\I^\prime(u_i)|}{d(u_i)}
=
\frac{\lceil\gamma\cdot d(u_i)\rceil}{d(u_i)}.
\end{align*}
By Line 7 in Algorithm~\ref{alg:subset_sample}, the expected value of the update score target item $v_t$ is
\begin{equation}
\mathbb{E}[Z_r] = \frac{1}{\frac{|\I^\prime(u_i)|}{d(u_i)}\cdot (\alpha+|\I_{i,j}|)} \cdot \mathbb{E}[\mathbb{1}_{v_t\in \I^{\prime}_{i,j}}] = \frac{1}{\alpha+|\I_{i,j}|}.
\end{equation}
By the linearity of expectation over all user pairs sharing $v_t$, we have
\begin{align*}
\mathbb{E}[
\widetilde{\textsf{\textnormal{sw}}}(v_q,v_t)]
=
{\textsf{\textnormal{sw}}}(v_q,v_t),
\end{align*}
indicating that
$\widetilde{\textsf{\textnormal{sw}}}(v_q,v_t)$ is an unbiased estimator.
}
The random variable $Z_r$ can be bounded by
\begin{align*}
Z_r &\le \max_{r}{\frac{1}{\frac{|\I^\prime(u_i)|}{d(u_i)}\cdot (\alpha+|\I_{i,j}|)}}\\
& = \max_{r}{\frac{1}{\frac{\lceil \gamma\cdot d(u_i)\rceil}{d(u_i)}\cdot (\alpha+|\I_{i,j}|) }}.
\end{align*}
Note that $\gamma\le \min\left\{{\frac{2(\epsilon/3+1)}{(\alpha+2)\cdot\lambda\epsilon^2 }\cdot\log\left(\frac{1}{\delta}\right)}, 1\right\}$. Then, we have
\begin{equation*}
Z_r \le \frac{\lambda\epsilon^2}{2(\epsilon/3+1)}\cdot \log\left(\frac{1}{\delta}\right) =\zeta.
\end{equation*}

By letting $\varepsilon=\epsilon\cdot {\textsf{\textnormal{sw}}}(v_q, v_t)$ and applying Lemma~\ref{lem:chernoff}, we can derive
\begin{align*}
& \mathbb{P}\left[\left|\widetilde{\textsf{\textnormal{sw}}}(v_q, v_t)-{\textsf{\textnormal{sw}}}(v_q, v_t)\right| \ge \epsilon\cdot {\textsf{\textnormal{sw}}}(v_q, v_t)\right] \\
& \le \exp{\left(-\frac{(\epsilon\cdot {\textsf{\textnormal{sw}}}(v_q, v_t))^2}{2\zeta(\frac{\epsilon\cdot {\textsf{\textnormal{sw}}}(v_q, v_t)}{3}+ {\textsf{\textnormal{sw}}}(v_q, v_t))}\right)}\\
& = \exp{\left(-\frac{\epsilon^2\cdot {\textsf{\textnormal{sw}}}(v_q, v_t)}{2\zeta(\frac{\epsilon}{3}+ 1)}\right)}.
\end{align*}
When ${\textsf{\textnormal{sw}}}(v_q, v_t)\ge \lambda$, we have
\begin{equation*}
\mathbb{P}\left[\left|\widetilde{\textsf{\textnormal{sw}}}(v_q, v_t)-{\textsf{\textnormal{sw}}}(v_q, v_t)\right| \ge \epsilon\cdot {\textsf{\textnormal{sw}}}(v_q, v_t)\right] \le \delta.
\end{equation*}

Applying Lemma~\ref{lem:chernoff} with $\varepsilon=\epsilon\cdot \lambda$, we can derive 
\begin{align*}
& \mathbb{P}\left[\left|\widetilde{\textsf{\textnormal{sw}}}(v_q, v_t)-{\textsf{\textnormal{sw}}}(v_q, v_t)\right| \ge \epsilon\cdot \lambda\right] \\
& \le \exp{\left(-\frac{(\epsilon\cdot\lambda)^2}{2\zeta(\frac{\epsilon\cdot \lambda}{3}+ {\textsf{\textnormal{sw}}}(v_q, v_t))}\right)}.
\end{align*}
When ${\textsf{\textnormal{sw}}}(v_q, v_t)< \lambda$, we have
\begin{equation*}
\mathbb{P}\left[\left|\widetilde{\textsf{\textnormal{sw}}}(v_q, v_t)-{\textsf{\textnormal{sw}}}(v_q, v_t)\right| \ge \epsilon\cdot \lambda\right] \le \delta,
\end{equation*}
which finishes the proof.
\end{proof}

\begin{proof}[\bf Proof of Theorem~\ref{lem:cost-ASC}]
When $\frac{d(v_q)(d(v_q)-1)}{2}\le n$, the runtime for \texttt{GNS} can be simplified as
\begin{small}
\begin{equation*}
T_{\texttt{GNS}}=\rho_{\texttt{GNS}} \cdot \left(n+\frac{d(v_q)(d(v_q)-1)}{2}\cdot\frac{D(v_q)}{d(v_q)}\right) \ge \rho_{\texttt{GNS}}\cdot n +\frac{1}{\rho}\cdot T_{\texttt{Exact}},
\end{equation*}
\end{small}
which is definitely more costly than \texttt{Exact}.
In the meanwhile, notice that in this case, $\frac{d(v_q)(d(v_q)-1)}{2}\cdot \rho\le n$, indicating that Eq.~\eqref{eq:select} holds and \texttt{Exact} will be selected for execution as expected.

On the other hand, when $n<\frac{d(v_q)(d(v_q)-1)}{2}$, we have $T_{\texttt{GNS}} = \rho_{\texttt{GNS}} \cdot n\cdot\left(1+\frac{D(v_q)}{d(v_q)}\right)$.
If Eq.~\eqref{eq:select} is satisfied, namely \texttt{USS} is executed by \algo{}), we can derive
\begin{small}
\begin{align*}
&\frac{d(v_q)\cdot (d(v_q)-1)}{2}\cdot \rho \le n\left(1+\frac{d(v_q)}{D(v_q)}\right)\\
\Leftrightarrow & \frac{d(v_q)\cdot (d(v_q)-1)}{2}\cdot \rho_{\texttt{USS}} \le n\left(1+\frac{d(v_q)}{D(v_q)}\right)\cdot \rho_{\texttt{GNS}} \\
\Leftrightarrow & \left(\frac{d(v_q)\cdot (d(v_q)-1)}{2}\cdot \frac{D(v_q)}{d(v_q)}\right)\cdot \rho_{\texttt{USS}} \le n\left(\frac{D(v_q)}{d(v_q)}+1\right)\cdot \rho_{\texttt{GNS}} \\
\Leftrightarrow & T_{\texttt{USS}} \le T_{\texttt{GNS}}.
\end{align*}
\end{small}
If Eq.~\eqref{eq:select} does not hold, \algo{} invokes \texttt{GNS}. We can also derive that $T_{\texttt{USS}} > T_{\texttt{GNS}}$, which finishes the proof.
\end{proof}

\begin{proof}[\bf Proof of Lemma~\ref{lem:welford}]
\begin{lemma}[Chan's Algorithm~\cite{chan1982updating}]\label{lem:chan_pairwise}
Let $A = \{x_1, \dots, x_m\}$ and $B = \{x_{m+1}, \dots, x_{m+n}\}$ be two samples. We have 
\begin{align*}
    T_{1,m} &= \sum_{i=1}^{m} x_{i}, & S_{1,m} &= \sum_{i=1}^{m} (x_{i} - \frac{1}{m}T_{1,m})^2, \\
    T_{m+1,m+n} &= \sum_{i=m+1}^{m+n} x_{i}, & S_{m+1,m+n} &= \sum_{i=m+1}^{m+n} (x_{i} - \frac{1}{n}T_{m+1,m+n})^2.
\end{align*}
Let $C = A \cup B$ be the combined sample. It can be shown that
\begin{align*}
    T_{1,m+n} &= T_{1,m} + T_{m+1,m+n}, \\
    S_{1,m+n} &= S_{1,m} + S_{m+1,m+n} + \frac{m}{n(m+n)}(\frac{n}{m}T_{1,m}-T_{m+1,m+n})^2.
\end{align*}
\end{lemma}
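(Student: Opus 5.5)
The identity for $T$ is immediate from splitting the sum over $C=A\cup B$ into the parts indexed by $A$ and by $B$, so the work is in the identity for $S_{1,m+n}$. The approach is to rewrite every sum of squared deviations in its ``raw moment'' form and then compare the two sides algebraically. We assume $m,n\ge 1$ throughout, which is needed for the means and the coefficient $\frac{m}{n(m+n)}$ to be defined.

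First, for any finite sample of size $N$ with total $T$, expanding the square and using $\sum_i x_i = T$ gives
\begin{equation*}
\sum_{i}\left(x_i-\tfrac{1}{N}T\right)^2=\sum_i x_i^2-\tfrac{1}{N}T^2 .
\end{equation*}
Applying this to $A$, to $B$ and to $C$, and writing $Q_A=\sum_{i=1}^m x_i^2$ and $Q_B=\sum_{i=m+1}^{m+n}x_i^2$, we get three expressions for $S_{1,m}$, $S_{m+1,m+n}$ and $S_{1,m+n}$. The sum of squares over $C$ is $Q_A+Q_B$, and $T_{1,m+n}=T_{1,m}+T_{m+1,m+n}$ by the first identity.

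Next, we subtract $S_{1,m}+S_{m+1,m+n}$ from $S_{1,m+n}$. The quadratic terms $Q_A+Q_B$ cancel, and the difference becomes
\begin{equation*}
\frac{T_{1,m}^2}{m}+\frac{T_{m+1,m+n}^2}{n}-\frac{(T_{1,m}+T_{m+1,m+n})^2}{m+n}.
\end{equation*}
Putting these over the common denominator $mn(m+n)$, the numerator is
\begin{equation*}
n(m+n)T_{1,m}^2+m(m+n)T_{m+1,m+n}^2-mn\,(T_{1,m}+T_{m+1,m+n})^2 .
\end{equation*}
This collapses to $\left(nT_{1,m}-mT_{m+1,m+n}\right)^2$: the $mnT_{1,m}^2$ and $mnT_{m+1,m+n}^2$ contributions cancel, leaving $n^2T_{1,m}^2+m^2T_{m+1,m+n}^2-2mn\,T_{1,m}T_{m+1,m+n}$.

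Finally, we factor $m^2$ out of the square:
\begin{equation*}
\frac{(nT_{1,m}-mT_{m+1,m+n})^2}{mn(m+n)}=\frac{m}{n(m+n)}\left(\frac{n}{m}T_{1,m}-T_{m+1,m+n}\right)^2 ,
\end{equation*}
which is exactly the claimed correction term. There is no real obstacle here. The only point needing care is the bookkeeping in the numerator expansion, where the cross terms must cancel to leave a perfect square. An alternative we could use as a check is to center both subsamples at the combined mean and use the parallel-axis identity $\sum_{i\in A}(x_i-\mu)^2=S_A+m(\mu_A-\mu)^2$, which gives the same term $\frac{mn}{m+n}(\mu_A-\mu_B)^2$.
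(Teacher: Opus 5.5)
Your proof is correct. Rewriting each sum of squared deviations as $\sum x_i^2 - T^2/N$ cancels the quadratic terms. Your simplification of the numerator $n(m+n)T_{1,m}^2+m(m+n)T_{m+1,m+n}^2-mn(T_{1,m}+T_{m+1,m+n})^2$ to $(nT_{1,m}-mT_{m+1,m+n})^2$ is right, and so is the final factoring of $m^2$.

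The paper itself gives no derivation of this lemma. It imports the identity from Chan, Golub and LeVeque and uses it only in the special case where the second block consists of $c$ identical values (so $S_{m+1,m+n}=0$), to obtain the grouped Welford-style update in Lemma~\ref{lem:welford}. Your argument therefore supplies a self-contained verification of what the paper only cites.

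Of your two routes, the raw-moment one is the shortest way to certify an exact algebraic identity. The parallel-axis route through the combined mean gives the more interpretable correction term $\frac{mn}{m+n}(\mu_A-\mu_B)^2$. It also mirrors why Chan et al.\ recommend the pairwise formula: it avoids computing the cancellation-prone quantity $\sum x_i^2 - T^2/N$ in floating point. That concern is irrelevant to the exact proof but matters for the algorithm that relies on the lemma.
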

We now apply Lemma~\ref{lem:chan_pairwise}, setting $A = \{Z_1, \dots, Z_{n-c}\}$ to be the set of the first $n-c$ samples and $B = \{Z_{n-c+1}, \dots, Z_{n}\}$ to be the new batch of $c$ samples whose values are all $w$,
\begin{align*}
    T_{1,n-c} &= (n-c)\mu_{n-c}, & S_{1,n-c} &= (n-c-1)\sigma_{n-c}^2, \\
    T_{n-c+1,n} & = cw, & S_{n-c+1,n} &=0.
\end{align*}
Then, we derive that
\begin{equation*}
    \mu_n = \frac{1}{n}T_{1,n} = \frac{(n-c)\mu_{n-c} +cw}{n} = \mu_{n-c} + \frac{c(w-\mu_{n-c})}{n},
\end{equation*}
\begin{align*}
    \sigma^2_{n}&=\frac{1}{n-1}S_{1,n}\\
    &= \frac{1}{n-1}((n-c-1)\sigma_{n-c}^2+\frac{n-c}{cn}(\frac{c(n-c)\mu_{n-c}}{n-c}-cw)^2 \\
    &= \sigma^2_{n-c} + \frac{1}{n-1}(\frac{n-c}{cn}(c\mu_{n-c}-cw)^2 -c\sigma_{n-c}^2)\\
    &= \sigma^2_{n-c} + \frac{c}{n-1}(\frac{n-c}{n}(\mu_{n-c}-w)^2 -\sigma_{n-c}^2).
\end{align*}
\end{proof}

\begin{proof}[\bf Proof of Lemma~\ref{lem:mean-var}]
First, we need the following lemma.
\begin{lemma}\label{lem:welford}
Let $\{Z_i\}_{i=1}^n$ be a sequence of $n$ random variables. Denote the sample mean and variance of the first $(n-c)$ observations as $\mu_{n-c}=\frac{1}{n-c}{\sum_{i=1}^{n-c}Z_i}$ and $\sigma_{n-c}=\frac{1}{n-c-1}\sum_{i=1}^{n-c}(Z_i-\mu_{n-c})^2$, respectively. Suppose that the next $c$ observations are identical and equal to $x$, i.e., $Z_{n-c+1}=\cdots=Z_{n}=x$. Then, the updated sample mean $\mu_{n}$ and variance $\sigma_{n}$ over all $n$ observations are
{
\small
\setlength{\arraycolsep}{2pt}
\begin{equation*}
\begin{array}{cc}
\mu_n = \mu_{n-c} + \dfrac{c(x-\mu_{n-c})}{n},
&
\sigma_n = \sigma_{n-c} + \dfrac{c\left(\dfrac{n-c}{n}(x-\mu_{n-c})^2-\sigma_{n-c}\right)}{n-1}.
\end{array}
\end{equation*}
}
\end{lemma}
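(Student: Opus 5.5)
The plan is to derive both update formulas from the pairwise merging identity of Lemma~\ref{lem:chan_pairwise}, treating the first $n-c$ observations as sample $A$ and the batch of $c$ identical values as sample $B$. The quantities needed for this split are
\[
T_{1,n-c}=(n-c)\mu_{n-c},\qquad S_{1,n-c}=(n-c-1)\sigma_{n-c},\qquad T_{n-c+1,n}=cx,\qquad S_{n-c+1,n}=0.
\]
The first equality is just the definition of the mean. The second follows because $\sigma_{n-c}$ is the unbiased variance with normaliser $n-c-1$. The last holds because every element of $B$ equals its own mean $x$.

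For the mean, combine $T_{1,n}=T_{1,n-c}+T_{n-c+1,n}=(n-c)\mu_{n-c}+cx$ and divide by $n$. Writing $(n-c)\mu_{n-c}=n\mu_{n-c}-c\mu_{n-c}$ then gives $\mu_n=\mu_{n-c}+c(x-\mu_{n-c})/n$ directly.

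For the variance, substitute the four quantities above into the combination rule of Lemma~\ref{lem:chan_pairwise} with $m=n-c$ and batch size $c$. The cross term is
\[
\frac{m}{c(m+c)}\Bigl(\frac{c}{m}T_{1,m}-T_{m+1,m+c}\Bigr)^2=\frac{n-c}{cn}\bigl(c\mu_{n-c}-cx\bigr)^2=\frac{c(n-c)}{n}(x-\mu_{n-c})^2 .
\]
This yields $S_{1,n}=(n-c-1)\sigma_{n-c}+\frac{c(n-c)}{n}(x-\mu_{n-c})^2$. Dividing by $n-1$ and rewriting $(n-c-1)\sigma_{n-c}=(n-1)\sigma_{n-c}-c\sigma_{n-c}$ isolates $\sigma_{n-c}$ plus $\frac{c}{n-1}\bigl(\frac{n-c}{n}(x-\mu_{n-c})^2-\sigma_{n-c}\bigr)$, which is the claimed formula.

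The main obstacle is bookkeeping rather than any real difficulty. One must identify the batch size in Lemma~\ref{lem:chan_pairwise} (called $n$ there) with $c$ here, and the total with $n$, so that the coefficient of the squared term simplifies correctly. One must also check the degenerate edge cases. The case $n-c\le 1$ makes $\sigma_{n-c}$ undefined, and the formula should then be read with $S_{1,n-c}=0$. The case $c=n$ is the initial batch, handled by the convention $\mu_0=0$. These are the boundary situations that arise in Lines~9--11 of Algorithm~\ref{alg:candidate}.
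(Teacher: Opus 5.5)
Your proposal is correct and matches the paper's proof essentially step for step: the paper also applies Lemma~\ref{lem:chan_pairwise} with $A$ the first $n-c$ observations and $B$ the batch of $c$ copies of $x$. It plugs in $T_{1,n-c}=(n-c)\mu_{n-c}$, $S_{1,n-c}=(n-c-1)\sigma_{n-c}$, $T_{n-c+1,n}=cx$, $S_{n-c+1,n}=0$, and simplifies the mean and the cross term exactly as you do. Your discussion of the degenerate cases $n-c\le 1$, where Lemma~\ref{lem:chan_pairwise} does not literally apply and conventions such as $\mu_0=0$ and $S_{1,n-c}=0$ are needed, is a small addition the paper omits.
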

Algorithm~\ref{alg:candidate} computes $\widetilde{\textsf{\textnormal{sw}}}(v_q,v_t)$ and $\sigma(v_t)$ in two parts.
Firstly, the flow-processing section (Lines 6-11) can be interpreted as 
adding $|S(u_i,u_j)|$ identical observations at each iteration, each with an increment 
$\frac{d(v_q)\cdot(d(v_q)-1)}{\alpha + |\I_{i,j}|}$. 
For each target item $v_t$, $r(v_t)$ records the number of accumulated observations. 
Applying Lemma~\ref{lem:welford} to this grouped update yields the update formulas 
for the sample mean $\widetilde{\textsf{sw}}(v_q,v_t)$ in Eq.~\eqref{eq:update-mu} 
and the sample variance $\sigma(v_t)$ in Eq.~\eqref{eq:update-sigma}, 
which update the statistics from $r(v_t)$ to $r(v_t)+|S(u_i,u_j)|$ observations.
Afterwards, Lines 12-14 perform a final correction from $r(v_t)$ to $n_f$ sampling rounds, by viewing the missing $n_f-r(v_t)$ rounds as additional observations with value zero and applying Lemma~\ref{lem:welford}, the sample mean and variance are updated as
$\widetilde{\textsf{sw}}(v_q,v_t) \gets \frac{r(v_t)}{n_f}\cdot \widetilde{\textsf{sw}}(v_q,v_t)$ and 
$\sigma(v_t)\gets \sigma(v_t)+\frac{(n_f-r(v_t))(\frac{r(v_t)}{n_f}\cdot\widetilde{\textsf{sw}}(v_q,v_t)^2-\sigma(v_t))}{n_f-1}$.
Therefore, for any $v_t \in \I$ $\widetilde{\textsf{\textnormal{sw}}}(v_q,v_t)$ and $\sigma(v_t)$ are equal to sample mean and variance over all $n_f$ observations.
This establishes that for any $v_t \in \I$, 
$\widetilde{\textsf{sw}}(v_q,v_t)$ and $\sigma(v_t)$ exactly match the sample mean and sample variance over all $n_f$ observations.
\end{proof}

\begin{proof}[\bf Proof of Theorem~\ref{lem:kASC-approx}]
If Eq.~\eqref{eq:select} holds, Algorithm~\ref{alg:KASC} invokes Algorithm~\ref{alg:subset_sample}, thereby achieving an $(\epsilon,\lambda)$-approximation with a probability of at least $1-\delta$.
Otherwise, Algorithm~\ref{alg:KASC} invokes Algorithm~\ref{alg:candidate} and Algorithm~\ref{alg:pair}, resulting in a mixed approximation guarantee.

First, Algorithm~\ref{alg:candidate} maintains the sample mean $\widetilde{\textsf{\textnormal{sw}}}(v_q, v_t)$, which is equivalent to that computed by \texttt{GNS} over $n_f$ sampled user pairs from $\U(v_q)$, by applying Lemma~\ref{lem:mean-var}. 
Based on this equivalence, we can derive a similar approximation guarantee as in Theorem~\ref{lem:naive-mc}. 
In particular, by setting $\varepsilon = n_f\cdot \epsilon \cdot \textsf{\textnormal{sw}}(v_q,v_t)$ and applying Lemma~\ref{lem:chernoff}, we obtain the following result.

\begin{align*}
& \mathbb{P}\left[\left|\widetilde{\textsf{\textnormal{sw}}}(v_q, v_t)-{\textsf{\textnormal{sw}}}(v_q, v_t)\right| \ge \epsilon\cdot {\textsf{\textnormal{sw}}}(v_q, v_t)\right] \\
& \le \exp{\left(-\frac{n_f\cdot \epsilon^2\cdot {\textsf{\textnormal{sw}}}(v_q, v_t)}{2\zeta(\frac{\epsilon}{3}+ 1)}\right)}.
\end{align*}
Note $n_f=\beta \cdot n \ge \cdot\frac{2\zeta(1+\epsilon/3)}{\epsilon^2(\lambda/\beta)}\log\frac{1}{\delta}$. Then we have
\begin{align*}
& \mathbb{P}\left[\left|\widetilde{\textsf{\textnormal{sw}}}(v_q, v_t)-{\textsf{\textnormal{sw}}}(v_q, v_t)\right| \ge \epsilon\cdot {\textsf{\textnormal{sw}}}(v_q, v_t)\right] \\
& \le \exp{\left(-\frac{{\textsf{\textnormal{sw}}}(v_q, v_t)}{\lambda/\beta}\log\frac{1}{\delta}\right)}.
\end{align*}
When ${\textsf{\textnormal{sw}}}(v_q, v_t)\ge \lambda/\beta$, we have
\begin{equation*}
\mathbb{P}\left[\left|\widetilde{\textsf{\textnormal{sw}}}(v_q, v_t)-{\textsf{\textnormal{sw}}}(v_q, v_t)\right| \ge \epsilon\cdot {\textsf{\textnormal{sw}}}(v_q, v_t)\right] \le \delta.
\end{equation*}
Applying Lemma~\ref{lem:chernoff} with $\varepsilon=n_f \cdot \epsilon\cdot \lambda/\beta$, we can derive 
\begin{align*}
& \mathbb{P}\left[\left|\widetilde{\textsf{\textnormal{sw}}}(v_q, v_t)-{\textsf{\textnormal{sw}}}(v_q, v_t)\right| \ge \epsilon\cdot \lambda/\beta \right] \\
& \le \exp{\left(-\frac{n_f\cdot (\epsilon\cdot\lambda/\beta)^2}{2\zeta(\frac{\epsilon\cdot\lambda/\beta}{3}+ {\textsf{\textnormal{sw}}}(v_q,v_t))}\right)}.
\end{align*}
When ${\textsf{\textnormal{sw}}}(v_q, v_t)< \lambda/\beta$, we have
\begin{equation*}
\mathbb{P}\left[\left|\widetilde{\textsf{\textnormal{sw}}}(v_q, v_t)-{\textsf{\textnormal{sw}}}(v_q, v_t)\right| \ge \epsilon\cdot \lambda/\beta\right] \le \delta.
\end{equation*}
Combining the two cases, for any target item $v_t\in \I$, $\widetilde{\textsf{\textnormal{sw}}}(v_q,v_t)$ computed by Algorithm~\ref{alg:candidate} is the $(\epsilon,\lambda/\beta)$-approximate Swing with a probability of at least $1-\delta$.

Afterwards, Algorithm~\ref{alg:pair} refines the borderline items $\mathcal{B}$ generated by Algorithm~\ref{alg:candidate}.
If $n_r \ge |\U(v_q)|/2$, Algorithm~\ref{alg:pair} adaptively switches to a brute-force method (Lines 5--7) to compute the exact Swing scores. Otherwise, it estimates the scores using $n_r$ user-pair samples drawn uniformly from $\U_{q,\mathcal{B}}(v_q)$ (Lines 9-12).
Since $|\U_{q,\mathcal{B}}(v_q)| \le |\U(v_q)|$, the random variable
$Z_r = \frac{|\U_{q,\mathcal{B}}(v_q)|\cdot(|\U_{q,\mathcal{B}}(v_q)|-1)}{\alpha+|\I_{i,j}|}$
is still bounded by
$\zeta = \frac{|\U(v_q)|\cdot(|\U(v_q)|-1)}{\alpha+2}.$
Therefore, the same argument used for Algorithm~\ref{alg:candidate} applies here, with the sample size replaced by $n_r=(1-\beta)n.$
By Lemma~\ref{lem:chernoff}, for any $v_t\in\mathcal{B}$, the estimator produced in Lines 9-12 is an $(\epsilon,\lambda/(1-\beta))$-approximation of ${\textsf{\textnormal{sw}}}(v_q,v_t)$ with probability at least $1-\delta$.
When $n_r < |\U(v_q)|/2$, Algorithm~\ref{alg:pair} performs $n_r$ random sampling steps (Lines 14-17) from $\U(v_q)$, with a more efficient membership check. The same analysis applies, implying that $\widetilde{\textsf{\textnormal{sw}}}(v_q,v_t)$ is an $(\epsilon,\lambda/(1-\beta))$-approximation of ${\textsf{\textnormal{sw}}}(v_q,v_t)$ with probability at least $1-\delta$.
Therefore, for any target item $v_t\in \mathcal{B}$, $\widetilde{\textsf{\textnormal{sw}}}(v_q,v_t)$ output by Algorithm~\ref{alg:pair} is the $(\epsilon,\lambda/(1-\beta))$-approximate Swing with a probability of at least $1-\delta$.

Putting the above together, Algorithm~\ref{alg:KASC} provides a mixed approximation guarantee. Specifically, 
for any target item $v_t\in \I \setminus \mathcal{B}$, $\widetilde{\textsf{\textnormal{sw}}}(v_q,v_t)$ is the $(\epsilon,\lambda/\beta)$-approximate Swing with a probability of at least $1-\delta$, and for any target item $v_t\in \mathcal{B}$, $\widetilde{\textsf{\textnormal{sw}}}(v_q,v_t)$ is the $(\epsilon,\lambda/(1-\beta))$-approximate Swing with a probability of at least $1-\delta$.
\end{proof}

\end{document}